\documentclass[a4paper,UKenglish]{mylipics-v2016}

\usepackage{microtype}

\bibliographystyle{plainurl}

\title{Communicating Finite-State Machines and Two-Variable Logic}

\author[1]{Benedikt Bollig}
\author[2]{Marie Fortin}
\author[2]{Paul Gastin}
\affil[1]{CNRS, ENS Paris-Saclay, LSV, Universit{\'e} Paris-Saclay}
\affil[2]{ENS Paris-Saclay, CNRS, LSV, Universit{\'e} Paris-Saclay}

\authorrunning{B. Bollig, M. Fortin, and P. Gastin}

\Copyright{Benedikt Bollig, Marie Fortin, and Paul Gastin}



\makeatletter
\renewcommand{\paragraph}{\@startsection{paragraph}{6}{\z@}{2ex}{-0.7em}{\normalsize\bf}}
\makeatother

\usepackage{amsmath,amssymb}
\usepackage{stmaryrd}
\usepackage{fancyhdr}
\usepackage{xspace}
\usepackage{mathtools}
\usepackage{cite}
\usepackage{tikz}
\usetikzlibrary{trees,arrows,shapes,snakes,fit,shadows,decorations.text,decorations.pathmorphing,decorations.pathreplacing,calc,automata,positioning,patterns}

\usepackage{ifpdf}
\ifpdf
\usepackage[framemethod=TikZ]{mdframed}
\usepackage[colorinlistoftodos,prependcaption]{todonotes}

\fi

\definecolor{mygreen}{RGB}{0.0,180,0.0}

\newcounter{todocounter}


\usepackage{amsmath,amssymb}
\usepackage{stmaryrd}
\usepackage{xspace}
\usepackage{mathtools}
\usepackage{tikz}
\usetikzlibrary{trees,arrows,shapes,snakes,fit,shadows,decorations.text,decorations.pathmorphing,decorations.pathreplacing,calc,automata,positioning,patterns}

\tikzstyle{dot} = [circle, fill, inner sep=0, minimum size = 2pt]
\tikzstyle{bdot} = [circle, fill, inner sep=0, minimum size = 3pt]

\definecolor{mygreen}{RGB}{0.0,180,0.0}

\newcommand\A{\ensuremath{\mathcal{A}}\xspace}

\newcommand\prel{\rightarrow}
\newcommand\mrel{\lhd}
\newcommand\init{\iota}

\newcommand{\nproc}{u}

\newcommand\EMSOt{\ensuremath{\textup{EMSO}^2}\xspace}
\newcommand\MSO{\ensuremath{\textup{MSO}}\xspace}
\newcommand\EMSO{\ensuremath{\textup{EMSO}}\xspace}
\newcommand\FOt{\ensuremath{\textup{FO}^2}\xspace}
\newcommand\FO{\ensuremath{\textup{FO}}\xspace}

\newcommand\CFM{\text{CFM}\xspace}
\newcommand\CFMs{\text{CFMs}\xspace}
\newcommand\cfm{\text{communicating finite-state machine}\xspace}
\newcommand\cfms{\text{communicating finite-state machines}\xspace}

\newcommand\Cfms{\text{Communicating finite-state machines}\xspace}
\newcommand\Acc{\mathit{Acc}}

\newcommand\Parallelp[2]{\mathsf{Parallel}_{#1}(#2)}
\newcommand\rPastp[2]{{\downarrow_{#1}} (#2)}
\newcommand\rFuturep[2]{{\uparrow_{#1}} (#2)}
\newcommand\type{\mathsf{type}}
\newcommand\Atype{\ensuremath{\A_{\mathsf{types}}}\xspace}

\newcommand\ttype{\tau}

\newcommand\p{\pi}
\newcommand\action{\alpha}
\newcommand\source{\mathit{source}}
\newcommand\target{\mathit{target}}
\newcommand\tmsg{\mathit{msg}}
\newcommand\tlabel{\mathit{label}}
\newcommand\receiver{\mathit{receiver}}
\newcommand\sender{\mathit{sender}}
\newcommand\MSCs[2]{\mathbb{MSC}(#1,#2)}
\newcommand\trans{t}
\newcommand\Paths{\Pi}
\newcommand\last[2]{\mathsf{pred}_{#1}(#2)}
\newcommand\first[2]{\mathsf{succ}_{#1}(#2)}

\newcommand\signature{R}
\newcommand\Procs{P}
\newcommand\Ch{\mathit{Ch}}

\newcommand\Msg{\mathit{Msg}}
\newcommand\msg{\textup{m}}
\newcommand\labloc{\lambda}

\newcommand\Lzeros{L_0}
\newcommand\Lones{L_1}
\newcommand\Azeros{\A_0}
\newcommand\Aones{\A_1}
\newcommand\Lo{L_{\mathsf{parallel}}}
\newcommand\Lint{L_{\mathsf{intervals}}}
\newcommand\Lleft{L_{\mathsf{left}}}
\newcommand\Lright{L_{\mathsf{right}}}
\newcommand\Aint{\A_{\mathsf{intervals}}}
\newcommand\Ao{\A_{\mathsf{parallel}}}
\newcommand\Aleft{\A_{\mathsf{left}}}
\newcommand\Aright{\A_{\mathsf{right}}}

\newcommand\addlab{\gamma}
\newcommand\Rel{\Omega}
\newcommand\strictless{\ll}
\newcommand\apprel[3]{\mathcal{E}_{#1}({#2},#3)}
\newcommand\apprelp[2]{\mathcal{E}({#1},#2)}
\newcommand\Types{\mathbb{T}_{\Procs,\extSigma}}

\newcommand\extSigma{\Sigma'}
\newcommand\nvar{m}
\newcommand\nform{\ell}

\newcommand\rpast[1]{{\downarrow}#1}
\newcommand\past[1]{{\Downarrow}#1}
\newcommand\rpasttype[1]{\lambda(\rpast{#1})}
\newcommand\pasttype[1]{\lambda(\past{#1})}

\newcommand{\conc}{\parallel}

\newcommand\chemin{\nu}

\newcommand\xqed[1]{%
  \leavevmode\unskip\penalty9999 \hbox{}\nobreak\hfill
  \quad\hbox{#1}}
\newcommand\eofex{\xqed{$\lhd$}}


\begin{document}

\maketitle

\pagestyle{fancy}
\fancyhead{}
\renewcommand{\headrulewidth}{0pt}
\fancyfoot[C]{\vspace{2ex}\thepage}



\begin{abstract}
\Cfms are a fundamental, well-studied model of finite-state processes that communicate via unbounded first-in first-out channels. We show that they are expressively equivalent to existential MSO logic with two first-order variables and the order relation.
\end{abstract}

\section{Introduction}

The study of logic-automata connections has ever played a key role in computer 
science, relating concepts that are a priori very different. Its motivation is at least twofold.
First, automata may serve as a tool to decide logical theories. Beginning with the work of B{\"u}chi, Elgot, and Trakhtenbrot, who established expressive equivalence of monadic second-order (MSO) logic and finite automata \cite{Buechi:60,Elgot1961,Trakhtenbrot62}, the ``automata-theoretic'' approach to logic has been successfully applied, for example, to MSO logic on trees \cite{ThaWri68}, temporal logics \cite{VardiW86}, and first-order logic with two variables over words with an equivalence relation (aka \emph{data words}) \cite{Bojanczy06}.
Second, automata serve as models of various kind of state-based systems. Against this background, B{\"u}chi-like theorems lay the foundation of \emph{synthesis}, i.e., the process of transforming high-level specifications (represented as logic formulas) into faithful system models. In this paper, we provide a B{\"u}chi theorem for \emph{\cfms}, which are a classical model of concurrent message-passing systems.

\smallskip

One of the simplest system models are finite automata. They can be considered 
as single finite-state processes and, therefore, serve as a model of 
\emph{sequential} systems. Their executions are words, which, seen as a logical structure, consist of a set of positions (also referred to as \emph{events}) that carry letters from a finite alphabet and are \emph{linearly} ordered by some binary relation $\le$. The simple MSO (even first-order) formula $\forall x. \bigl(a(x) \Longrightarrow \exists y.(x \le y \wedge b(y))\bigr)$ says that every ``request'' $a$ is eventually followed by an ``acknowledgment'' $b$. In fact, B{\"u}chi's theorem allows one to turn any logical MSO specification into a finite automaton. The latter can then be considered correct by construction.
Though the situation quickly becomes more intricate when we turn to other automata models, B{\"uchi} theorems have been established for expressive generalizations of finite automata that also constitute natural system models. In the following, we will discuss some of them.

\emph{Data automata} accept (in the context of system models, we may also say \emph{generate}) words that, in addition to the linear order $\le$ and its direct-successor relation, are equipped with an equivalence relation $\sim$ \cite{Bojanczy06}. Positions (events) that belong to the same equivalence class may be considered as being executed by one and the same process, while $\le$ reflects a sort of global control. It is, therefore, convenient to also include a predicate that connects \emph{successive} events in an equivalence class. Boja{\'n}czyk et al.\ showed that data automata are expressively equivalent to existential MSO logic with two first-order variables \cite{Bojanczy06}. A typical formula is $\neg\exists x. \exists y. (x \neq y \wedge x \sim y)$, which says that every equivalence class is a singleton. It should be noted that data automata scan a word twice and, therefore, can hardly be seen as a system model. However, they are expressively equivalent to \emph{class-memory automata}, which distinguish between a global control (modeling, e.g., a shared variable) and a local control for every process \cite{Bjorklund10}.

Unlike finite automata and data automata, \emph{asynchronous automata} are a model of \emph{concurrent} shared-memory systems, with a finite number of processes. Their executions are Mazurkiewicz traces, where the relation $\le$ is no longer a total, but a partial order. Thus, there may be \emph{parallel} events $x$ and $y$, for which neither $x \le y$ nor $y \le x$ holds. A typical logical specification is the mutual exclusion property, which can be expressed in MSO logic as $\neg \exists x. \exists y. (\mathit{CS}(x) \wedge \mathit{CS}(y) \wedge x \conc y)$ where the parallel operator $x \conc y$ is defined as $\neg (x \le y) \wedge \neg (y \le x)$. Note that this is even a first-order formula that uses only two first-order variables, $x$ and $y$. It says that there are no two events $x$ and $y$ that access a critical section simultaneously. Asynchronous automata are closed under complementation \cite{Zielonka87} so that the inductive approach to translating formulas into automata can be applied to obtain a B{\"u}chi theorem \cite{tho90traces}. Note that complementability is also the key ingredient for MSO characterizations of \emph{nested-word automata} \cite{Alur2009} and \emph{branching automata} running over series-parallel posets (aka N-free posets) \cite{Kuske00,Bedon15}.

The situation is quite different in the realm of \emph{\cfms (\CFMs)}, aka
\emph{communicating automata} or \emph{message-passing automata}, where finitely
many processes communicate by exchanging messages through \emph{unbounded} FIFO
channels \cite{Brand1983}.  A \CFM accepts/generates message-sequence charts
(MSCs) which are also equipped with a partial order $\le$.  Additional binary
predicates connect (i) the emission of a message with its reception, and (ii)
successive events executed by one and the same process.  Unfortunately, \CFMs
are not closed under complementation \cite{BolligJournal} so that an inductive
translation of MSO logic into automata will fail.  In fact, they are strictly
less expressive than MSO logic.  Two approaches have been adopted to overcome
these problems.  First, when channels are (existentially or universally)
bounded, closure under complementation is recovered so that \CFMs are
expressively equivalent to MSO logic
\cite{HenriksenJournal,GKM06,Kuske01,GKM07}.  Note that, however, the
corresponding proofs are much more intricate than in the case of finite
automata.  Second, \CFMs with unbounded channels have been shown
to be expressively equivalent to \emph{existential} MSO logic when dropping the
order $\le$ \cite{BolligJournal}.  The proof relies on Hanf's normal form of
first-order formulas on structures of \emph{bounded degree} (which is why one
has to discard $\le$) \cite{Hanf1965}. However, it is clear that many specifications
(such as mutual exclusion) are easier to express in terms of $\le$.
But, to the best of our knowledge, a convenient specification language that is exactly as expressive as \CFMs has still
been missing.

\smallskip

It is the aim of this paper to close this gap, i.e., to provide a logic that
\begin{itemize}
\item \emph{matches} exactly the expressive power of \emph{unrestricted} \CFMs (in particular, every specification should be \emph{realizable} as an automaton), and
\item \emph{includes} the order $\le$ so that one can easily express natural properties like mutual exclusion.
\end{itemize}

We show that existential MSO logic with two first-order variables is an appropriate logic. To translate a formula into an automaton, we first follow the approach of \cite{Bojanczy06} for data automata and consider its Scott normal form (cf.\ \cite{GradelO99}). However, while data automata generate total orders, the main difficulty in our proof comes from the fact that $\le$ is a partial order. Actually, our main technical contribution is a \CFM that, running on an MSC, marks precisely those events that are in parallel to some event of a certain type.

\paragraph{Outline.} The paper is structured as follows. In Section~\ref{sec:prel}, we recall the classical notions of \CFMs and MSO logic. Section~\ref{sec:result} states our main result, describes our proof strategy, and settles several preliminary lemmas. The main technical part is contained in Section~\ref{sec:parallel}. We conclude in Section~\ref{sec:conclusion}.



\section{Preliminaries}\label{sec:prel}

Let $\Sigma$ be a finite alphabet. The set of finite words over $\Sigma$ is denoted by $\Sigma^\ast$, which includes the empty word $\epsilon$. For $w \in \Sigma^\ast$, let $|w|$ denote its length. In particular, $|\epsilon| = 0$. The inverse of a binary relation $R$ is defined as ${R}^{-1} = \{(f,e) \mid (e,f) \in {R}\}$.
We denote the size of a finite set $A$ by $|A|$.

\subsection{Communicating Finite-State Machines}

\Cfms are a natural model of communicating systems where a finite number of processes communicate through a priori unbounded FIFO channels \cite{Brand1983}. Every process is represented as a \emph{finite transition system} $(S,\init,\Delta)$ over some finite alphabet $\Gamma$, i.e., $S$ is a finite set of states with initial state $\init \in S$, and $\Delta \subseteq S \times \Gamma \times S$ is the transition relation. Elements from $\Gamma$ will describe the action that is performed when taking a transition (e.g., ``send a message to some process'' or ``perform a local computation'').

A \cfm is a \emph{collection} of finite transition systems, one for each process. For the rest of this paper, we fix a finite set $\Procs = \{p,q,r,\ldots\}$ of \emph{processes} and a finite alphabet $\Sigma=\{a,b,c,\ldots\}$ of \emph{labels}. We assume that there is a channel between any two distinct processes. Thus, the set of \emph{channels} is $\Ch = \{(p,q) \in \Procs \times \Procs \mid p \neq q\}$.

\begin{definition}
A \emph{\cfm (\CFM)} over $\Procs$ and $\Sigma$ is a tuple $\A=((\A_p)_{p \in \Procs},\Msg,\Acc)$ where
\begin{itemize}
\item $\Msg$ is a finite set of \emph{messages},

\item $\A_p = (S_p,\init_p,\Delta_p)$ is a finite transition system over $\Sigma \cup (\Sigma \times \{!\,,?\} \times \Msg \times (\Procs \setminus \{p\}))$, and

\item $\Acc \subseteq \prod_{p \in \Procs} S_p$ is the set of \emph{global accepting states}.%
\footnote{We may also include several \emph{global} initial states without changing the expressive power, which is convenient in several of the forthcoming constructions.} \eofex
\end{itemize}
\end{definition}

Let $\trans=(s,\action,s') \in \Delta_p$ be a transition of process $p$. We call $s$ the \emph{source state} of $\trans$, denoted by $\source(\trans)$, and $s'$ its \emph{target state}, denoted $\target(\trans)$. Moreover, $\action$ is the \emph{action} executed by~$t$. If $\action \in \Sigma$, then $t$ is said to be \emph{internal}, and we let $\tlabel(t) = \action$.
The label from $\Sigma$ may provide some more information about an event (such as ``enter critical section'').
When $\action$ is of the form $(a,!\,,\msg,q)$, then $t$ is a send transition, which writes message $\msg$ into the channel $(p,q)$. Accordingly, we let $\tmsg(\trans) = \msg$, $\receiver(\trans) = q$, and $\tlabel(t) = a$. Finally, performing $\action = (a,?,\msg,q)$ removes message $\msg$ from channel $(q,p)$. In that case, we set $\tmsg(\trans) = \msg$, $\sender(\trans) = q$, and $\tlabel(t) = a$.

If there is only one process, i.e., $\Procs$ is a singleton, then all
transitions are internal so that a \CFM is simply a finite automaton accepting a
regular set of words over the alphabet $\Sigma$.  In the presence of several
processes, a single behavior is a \emph{collection} of words over $\Sigma$, one
for every process.  However, these words are not completely independent (unless
all transitions are internal and there is no communication), since the sending
of a message can be linked to its reception.  This is naturally reflected by a
binary relation $\mrel$ that connects word positions on distinct processes.  The
resulting structure is called a message sequence chart.

\begin{definition}
A \emph{message sequence chart (MSC)} over $\Procs$ and $\Sigma$ is a tuple $M = ((w_p)_{p \in \Procs},\mrel)$ where $w_p \in \Sigma^\ast$ for every $p \in \Procs$. We require that at least one of these words be non-empty. By $E_p = \{p\} \times \{1,\ldots,|w_p|\}$, we denote the set of \emph{events} that are executed by process $p$. Accordingly, the (disjoint) union $E = \bigcup_{p \in \Procs} E_p$ is the set of all events.
Implicitly, we obtain the process-edge relation ${\prel} \subseteq \bigcup_{p \in \Procs} (E_p \times E_p)$, which connects successive events that are executed by one and the same process: $(p,i) \to (p,i+1)$ for all $p \in \Procs$ and $i \in \{1,\ldots,|w_p|-1\}$.
Now, ${\mrel} \subseteq \bigcup_{(p,q) \in \Ch} (E_p \times E_q)$ is a set of \emph{message edges}, satisfying the following:
\begin{itemize}
  \item $({\prel} \cup {\mrel})$ is acyclic (intuitively, messages cannot travel
  backwards in time), and the associated partial order is denoted
  ${\le} = ({\prel} \cup {\mrel})^*$ with strict part ${<}=({\prel} \cup {\mrel})^+$,

  \item each event is part of at most one message edge, and
\item for all $(p,q) \in \Ch$ and $(e,f), (e',f') \in {\mrel} \cap (E_p \times E_q)$, we have $e \prel^* e'$ iff $f \prel^* f'$ (which guarantees a FIFO behavior).
\eofex
\end{itemize}
\end{definition}

An event that does not belong to a message edge is called \emph{internal}. We say that two events $e,f \in E$ are \emph{parallel}, written $e \conc f$, if neither $e \le f$ nor $f \le e$. The set of all MSCs is denoted $\MSCs{\Procs}{\Sigma}$.

\begin{example}
An example MSC over $\Procs = \{p,q,r\}$ and $\Sigma = \{a,b,c\}$ is depicted in Figure~\ref{fig:partition}. That is, $w_p = aacaaaaa$, $w_r = aaaaaaaaaa$, and $w_q = abbaacaaa$ (note that $q$ is the bottom process). Consider the events $f=(p,4)$, $e = (p,5)$, and $g=(q,2)$. We have $f \prel e$ and $g \mrel e$. Moreover, $(p,3) \conc (q,6)$ (i.e., the two $c$-labeled events are parallel), while $(p,3) \le (q,8)$.
\end{example}

\begin{remark}
An MSC $M=((w_p)_{p \in \Procs},\mrel)$ is uniquely determined by $E$, $\prel$, $\mrel$, and the mapping $\labloc: E \to (\Procs \times \Sigma)$ defined by $\lambda((p,i)) = (p,a)$ where $a$ is the $i$-th letter of $w_p$.
Therefore, we will henceforth refer to $M$ as the tuple $M = (E,\prel,\mrel,\labloc)$.
\end{remark}

\begin{figure}
  \centering
  \begin{tikzpicture}[semithick,>=stealth,xscale=1.5,yscale=-1.2]
  \filldraw[color=blue!20,draw=blue] plot[smooth] coordinates
              { (-0.5,-0.3) (1.5,-0.3) (1.5,0.5) (2.1,0.5) (2.1,1.5) (1.2,1.5) (1.2,2.3) (0.8,2.3) (0.8,1.5) (0.2,1.5) (0.2,2.3) (-0.5,2.3) };

  \fill[red!30,draw=red] plot[smooth] coordinates
              { (1.4,2.35) (1.4,1.7) (2.23,1.6) (2.258,0.5) (2.7,0.5) (2.7,1.5) (3.65,1.5) (3.65,2.35) };

  \fill[mygreen!30,,draw=mygreen] plot[smooth] coordinates
              { (5.0,2.3) (3.8,2.3) (3.8,1.4) (2.85,1.4) (2.85,0.5) (3.3,0.5) (3.4,-0.3) (5.0,-0.3) };

  \fill[mygreen!30,rounded corners = 3pt] (2.7,-0.3) rectangle (2.3,0.3);
  \fill[red!30,rounded corners = 3pt] (2.12,-0.3) rectangle (1.7,0.4);
  \fill[red!30,rounded corners = 3pt] (3.2,-0.3) rectangle (2.8,0.3);
  \fill[red!30,rounded corners = 3pt] (0.7,1.6) rectangle (0.3,2.35);
  
    \draw (-0.5,0) -- (5,0);
    \draw (-0.5,1) -- (5,1);
    \draw (-0.5,2) -- (5,2);
    \node at (-0.7,0) {$p$};
    \node at (-0.7,1) {$r$};
    \node at (-0.7,2) {$q$};
    \node at (1,0.01) {\textbullet};
    \node at (3,2.01) {\textbullet};
    \node at (4.5,2.01) {\textbullet};
    
    \node at (0,-0.13) {\scalebox{0.8}{$a$}};
    \node at (0.5,-0.13) {\scalebox{0.8}{$a$}};
    \node at (1,-0.13) {\scalebox{0.8}{$c$}};
    \node at (2,-0.13) {\scalebox{0.8}{$a$}};
    \node at (2.5,-0.13) {\scalebox{0.8}{$a$}};
    \node at (3,-0.13) {\scalebox{0.8}{$a$}};
    \node at (3.5,-0.13) {\scalebox{0.8}{$a$}};
    \node at (4.5,-0.13) {\scalebox{0.8}{$a$}};

    \node at (-0.3,0.87) {\scalebox{0.8}{$a$}};
    \node at (0,1.13) {\scalebox{0.8}{$a$}};
    \node at (0.5,1.13) {\scalebox{0.8}{$a$}};
    \node at (1,0.87) {\scalebox{0.8}{$a$}};
    \node at (1.8,0.87) {\scalebox{0.8}{$a$}};
    \node at (2,1.13) {\scalebox{0.8}{$a$}};
    \node at (2.5,0.87) {\scalebox{0.8}{$a$}};
    \node at (3,1.13) {\scalebox{0.8}{$a$}};
    \node at (4,0.87) {\scalebox{0.8}{$a$}};
    \node at (4.5,1.13) {\scalebox{0.8}{$a$}};

    \node at (-0.3,2.15) {\scalebox{0.8}{$a$}};
    \node at (0.5,2.14) {\scalebox{0.8}{$b$}};
    \node at (1,2.14) {\scalebox{0.8}{$b$}};
    \node at (1.7,2.15) {\scalebox{0.8}{$a$}};
    \node at (2.5,2.15) {\scalebox{0.8}{$a$}};
    \node at (3,2.16) {\scalebox{0.8}{$c$}};
    \node at (3.5,2.15) {\scalebox{0.8}{$a$}};
    \node at (4,2.15) {\scalebox{0.8}{$a$}};
    \node at (4.5,2.15) {\scalebox{0.8}{$a$}};

    \draw[->] (0,0) -- (0,1);
    \draw[->] (0.5,0) -- (0.5,1);
    \draw[->] (1,2) -- (1,1);
    \draw[->] (-0.3,2) -- (-0.3,1);
    \draw[->] (0.5,2) -- (2.5,0);
    \draw[->] (2,1) -- (2,0);
    \draw[->] (1.7,1) -- (1.7,2);
    \draw[->] (2.5,1) -- (2.5,2);
    \draw[->] (3,0) -- (3,1);
    \draw[->] (3.5,2) -- (3.5,0);
    \draw[->] (4,1) -- (4,2);
    \draw[->] (4.5,1) -- (4.5,0);

    \node at (2.55,0.18) {\scalebox{0.9}{$e$}};
    \node[black!50!green] at (2.5,-0.49) {\scalebox{0.9}{$\apprelp{e}{=}$}};
    \node at (1.85,0.2) {\scalebox{0.9}{$f$}};
    \node at (0.45,1.8) {\scalebox{0.9}{$g$}};
    \node[blue] at (0.5,-0.55) {\scalebox{0.8}{$\apprelp{e}{\strictless^{-1}}$}};
    \node[black!50!green] at (4.4,2.55) {\scalebox{0.8}{$\apprelp{e}{\strictless}$}};
    \node[red] at (1.6,-0.5) {\scalebox{0.8}{$\apprelp{e}{\prel^{-1}}$}};
    \node[red] at (3.3,-0.5) {\scalebox{0.8}{$\apprelp{e}{\prel}$}};
    \node[red] at (0.5,2.55) {\scalebox{0.8}{$\apprelp{e}{\mrel^{-1}}$}};
    \node[red] at (2.5,2.55) {\scalebox{0.8}{$\apprelp{e}{\conc}$}};
  \end{tikzpicture}
  \caption{An MSC; the partition determined by an event $e$\label{fig:partition}}
\end{figure}
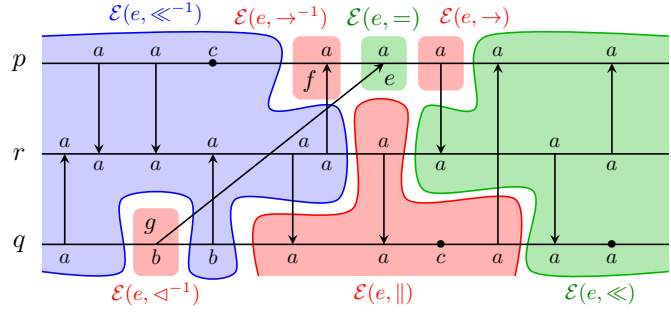

Let $M = (E,\prel,\mrel,\labloc)$ be an MSC over $\Procs$ and $\Sigma$.  A
\emph{run} of the \CFM $\A$ on $M$ is given by a mapping $\rho$ that associates with
every event $e \in E_p$ ($p\in\Procs$) the transition $\rho(e) \in\Delta_p$ that
is executed at $e$.  We require that
\begin{enumerate}
\item for every $e \in E$ with $\labloc(e) = (p,a)$, we have $\tlabel(\rho(e)) = a$,
\item for every process $p \in \Procs$ such that $E_p \neq \emptyset$, we have $\source(\rho((p,1))) = \init_p$,
\item for every process edge $(e,f) \in {\prel}$, we have $\target(\rho(e)) = \source(\rho(f))$,
\item for every internal event $e \in E$, $\rho(e)$ is an internal transition, and

\item for every message edge $(e,f) \in {\mrel}$ with $e \in E_p$ and $f \in
E_q$, $\rho(e)\in\Delta_p$ is a send transition and $\rho(f)\in\Delta_q$ is a
receive transition such that $\tmsg(\rho(e)) = \tmsg(\rho(f))$,
$\receiver(\rho(e)) = q$, and $\sender(\rho(f)) = p$.
\end{enumerate}
Note that, when $|\Procs| = 1$, Condition 5.\ becomes meaningless and Conditions 1.--4.\ emulate the behavior of a finite automaton.

It remains to define when $\rho$ is accepting. To this aim, we collect the final states of each process $p$. If $E_p \neq \emptyset$, then let $s_p$ be the target state of $\rho((p,|w_p|))$, i.e., of the last transition taken by $p$. Otherwise, let $s_p = \init_p$. Now, we say that $\rho$ is \emph{accepting} if $(s_p)_{p \in \Procs} \in \Acc$.

Finally, the \emph{language} of $\A$ is defined as $L(\A) = \{M \in \MSCs{\Procs}{\Sigma} \mid$ there is an accepting run of $\A$ on $M\}$.

\subsection{MSO and Two-Variable Logic}

While \CFMs serve as an operational model of concurrent systems, MSO logic can be considered as a high-level specification language. It uses first-order variables $x,y,\ldots$ to quantify over events, and second-order variables $X,Y,\ldots$ to represent sets of events.
The logic $\MSO$ is defined by the following grammar (recall that we have fixed $\Procs$ and $\Sigma$):
\[
  \varphi ::= p(x) \mid a(x) \mid x \in X \mid x = y
  \mid x \prel y \mid x \mrel y \mid x \le y
  \mid \varphi \lor \varphi \mid \lnot \varphi \mid \exists x.\varphi \mid \exists X.\varphi
\]
where $x$ and $y$ are first-order variables, $X$ is a second-order variable, $a
\in \Sigma$, and $p \in \Procs$.  For convenience, we allow usual
abbreviations such as conjunction $\varphi \wedge \psi$, universal
quantification $\forall x.\varphi$, implication $\varphi \Longrightarrow \psi$, etc.
The atomic formulas $p(x)$ and $a(x)$ are interpreted as ``$x$ is located on
process $p$'' and, respectively, ``the label of event $x$ is $a$''.  The binary
predicates are self-explanatory, and the boolean connectives and quantification
are interpreted as usual. The size $|\varphi|$ of a formula $\varphi \in \MSO$ is the length of $\varphi$ seen as a string.

A variable that occurs free in a formula requires an interpretation in terms of an event/a set of events from the given MSC. We will write, for example, $M,x \mapsto e,y \mapsto f \models \varphi$ if $M$ satisfies $\varphi$ provided $x$ is interpreted as $e$ and $y$ as $f$. If $\varphi$ is a \emph{sentence} (i.e., does not contain any free variable), then we write $M \models \varphi$ to denote that $M$ satisfies $\varphi$. With a sentence $\varphi$, we associate the MSC language $L(\varphi) = \{M \in \MSCs{\Procs}{\Sigma} \mid M \models \varphi\}$.

The set $\FO$ of \emph{first-order formulas} is the fragment of $\MSO$ that does not make use of second-order quantification $\exists X$.
The two-variable fragment of $\FO$, denoted by $\FOt$, allows only for two
first-order variables, $x$ and $y$ (which, however, can be quantified and reused arbitrarily
often).  Moreover, formulas from $\EMSO$, the \emph{existential fragment} of
$\MSO$, are of the form $\exists X_1 \ldots \exists X_n.\varphi$ where $\varphi
\in \FO$.  Accordingly, $\EMSOt$ is the set of $\EMSO$ formulas whose
first-order kernel is in $\FOt$.

The expressive power of all these fragments heavily depends on the set of binary predicates among $\{\prel,\mrel,\le\}$ that are actually allowed.
For a logic $\mathcal{C} \in \{\MSO,\EMSO,\EMSOt,\FO,\FOt\}$, and a set $\signature \subseteq \{\prel,\mrel,\le\}$, let $\mathcal{C}[\signature]$ be the logic $\mathcal{C}$ restricted to the binary predicates from $\signature$ (however, we always allow for equality, i.e., formulas of the form $x = y$). In particular, $\MSO = \MSO[\prel,\mrel,\le]$. As the transitive closure of a binary relation is definable in terms of second-order quantification, $\MSO[\prel,\mrel,\le]$ and $\MSO[\prel,\mrel]$ have the same expressive power (over MSCs). On the other hand, $\MSO[\le]$ is strictly less expressive \cite{BolligJournal}.

\begin{example}
Suppose $\Procs = \{p,q,r\}$ and $\Sigma = \{a,b,c\}$. The (mutual exclusion) formula $\neg \exists x. \exists y. (c(x) \wedge c(y) \wedge x \conc y)$, where $x \conc y$ is defined as $\neg (x \le y) \wedge \neg (y \le x)$, is in $\FOt[\le]$. It is not satisfied by the MSC from Figure~\ref{fig:partition}, as the two $c$-labeled internal events are parallel.
\end{example}

Let us turn to the relative expressive power of \CFMs and logic. We say that \CFMs and a logic $\mathcal{C}$ are \emph{expressively equivalent} if,
\begin{itemize}
\item for every \CFM $\A$, there exists a sentence $\varphi \in \mathcal{C}$ such that $L(\A) = L(\varphi)$, and
\item for every sentence $\varphi \in \mathcal{C}$, there exists a \CFM $\A$ such that $L(\A) = L(\varphi)$.
\end{itemize}

Now, the B{\"u}chi-Elgot-Trakhtenbrot theorem can be stated as follows:

\begin{theorem}[\!\!\cite{Buechi:60,Elgot1961,Trakhtenbrot62}]
If $|\Procs| = 1$, then \CFMs (i.e., finite automata) and $\MSO$ are expressively equivalent.
\end{theorem}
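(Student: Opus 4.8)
The plan is to reduce the statement to the classical Büchi–Elgot–Trakhtenbrot theorem over words. When $\Procs = \{p\}$ there are no channels, so in every MSC $M = (E,\prel,\mrel,\labloc)$ we have $\mrel = \emptyset$, and $M$ is nothing but a non-empty word $w_p \in \Sigma^+$: the events $E = E_p$ are linearly ordered by $\prel$ (the direct successor on positions), and $\le$ is its reflexive–transitive closure. Dually, a \CFM $\A = ((\A_q)_{q \in \Procs},\Msg,\Acc)$ collapses to the single finite transition system $\A_p = (S_p,\init_p,\Delta_p)$ together with the accepting set $F = \{s \in S_p \mid (s) \in \Acc\}$, all of whose transitions are internal, and $L(\A)$ is exactly the word language recognised by the finite automaton $(S_p,\init_p,\Delta_p,F)$. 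Hence it suffices to prove that a language $L \subseteq \Sigma^+$ is regular iff it is definable by an $\MSO[\prel,\mrel,\le]$ sentence; as $\mrel$ is uniformly empty, we may equivalently work with $\MSO[\prel,\le]$ over words.

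For the direction from automata to logic, I would encode an accepting run of $(S_p,\init_p,\Delta_p,F)$ by existentially quantifying one second-order variable $X_s$ per state $s \in S_p$, meant to collect the positions at which the run reaches $s$. The sentence then asserts that the $X_s$ partition $E$; that the state recorded at the $\prel$-minimal event is compatible with $\init_p$, the letter read there, and some transition of $\Delta_p$; that every $\prel$-edge respects $\Delta_p$ and the letters involved; and that the state recorded at the (unique, $\le$-definable) $\prel$-maximal event lies in $F$. This produces an $\EMSO[\prel,\le]$, hence $\MSO$, sentence defining $L(\A)$.

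For the converse I would run the standard induction on the structure of an $\MSO$ formula, carried out for formulas with free variables. A formula with free first-order variables $x_1,\dots,x_k$ and free set variables $X_1,\dots,X_m$ is interpreted over words on the extended alphabet $\Sigma \times \{0,1\}^{k} \times \{0,1\}^m$, the extra tracks marking the assigned position of each $x_i$ and the membership in each $X_j$; the requirement that each $x_i$-track carries exactly one $1$ is itself regular. The atomic formulas $a(x)$, $p(x)$, $x = y$, $x \in X$, $x \prel y$, $x \le y$ translate into small explicit automata; disjunction and existential (first- or second-order) quantification become union and projection (erasing a track) of regular languages; and negation becomes complementation. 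The sole step requiring real work is negation: one determinises via the subset construction and complements the accepting set. This closure under complement is exactly the property available for finite automata — the $|\Procs| = 1$ case — and is precisely what fails for \CFMs in general, as recalled in the introduction; so this potential obstacle is in fact harmless here, which is why the theorem holds at all. Applied to a sentence (no free variables, plain alphabet $\Sigma$) and intersected with the regular set $\Sigma^+$, the construction yields the desired finite automaton, i.e.\ \CFM, completing the equivalence.
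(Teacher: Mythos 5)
Your argument is correct: it is precisely the classical B{\"u}chi--Elgot--Trakhtenbrot proof (run-encoding via \EMSO for one direction, structural induction with determinization/complementation for the other), which is exactly what the paper invokes by citation without reproving, after the same observation it makes in the preliminaries that a single-process \CFM degenerates to a finite automaton and a single-process MSC to a non-empty word with ${\mrel}=\emptyset$. You also correctly handle the only subtlety specific to this setting, namely that MSCs are non-empty so the relevant word domain is $\Sigma^+$.
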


Unfortunately, when several processes are involved, \MSO is too expressive to be captured by \CFMs, unless one restricts the logic:

\begin{theorem}[\!\!\cite{BolligJournal}]
\CFMs and $\EMSO[\prel,\mrel]$ are expressively equivalent.
\end{theorem}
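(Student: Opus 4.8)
The plan is to establish the two directions separately. For the direction from \CFMs to $\EMSO[\prel,\mrel]$, I would directly encode an accepting run. For every process $p$ and every transition $t \in \Delta_p$, introduce a monadic second-order variable $X_t$ intended to hold the set of events at which $t$ is fired; the first-order kernel then asserts that the $X_t$ partition the events consistently with $\labloc$ and with the internal/send/receive status of each event (which is visible from whether $e$ carries an $\mrel$-edge), that the $\target$- and $\source$-states agree along every $\prel$-edge, that messages and sender/receiver agree along every $\mrel$-edge, that the $\prel$-minimal event of each non-empty process fires a transition whose source is $\init_p$, and that the tuple formed by the $\target$-states of the $\prel$-maximal events (taking $\init_p$ on empty processes) belongs to $\Acc$. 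Each of these is first-order over $\{\prel,\mrel\}$: the $\prel$-minimal (resp.\ $\prel$-maximal) event of $p$ is defined by $p(x) \wedge \neg\exists y.\, y \prel x$ (resp.\ $p(x) \wedge \neg\exists y.\, x \prel y$), and the acceptance condition is a finite disjunction over $\Acc$ using $|\Procs|$ first-order variables. In particular $\le$ is never used, so the resulting sentence lies in $\EMSO[\prel,\mrel]$.

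For the converse, I would first use that \CFMs are effectively closed under projection --- guessing an extra component of the labeling amounts to enlarging $\Sigma$ and the transition relations --- so that it suffices to turn a sentence $\varphi \in \FO[\prel,\mrel]$, over a suitably extended alphabet, into a \CFM. The key observation is that the Gaifman graph of any MSC with respect to $\{\prel,\mrel\}$ has \emph{bounded degree}: every event has at most one $\prel$-predecessor, at most one $\prel$-successor, and at most one $\mrel$-neighbour. By \emph{Hanf's theorem}, $\varphi$ is then equivalent, over $\MSCs{\Procs}{\Sigma}$, to a \emph{Hanf-local} sentence --- a Boolean combination of statements of the form ``there are at least $k$ events whose radius-$r$ ball in the Gaifman graph realizes the labeled, centered isomorphism type $\tau$'', for a fixed radius $r$, finitely many types $\tau$, and thresholds $k$ depending only on $\varphi$.

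It then remains to realize such a Hanf-local sentence by a \CFM. I would build, first, a \CFM that annotates each event with its radius-$r$ ball: every process guesses, for each of its events $e$, an isomorphic copy of the $r$-ball around $e$ (equivalently, recursively: the label of $e$, the set of its neighbours --- which is observable --- the guessed radius-$(r-1)$ balls of those neighbours, and the identifications among all the events involved). Along $\prel$-edges, consistency of these guesses is checked on the same process (guessing the bounded future of a process and verifying it as it unfolds is routine); along a message edge $e \mrel f$, the message emitted at $e$ carries $e$'s guessed ball, and $f$ checks it against its own, a single one-directional check per message edge being enough since the compatibility condition between the balls of two adjacent events is symmetric. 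Second, once events are annotated, each process counts, up to the threshold, how many of its events carry each type; these counts form an element of a fixed finite set and are combined in $\Acc$ according to the Boolean combination. Composing the two machines (again by projection closure) yields the desired \CFM.

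The heart of the matter --- and the step I expect to be the main obstacle --- is the sphere-annotating \CFM: messages travel only from sender to receiver, and a process sees nothing of the future of its partners, so neither endpoint of a message edge can simply read off the other's neighbourhood. The guess-and-verify scheme above avoids this, but the recursion on the radius must be arranged with care so that passing every local consistency check along $\prel$- and $\mrel$-edges forces, by induction on $r$, each guessed ball to coincide with the true one; acyclicity of ${\prel} \cup {\mrel}$ (part of the definition of an MSC) is what makes this induction bottom out. Everything else --- threshold counting, Boolean combinations in $\Acc$, and the projection step --- is standard finite-state bookkeeping.
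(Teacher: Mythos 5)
Your overall route is exactly the one the paper attributes to \cite{BolligJournal}: the forward direction by guessing a run with one second-order variable per transition (this part of your argument is fine and is the standard construction the paper itself alludes to after Theorem~3), and the converse via Hanf's theorem on bounded-degree structures, a sphere-annotating \CFM, and threshold counting. The gap is in the step you yourself flag as the heart of the matter, and your proposed fix does not close it. Local, edge-by-edge, radius-decrementing consistency of the guessed $r$-balls does \emph{not} force the guesses to be the true isomorphism types. The induction you invoke would bottom out if the \emph{Gaifman graph} were acyclic, but acyclicity of ${\prel}\cup{\mrel}$ is a statement about the directed relation; the $r$-ball relevant to Hanf locality is taken in the undirected Gaifman graph, which in an MSC is full of cycles (e.g., $p$ sends to $q$ and $q$ later sends back to $p$: the four events plus the intervening process edges form an undirected cycle). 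Whenever two undirected paths of length $\le r$ from an event $e$ reconverge at a single event $f$, your scheme propagates two independent ``copies'' of information about $f$ through the two paths and never cross-checks them: a guessed ball that wrongly splits $f$ into two distinct elements (or wrongly merges two genuinely distinct events reached by different paths) passes every pairwise compatibility check between adjacent balls. This is the classical obstruction in realizing Hanf-local sentences on bounded-degree graphs, and it is precisely why the sphere-automaton construction in \cite{BolligJournal} is several pages long rather than a routine guess-and-verify.

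To repair it you need a mechanism that tracks each sphere \emph{instance} globally rather than comparing adjacent guesses: in \cite{BolligJournal} every event within distance $r$ of a center $f$ carries an identifier for the instance of $f$'s sphere it participates in (bounded degree guarantees only boundedly many instances overlap at any event, so a fixed finite supply of identifiers suffices, though reusing identifiers safely requires its own argument), and the automaton verifies that each tracked instance is eventually completed and matches the guess. Without something of this sort --- or an argument specific to MSCs showing why reconvergent undirected paths cannot fool the local checks, which you do not give --- the correctness claim for your sphere-annotating \CFM is unsubstantiated, and with it the hard direction of the theorem.
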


The logic $\EMSO[\prel,\mrel]$ is not very convenient as a specification language, as it does not allows us to talk, explicitly, about the order of an MSC.
It should be noted that \CFMs and \MSO are expressively equivalent if one restricts to MSCs that are \emph{channel-bounded} \cite{HenriksenJournal,GKM06,Kuske01}. Our main result allows one to include $\le$ in the unbounded case, too, though we have to restrict to two first-order variables:

\begin{theorem}\label{thm:equ}
\CFMs and $\EMSOt[\prel,\mrel,\le]$ are expressively equivalent.
\end{theorem}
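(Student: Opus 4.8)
The equivalence consists of two inclusions. The direction from \CFMs to $\EMSOt[\prel,\mrel,\le]$ is routine: given a \CFM $\A$, introduce one monadic second-order variable $X_\trans$ for each transition $\trans$ of each $\A_p$, with intended meaning ``$X_\trans$ is the set of events at which $\trans$ is taken''. That the $X_\trans$ encode a genuine accepting run of $\A$ — each event lies in exactly one $X_\trans$, labels agree, $\target$ and $\source$ match along $\prel$, messages and $\sender/\receiver$ match along $\mrel$, every non-empty process starts its $\prel$-minimal event in $\init_p$, and the tuple of $\prel$-maximal target states lies in $\Acc$ — is expressible using only the variables $x$ and $y$, no $\le$, and quantifier-nesting depth one. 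Hence $L(\A)=L(\varphi)$ for some $\varphi\in\EMSOt[\prel,\mrel]\subseteq\EMSOt[\prel,\mrel,\le]$.

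The real content is the converse. Starting from a sentence in $\EMSOt[\prel,\mrel,\le]$, I would first put its first-order kernel into Scott normal form (cf.\ \cite{GradelO99}): after adding finitely many fresh monadic predicates the sentence becomes equivalent to one of the form
\[
  \exists \bar X.\ \Bigl(\forall x \forall y.\ \chi_0(x,y)\ \wedge\ \bigwedge_{i=1}^{m}\forall x \exists y.\ \chi_i(x,y)\Bigr),
\]
with $\chi_0,\dots,\chi_m$ quantifier-free over $\{\prel,\mrel,\le\}$ and the unary predicates. A valuation of all monadic variables can be stored in the letters, so I absorb $\bar X$ (and the Scott predicates) into an extended alphabet $\extSigma=\Sigma\times\{0,1\}^{N}$; since \CFM languages are closed under the projection $\extSigma\to\Sigma$ (a \CFM can guess the extra components) and under intersection, it suffices to build, over $\extSigma$, a \CFM recognising the MSCs that satisfy $\forall x\forall y.\chi_0\wedge\bigwedge_i\forall x\exists y.\chi_i$.

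A quantifier-free formula in $x,y$ depends only on the \emph{pair type} of $(e,f)$: the unary types of $e$ and $f$, and which of $\{e\prel f,\ f\prel e,\ e\mrel f,\ f\mrel e,\ e<f,\ f<e,\ e=f\}$ hold (equivalently $e\conc f$ if none of the orderings do). So $\forall x\forall y.\chi_0$ (forbidding a set of pair types) and each $\forall x\exists y.\chi_i$ become, for every event $e$, a constraint relating the unary type of $e$ to: the types of its $\prel$- and $\mrel$-neighbours, the set of types occurring strictly below $e$, the set occurring strictly above $e$, and the set of types of events parallel to $e$. I would therefore build ``annotation'' \CFMs and take their product: (i) one labelling each event $e$ with the set $\pasttype{e}$ of types in its strict past, computed by propagating a subset of $\Types$ forward along $\prel$ and piggybacking it on messages along $\mrel$ — bounded memory suffices; (ii) one for the types strictly above each event, obtained from (i) together with closure of \CFM languages under reversal (the reverse of an MSC is again an MSC); (iii) one recording immediate-neighbour types, which is local; and (iv) for each type $\ttype$, a \CFM $\Ao$ marking exactly the events parallel to some event of type $\ttype$. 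Once an MSC carries this annotation, the condition $\forall x\forall y.\chi_0\wedge\bigwedge_i\forall x\exists y.\chi_i$ is a purely local test performed independently at each event by the finite control; taking the product of all these machines with this test, and projecting back to $\Sigma$, yields the desired \CFM.

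The main obstacle — and the technical heart of the paper (Section~\ref{sec:parallel}) — is component (iv), the parallel-detection \CFM. The difficulty is intrinsic to $\le$ being a genuine partial order: on any process $q$, the events parallel to a fixed event $e$ form a contiguous block, delimited below by $\last{q}{e}$ (the last $q$-event $\le e$) and above by $\first{q}{e}$ (the first $q$-event $\ge e$), and to decide whether a type-$\ttype$ event is parallel to $e$ a \CFM must effectively locate these cut points — or at least learn which types occur between them — with bounded memory over unbounded FIFO channels. Naive set arithmetic fails: a type may occur both below $e$ and parallel to $e$, so the parallel types cannot be recovered by subtracting the past and future types from the global ones. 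I expect to handle this by gossiping the latest/earliest cross-process information along messages, enough to pin down on each process the left and right ends of the parallel region (the roles of the auxiliary machines $\Aleft$, $\Aright$, $\Aint$), and then verifying the guessed markings by a careful product construction; essentially all of the work lies here.
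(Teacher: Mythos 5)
Your overall architecture --- Scott normal form, absorbing the second-order variables into an extended alphabet, annotating every event with the sets of labels occurring in each relative position, and isolating the parallel case as the crux --- is exactly the paper's, and your \CFM-to-logic direction is the standard one the paper also uses. But the theorem stands or falls with your component (iv), and there your plan stops precisely where the work begins. ``Pinning down on each process the left and right ends of the parallel region'' of an event $e$ is not something a \CFM can do event by event: there are unboundedly many events $e$ on $p$, each with its own block $\Parallelp{q}{e}$, and a finite-state device can neither track these blocks individually nor, upon reaching a candidate $(q,a)$-event $f$, remember which of the unboundedly many already-seen intervals on $p$ the event $\last{\p}{f}$ fell into. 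The paper needs three ideas that are absent from your sketch: (a) the \emph{negative} direction (soundness of the $0$-marks) is witnessed by a single guessed path through the MSC covering all $0$-events on $p$ and all $(q,a)$-events on $q$ (Lemma~\ref{lem:zeros}), verifiable by token passing; (b) for the \emph{positive} direction, the $1$-events on $p$ can be covered by \emph{two} families of pairwise disjoint, non-adjacent intervals, each of the form $\Parallelp{p}{f}$, so that each family can again be checked by a token-passing automaton recognizing $\Lint$; and (c) the test $\last{\p}{f_i} \notin [e_i,e'_i]$ is made finite-state by guessing a colouring of the intervals with $|\Paths_{p,q}|+1$ colours so that interfering intervals get distinct colours (Lemma~\ref{lem:coloring}), combined with a deterministic \CFM propagating, for every simple path $\p$, the colour of $\last{\p}{e}$ (Lemma~\ref{lem:last-label}). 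Without (b) and (c) in particular, ``gossiping the latest/earliest cross-process information'' does not yield a bounded-memory verifier.

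A second, smaller gap: your annotation (i) as described computes the labels of \emph{all} events strictly below $e$, but evaluating a quantifier-free $\chi$ asserting $y < x$ together with $\lnot(y \prel x) \wedge \lnot(y \mrel x)$ requires knowing whether a given label occurs below $e$ \emph{other than} at an immediate predecessor; the set $\lambda(\rpast{e})$ together with the predecessors' labels does not determine this (a label may occur only at the predecessor, or also elsewhere below). The paper resolves this by counting occurrences of each label up to $2$ (Lemma~\ref{lem:cfmstrictless}). This is easy to repair, but as written your annotation is insufficient to decide the formulas.
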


Both directions are effective. Translating a \CFM into an \EMSOt formula is standard: Second-order variables represent an assignment of transitions to events. The first-order kernel then checks whether this guess is consistent with the definition of an accepting run.



\section{From Two-Variable Logic To \CFMs}\label{sec:result}

The rest of this paper is devoted to the translation of $\EMSOt[\prel,\mrel,\le]$ formulas into \CFMs.

\begin{theorem}
  \label{thm:main}
  For all sentences $\varphi \in \EMSOt[\prel,\mrel,\le]$, we can effectively construct a \CFM $\A_\varphi$ with $2^{2^{\mathcal{O}(|\varphi| + |P| \log{|P|)}}}$ states (per process) such that $L(\A_\varphi) = L(\varphi)$.
\end{theorem}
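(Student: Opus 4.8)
\medskip\noindent
The plan is to compile $\varphi$ into a \CFM in four steps --- a logical normalization, a reduction to a bounded family of event-\emph{marking} tasks, the routine markings, and the one hard marking, which carries all the difficulty. First I would write $\varphi = \exists X_1 \cdots \exists X_m.\,\psi$ with $\psi \in \FOt[\prel,\mrel,\le]$ and apply Scott's transformation (cf.\ \cite{GradelO99}): there are fresh unary predicates $Y_1,\dots,Y_k$ and a sentence
\[
  \psi' \;=\; \forall x \forall y.\,\alpha(x,y)\;\wedge\;\bigwedge_{i=1}^{n}\forall x\exists y.\,\beta_i(x,y)
\]
over the signature enlarged by the $X_j$'s and $Y_j$'s, with $\alpha,\beta_1,\dots,\beta_n$ quantifier-free and $k,n=\mathcal{O}(|\varphi|)$, such that $\psi\equiv\exists Y_1\cdots\exists Y_k.\,\psi'$, hence $\varphi\equiv\exists X_1\cdots\exists X_m\exists Y_1\cdots\exists Y_k.\,\psi'$. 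Since a \CFM can guess an annotation of an MSC by any fixed number of unary predicates (each process just stores the bits of its own events) and \CFMs are effectively closed under intersection, it then suffices to build, over annotated MSCs, a \CFM for the models of each conjunct of $\psi'$.

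\emph{Reduction to markings.} Next I would put $\neg\alpha$ and each $\beta_i$ in disjunctive normal form; each becomes a disjunction $\bigvee_j\bigl(\sigma_j(x)\wedge\tau_j(y)\wedge\rho_j(x,y)\bigr)$, where $\sigma_j,\tau_j$ are \emph{types} (maximal consistent conjunctions of unary literals, fixing the process, the $\Sigma$-label and the annotation bits of an event) and each $\rho_j$ is one of the mutually exclusive basic relations
\[
  x=y,\ \ x\prel y,\ \ y\prel x,\ \ x\mrel y,\ \ y\mrel x,\ \ x\sqsubset y,\ \ y\sqsubset x,\ \ x\conc y ,
\]
with $x\sqsubset y := x<y\wedge\neg(x\prel y)\wedge\neg(x\mrel y)$. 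Writing $W^{\tau,\rho}(x):=\exists y.(\tau(y)\wedge\rho(x,y))$, the conjunct $\forall x\exists y.\,\beta_i$ is then equivalent to $\forall x.\bigvee_j(\sigma_j(x)\wedge W^{\tau_j,\rho_j}(x))$, and $\forall x\forall y.\,\alpha$ to $\bigwedge_j\forall x.(\sigma_j(x)\Rightarrow\neg W^{\tau_j,\rho_j}(x))$ --- in both cases: ``every event $e$ satisfies a fixed Boolean combination of its type and the bits $W^{\tau,\rho}(e)$''. So if a \CFM running on an annotated MSC can \emph{mark} each event $e$ with the truth value of every relevant $W^{\tau,\rho}(e)$, then each conjunct of $\psi'$ becomes a purely local, regular check that each process performs on its own events. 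The task thus reduces to constructing, for each type $\tau$ and basic relation $\rho$, a \CFM computing the marking $e\mapsto W^{\tau,\rho}(e)$, and to taking the product of all of these.

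\emph{The routine markings.} For $\rho$ among $x=y$, $x\prel y$, $y\prel x$, $x\mrel y$, $y\mrel x$, the bit $W^{\tau,\rho}(e)$ depends only on $e$, its process-neighbours, and (for the $\mrel$ cases) its unique message partner, whose type can be piggybacked on the message edge; this marking is immediate. For $\rho$ equal to ``$x<y$'' I would use a single forward pass: each process keeps in its state the set of types occurring in the strict causal past of the current event, updating it on process edges and merging into it the set carried by each incoming message; reading this set off at $e$ answers whether some $\tau$-event lies strictly below $e$. Running the very same construction on the \emph{reversed} MSC handles ``$x>y$'' --- here I would invoke that \CFMs are effectively closed under language reversal (reverse all words, swap send with receive transitions, reverse the channels, exchange initial with accepting states, the last being harmless since a \CFM may carry global initial states). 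Finally ``$x\sqsubset y$'' differs from ``$x<y$'' only by excluding the at most two direct successors of $e$, whose types are locally available, so its marking is a bounded correction of the ``$x<y$'' marking.

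\emph{The marking for $\conc$, and the main obstacle.} This is the technical core (Section~\ref{sec:parallel}), and the step I expect to be the real obstacle. For an event $e$ on process $p$ and any process $q$, the events of $q$ that lie below $e$ form an initial segment of $q$'s line and those above $e$ a final segment (since $\prel$ linearly orders each process), so the events of $q$ parallel to $e$ form a contiguous interval. It follows that, with $q$ the process carrying $\tau$, the event $e$ is parallel to some $\tau$-event iff $e\not\le f$, where $f$ is the $\prel$-least $\tau$-event on $q$ that is not below $e$ (and ``no'' if no such $f$ exists). The \CFM must therefore track, as $e$ sweeps the MSC, this moving witness $f$ for all processes at once and decide the \emph{mixed} condition $e\not\le f$. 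It cannot store the causal cuts $\{g\mid g\le e\}$ exactly, so it would propagate along $\prel$- and $\mrel$-edges only a bounded summary of those cuts --- enough to resolve the comparisons at hand, exploiting the FIFO discipline --- and must interleave this forward propagation with a backward computation to decide ``$e\not\le f$''. This is precisely where the partiality of $\le$ bites: what determines $W^{\tau,\conc}(e)$ depends on events that, across different linearisations of the MSC, may lie on either side of $e$, so no single-directional pass is enough. Tracking the cut summaries over all $|\Procs|$ processes contributes a $2^{\mathcal{O}(|\Procs|\log|\Procs|)}$ factor; combining it with guessing the annotations, intersecting over the conjuncts, the reversal and past-type-set passes, and the product over the (singly exponentially many, in $|\varphi|+\log|\Procs|$) pairs $(\tau,\rho)$ yields the claimed $2^{2^{\mathcal{O}(|\varphi|+|\Procs|\log|\Procs|)}}$ bound.
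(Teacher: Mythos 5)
Your overall architecture --- Scott normal form, reduction to marking each event with the truth values of $\exists y.(\tau(y)\wedge\rho(x,y))$ for each unary type $\tau$ and each of the eight basic relations $\rho$, followed by a purely local check --- is exactly the paper's strategy (the paper packages the markings as ``types'' $\type_M^{\bowtie}(e)$ and builds a product automaton $\Atype$). But the proposal has a genuine gap precisely at the step you yourself flag as the obstacle: the marking for $\conc$. Your analysis that the $q$-events parallel to $e$ form a $\prel$-interval is right, but the construction you sketch --- propagating ``a bounded summary of the causal cuts, enough to resolve the comparisons at hand'' and ``interleaving forward propagation with a backward computation'' --- is not a construction, and the hinted mechanism is implausible as stated: a bounded forward summary that decides $e\le f$ for arbitrary pairs is essentially what a \CFM cannot maintain (this is why $\MSO[\le]$ is not captured by \CFMs in the first place). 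The paper's actual solution is structurally different and global: it splits the marking into a soundness half (events claimed \emph{not} parallel to any $(q,a)$-event are witnessed by a single guessed $\prel\cup\mrel$-path, i.e.\ a totally ordered chain, covering all of them together with all $(q,a)$-events --- a token-passing automaton) and a completeness half (the $1$-marked events are covered by guessed intervals, which are split into two families of pairwise disjoint, non-adjacent intervals, each verified via an auxiliary language $\Lint$ refined by $\Lleft\cap\Lright$, a coloring of the intervals with $|\Paths_{p,q}|+1$ colors, and a deterministic \CFM computing, for every path shape $\p$ through distinct processes, the label of the last event reaching $e$ along $\p$). None of these ideas --- the chain witness for non-parallelism, the interval decomposition, the coloring lemma, the per-path-shape predecessor tracking --- appears in your proposal, and without something of this kind the claim that $W^{\tau,\conc}$ is computable by a \CFM is unsupported.

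There is also a smaller but real error in the ``routine'' markings. For $\strictless$ (your $x\sqsubset y$), you claim the marking is obtained from the $<$-marking by ``excluding the at most two direct successors of $e$, whose types are locally available''. This subtraction is exactly the na\"ive approach the paper warns against: if a direct predecessor (or successor) $f$ of $e$ carries type $\tau$, the bit ``does some $\tau$-event lie in $\apprelp{e}{\strictless^{-1}}$'' cannot be recovered from the bit ``does some $\tau$-event lie strictly below $e$'' together with $\lambda(f)$, because the answer depends on whether $\tau$ occurs at some \emph{other} event of the causal past besides $f$. The fix is cheap --- count occurrences of each label in the causal past up to threshold $2$ rather than recording a mere set --- but as written your ``bounded correction'' computes the wrong marking on MSCs where a neighbour's label also occurs elsewhere in the past.
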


The \CFM $\A_\varphi$ is inherently \emph{nondeterministic} (for the definition of a deterministic \CFM, cf.\ \cite{GKM07}). Already for $\FOt$, this is unavoidable: \CFMs are in general not determinizable, as witnessed by an \FOt-definable language in \cite[Proposition~5.1]{GKM07}.
Note that the number of states of $\A_\varphi$ is, in fact, independent of the number of letters from $\Sigma$ that do \emph{not} occur in the formula. This is why Theorem~\ref{thm:main} mentions only $|\varphi|$ rather than $|\Sigma|$. Actually, the doubly exponential size of $\A_\varphi$ is necessary, even for $\FOt[\prel]$
or $\FOt[\le]$ sentences and a small number of processes. The following can be shown using known techniques \cite{GroheS03,Weis2011} (see Appendix~\ref{sec:lb}):
\begin{lemma}
  (i) Assume $|P| = 1$ and $|\Sigma| = 2$. For all $n \in \mathbb{N}$, there is a sentence $\varphi \in \FOt[\prel]$ of size $\mathcal{O}(n^2)$ such that no \CFM with less than
  $2^{2^{n}}$ states recognizes $L(\varphi)$.\\
  (ii) Assume $|P| = 2$ and $|\Sigma| = n$ with $n \ge 2$. There is a sentence $\varphi \in \FOt[\le]$ of size $\mathcal{O}(n)$ such that no \CFM with less than
  $2^{2^{n-1}}$ states on every process recognizes~$L(\varphi)$.
\end{lemma}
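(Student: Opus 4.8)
The plan is to exhibit, for each $n$, an explicit language $L_n$ together with a short defining sentence and a lower bound on the number of states of any recognizing \CFM; the state lower bound is routine once $L_n$ is chosen well, so I would settle it first. For~(i), with $|P| = 1$ a \CFM is just a (nondeterministic) finite automaton over $\{a,b\}^\ast$, so ``$L_n$ has no \CFM with fewer than $2^{2^n}$ states'' amounts to exhibiting an (extended) fooling set, or a Myhill--Nerode family, of size $2^{2^n}$. For $L_n$ I would take a binary encoding of the classical doubly-exponential ``table/equality'' language: words listing a table of $2^n$ inner cells, each cell carrying its own $n$-bit address followed by one payload bit, subject to a consistency requirement relating two copies of such a table, so that the $2^{2^n}$ possible payload vectors yield $2^{2^n}$ pairwise inequivalent words. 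The actual content of~(i) is that this $L_n$ is $\FOt[\prel]$-definable with only $\mathcal{O}(n^2)$ symbols: with two reused variables and the successor relation $\prel$ one walks through the address gadgets of length $\mathcal{O}(n)$ and asserts the bounded-window consistency conditions one address bit at a time, the $n^2$ blow-up being the product of ``$n$ address bits'' and ``$\mathcal{O}(n)$ symbols per navigation step''. This is exactly the succinct two-variable encoding technology of \cite{Weis2011} for the successor-only signature (and of \cite{GroheS03}), which I would cite rather than reconstruct.

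For~(ii) the alphabet may grow, $|\Sigma| = n$, so the $n$ letters themselves can serve as an address alphabet and one may drop the successor relation, working in $\FOt[\le]$ only. Here I would use MSCs over $\Procs = \{p,q\}$ in which process $p$ writes a block that encodes a subset $S \subseteq \{0,\dots,2^{n-1}-1\}$ --- each of the $2^{n-1}$ potential members addressed by an $(n-1)$-letter pattern --- process $q$ writes another such block, and a fixed, formula-checkable skeleton of message edges links the two sides; the sentence, of size $\mathcal{O}(n)$, then asserts that $q$'s block equals $p$'s. Any \CFM recognizing this language must, on process $q$, have reached one of at least $2^{2^{n-1}}$ distinct states by the time the information about $S$ (conveyed through $\le$ and the labels) has been read --- one state per subset $S$ --- which a standard fooling-set / crossing-sequence argument on the $q$-component makes precise; crucially the bound is independent of $|\Msg|$, since the discriminating content travels in the labels and the order, not in a growing message set. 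Hence the $q$-component alone needs at least $2^{2^{n-1}}$ states, so no \CFM that has fewer than $2^{2^{n-1}}$ states on every process can recognize $L(\varphi)$. The $\mathcal{O}(n)$-size $\FOt[\le]$ formula expressing ``$q$'s block $=$ $p$'s block'' is again the succinct two-variable construction of \cite{GroheS03}, transported from a linear order to the partial order $\le$ of an MSC: since $\le$ restricts to a linear order on each process and the two blocks sit on $p$ and $q$, the transport is routine.

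The one genuinely delicate ingredient in both parts is the succinct $\FOt$ encoding. Two variables make naive navigation of nested address gadgets impossible, and the whole point of \cite{GroheS03,Weis2011} is the ``leapfrogging'' device by which two variables reused along a gadget of logarithmic width address an exponential range at merely polynomial cost in formula size --- which is what produces the doubly-exponential gap to automata. Once that is available, verifying that the resulting sentence lies in $\FOt[\prel]$ (resp.\ $\FOt[\le]$) and has the claimed size is a syntactic check, and the state lower bound is the fooling-set / Myhill--Nerode count above. The price paid for forgoing $\le$ in~(i) is precisely the extra factor $n$ in the formula size, which is why there one needs $\mathcal{O}(n^2)$ symbols to reach $2^{2^{n}}$, whereas in~(ii) the order relation and the larger alphabet bring this down to $\mathcal{O}(n)$ symbols for $2^{2^{n-1}}$.
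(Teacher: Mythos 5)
Your part (i) is essentially the paper's own argument: a list of blocks of width $\mathcal{O}(n)$, unary ``bit'' predicates $\alpha_i(x)$ of size $\mathcal{O}(i)$ obtained by leapfrogging the two variables along $\prel$, a $\forall x.\exists y$ matching between the two halves of the word, and a fooling family of $2^{2^n}$ words spliced at a state collision. The cosmetic difference (a complete table of $2^n$ addressed cells versus an arbitrary list of blocks with set-equality semantics) is immaterial for the lower bound, and the $\mathcal{O}(n^2)$ size arises in both versions from summing the sizes of the $n$ navigation formulas.

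Part (ii), however, has a genuine gap. You propose to encode a subset by an ``$(n-1)$-letter pattern'' address block on each process and to compare the two blocks, asserting that the $\mathcal{O}(n)$-size $\FOt[\le]$ formula is the succinct two-variable construction of \cite{GroheS03} ``transported from a linear order to $\le$'', the transport being routine. But the leapfrogging device you invoke requires the successor relation: in $\FOt[\le]$ without $\prel$, restricted to the linear order of a single process, one cannot express ``the next position'' at all (this fragment is just unary temporal logic with $\mathsf{F}$ and $\mathsf{P}$), so positional address gadgets cannot be read, and no doubly exponential gap of this kind exists for words over $\le$ alone --- which is exactly why part (i) uses $\prel$ and why part (ii) needs two processes. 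The paper's construction for (ii) is not a transport but exploits the partial order: it sets $\alpha_i(x) = \exists y.(a_i(y) \wedge x \conc y)$, where $x \conc y$ abbreviates $\neg(x\le y)\wedge\neg(y\le x)$, and builds MSCs in which each consecutive pair of events on $p$ is made parallel, via a fixed skeleton of message edges, to exactly one interval of events on $q$; the set of letters occurring in that interval is then a unary, constant-size-definable profile of the $p$-event, yielding $2^n$ profiles per event and $2^{2^n}$ fooling MSCs with a formula of size $\mathcal{O}(n)$. Without this use of $\conc$ your sentence cannot be written in $\FOt[\le]$ at the claimed size. Your per-process state count also needs care: the pigeonhole is applied to global states, so fewer than $2^{2^{n-1}}$ states on each of the two processes gives fewer than $2^{2^n}$ global states against $2^{2^n}$ fooling MSCs; arguing about the $q$-component alone would require controlling $p$'s state at the splice point as well.
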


Now, we turn to the upper bound, i.e., the proof of Theorem~\ref{thm:main}. In a first step, we translate the given formula into Scott normal form:

\begin{lemma}[Scott Normal Form]
  Every formula from $\EMSOt[\prel,\mrel,\le]$ is effectively equivalent to a linear-size formula of the form
  $
    \exists X_1 \ldots \exists X_m.\psi$ where $\psi = \forall x. \forall y.
    \varphi \land \bigwedge_{i=1}^{\nform} \forall x. \exists y. \varphi_i \in \FOt[\prel,\mrel,\le]$ with
    $\varphi,\varphi_1,\ldots,\varphi_\nform$ quantifier-free.
\end{lemma}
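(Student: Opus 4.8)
The plan is to adapt the classical Scott normal form construction (as in Gr\"adel--Otto--Rosen, cf.\ \cite{GradelO99}) to our setting, taking care that the only additional ingredients we have beyond pure $\FOt$ --- the extra second-order prefix $\exists X_1 \ldots \exists X_n$ and the three binary relation symbols $\{\prel,\mrel,\le\}$ --- do not interfere with the transformation. The key observation is that Scott normal form is obtained by repeatedly \emph{renaming subformulas}: whenever a formula contains a subformula of the form $Q z.\chi$ with $Q \in \{\exists,\forall\}$ and $z \in \{x,y\}$, we introduce a fresh unary (monadic second-order) predicate $X$, add a conjunct asserting $\forall z.(X(z) \leftrightarrow Q z'.\chi)$ for the other variable $z'$, and replace the original occurrence of $Q z.\chi$ by the atom $X(z)$. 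Since the fresh predicates are \emph{set} variables that will be existentially quantified at the front, and since $X(z)$ uses only one variable, each such renaming strictly decreases quantifier nesting while staying inside $\FOt$ and adding only a constant amount of formula size per renaming.

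First I would make precise the normalization procedure. Starting from $\varphi_0 = \exists X_1 \ldots \exists X_k.\,\varphi'$ with $\varphi' \in \FOt[\prel,\mrel,\le]$ (a sentence), I would work on $\varphi'$. By pushing negations inward and using the standard trick of replacing each innermost quantified subformula $Q z.\chi$ (where $\chi$ is quantifier-free, hence uses only the free variable for $z$ and possibly the other variable as a free parameter) by a fresh monadic predicate $X_\chi$ applied to the parameter variable, I obtain after finitely many steps a formula that is a boolean combination of quantifier-free formulas and atoms $X_\chi(\cdot)$, together with a conjunction of ``definitional'' sentences of the shape $\forall z.\bigl(X_\chi(z) \leftrightarrow Q z'.\chi\bigr)$. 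Each such definitional sentence is itself a $\forall z. Q z'. (\ldots)$ formula with quantifier-free matrix; a $\forall\exists$ one directly contributes a conjunct $\forall x.\exists y.\varphi_i$ (after renaming $z,z'$ to $x,y$), while a $\forall\forall$ one contributes a conjunct to the single $\forall x.\forall y.\varphi$ part, and a $\forall z.\forall z'$ equivalence splits into one $\forall\forall$ implication and one $\forall\exists$ implication. Finally, the top-level boolean combination of quantifier-free formulas and $X_\chi$ atoms is a sentence with no first-order quantifier except that it mentions no free first-order variable --- so it is really a variable-free boolean combination of the (closed) $X_\chi$-free parts; in fact, since the original formula $\varphi'$ is a sentence, after one outermost renaming step it becomes a closed formula of the form $Q z.\chi'$ itself, which is handled uniformly. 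One then merges all fresh predicates with the $X_i$ into a single existential second-order prefix.

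The steps, in order, would be: (1) rewrite $\varphi'$ into negation normal form; (2) iterate the subformula-renaming step from innermost to outermost, each time introducing a fresh $X_\chi$ and a definitional conjunct, until all remaining quantifiers occur only inside definitional conjuncts of depth-two shape $\forall z. Q z'.(\text{quantifier-free})$; (3) split each $\forall\forall$-definitional equivalence into implications and collect all $\forall x.\forall y$-conjuncts into one formula $\forall x.\forall y.\varphi$ (taking $\varphi$ to be their conjunction, which is still quantifier-free) and all $\forall x.\exists y$-conjuncts as $\forall x.\exists y.\varphi_i$; (4) prepend $\exists X_1 \ldots \exists X_m$ over all old and new set variables. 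Correctness (equivalence over every MSC) follows because each renaming is an equivalence: the fresh set variable is forced by its definitional conjunct to denote exactly the set defined by the renamed subformula, and conversely any interpretation of the original formula extends uniquely. The linear size bound holds because each renaming step removes a subformula and adds a definitional conjunct whose size is that of the removed subformula plus a constant, so the total blow-up is linear in $|\varphi|$; note this argument is oblivious to which binary predicates appear, so it works verbatim for $[\prel,\mrel,\le]$.

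The main obstacle --- really a bookkeeping subtlety rather than a deep difficulty --- is to handle the interaction of the renaming with the fact that we must end up with exactly \emph{one} $\forall x.\forall y$-block and a \emph{conjunction} of $\forall x.\exists y$-blocks, rather than an arbitrary nesting: one must be careful that after renaming, the top-level structure is a conjunction of definitional sentences (plus the rewritten body), and that a $\forall x$ quantifier appearing on top of an $\exists y$ inside one conjunct does not get entangled with another conjunct. This is guaranteed precisely because the renaming is done innermost-first, so that at the moment we rename a subformula $Q z.\chi$ its matrix $\chi$ is already quantifier-free; the resulting definitional conjunct therefore has the clean depth-two form, and no further nesting is created. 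A second minor point is that an equivalence $\leftrightarrow$ inside a definitional conjunct of the form $\forall z.\forall z'.(X_\chi(z) \leftrightarrow \chi)$ must be broken into two implications so that one lands in the $\forall\forall$-part and (vacuously, or after adding a dummy $\exists y$) nothing spurious lands in the $\forall\exists$-part; choosing to phrase $\forall\forall$-equivalences directly as a single conjunct of the quantifier-free matrix $\varphi$ avoids even this. With these conventions fixed, the construction goes through routinely and yields the claimed normal form.
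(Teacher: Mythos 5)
Your proposal is correct and is exactly the classical Scott normal form construction (innermost-first renaming of quantified subformulas by fresh existentially quantified monadic predicates, splitting the definitional equivalences into one $\forall\forall$ and one $\forall\exists$ implication, and merging all $\forall\forall$ conjuncts), which is precisely what the paper relies on by citing Gr\"adel--Otto; the paper gives no separate proof, and your observation that the argument is oblivious to the particular binary predicates $\prel,\mrel,\le$ is the only point that needs saying in this setting.
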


As \CFMs are closed under projection, it remains to deal with the first-order part $\psi$.
Note that $\psi$ contains free occurrences of second-order variables $X_1,\ldots,X_\nvar$.
To account for an interpretation of these variables, we extend the alphabet $\Sigma$ towards the alphabet $\extSigma = \Sigma \times \{0,1\}^\nvar$ of exponential size. When an event $e$ is labeled with $(a,b_1,\ldots,b_\nvar) \in \extSigma$, we consider that $e \in X_i$ iff $b_i = 1$.
As \CFMs are closed under intersection, too, the proof of Theorem~\ref{thm:main} comes down to the translation of the formulas $\forall x. \forall y. \varphi$ and $\forall x. \exists y. \varphi_i$.

Notice that, given an MSC $M$ and events $e$ and $f$ in $M$,
whether $M, x \mapsto e, y \mapsto f \models \varphi$ holds or not
only depends on the labels of $e$ and $f$, and their relative position.
This is formalized below in terms of \emph{types}.

\newcommand{\formula}{\eta}
\newcommand{\rewr}[4]{\llbracket #1 \rrbracket_{#2,#3}^{#4}}
\newcommand{\true}{\mathit{true}}
\newcommand{\false}{\mathit{false}}

\paragraph{Types.}

Let $M =(E,\prel,\mrel,\lambda) \in \MSCs{\Procs}{\extSigma}$ be an MSC. Towards the definition of the type of an event, we define another binary relation ${\strictless} = {<} \setminus ({\prel} \cup {\mrel})$.
Let $\Rel$ be the set of \emph{relation symbols} $\{=,\prel,\mrel,\,\conc\,,\prel^{-1},\mrel^{-1},\strictless,\strictless^{-1}\}$. Given $e \in E$ and ${\bowtie} \in \Rel$, we let $\apprel{M}{e}{\bowtie} = \{f \in E \mid e \bowtie f\}$.
In particular, $\apprel{M}{e}{\strictless^{-1}} = \{f \in E \mid f < e \mathrel{\wedge} \neg(f \prel e) \mathrel{\wedge} \neg(f \mrel e)\}$. When $M$ is clear from the context, we may just write $\apprelp{e}{\bowtie}$.
Notice that all these sets form a partition of $E$, i.e., $E = \biguplus_{\bowtie \in \Rel} \apprelp{e}{\bowtie}$ (some sets may be empty, though).
The $\bowtie$-\emph{type} and the \emph{type} of an event $e \in E$ are respectively defined by
\[\type_M^{\bowtie}(e) = \{\lambda(f) \mid f \in \apprel{M}{e}{\bowtie}\}
\quad\text{and}\quad
\type_M(e) = \bigl(\type_M^{\bowtie}(e)\bigr)_{\bowtie \in \Rel}\,.\]
By $\Types = \prod_{{\bowtie} \in \Rel} 2^{\Procs \times \extSigma}$,
we denote the (finite) set of possible types. Thus, we deal with functions $\type_M^{\bowtie}: E \to 2^{\Procs \times \extSigma}$ and $\type_M: E \to \Types$.

\begin{example}
Consider Figure~\ref{fig:partition} and the distinguished event $e$.  Suppose
$a,b,c \in \extSigma$.  The sets $\apprelp{e}{\bowtie}$, which form a partition
of the set of events, are indicated by the colored areas.  Note that, since $e$
is a receive event, $\apprelp{e}{\mrel} = \emptyset$.  Moreover,
$\type_M^{\prel}(e) = \type_M^{=}(e) = \type_M^{\prel^{-1}}(e) = \{(p,a)\}$ and
$\type_M^{\strictless^{-1}}(e) = \{(p,a),(p,c),(r,a),(q,a),(q,b)\}$.
\end{example}

In fact, it is enough to know the type of every event to (effectively) evaluate
$\psi$.  To formalize this, let $\formula \in
\{\varphi,\varphi_1,\ldots,\varphi_\nform\}$.  Recall that $\formula$ has free
first-order variables $x$ and $y$.  Assume that we are given $M \in
\MSCs{\Procs}{\extSigma}$ and two events $e$ and $f$ that are labeled with
$(p,\sigma),(p',\sigma') \in \Procs \times \extSigma$, respectively, where
$\sigma = (a,b_1,\ldots,b_\nvar)$ and $\sigma' = (a',b_1',\ldots,b_\nvar')$.
Let ${\bowtie} \in \Rel$ be the unique relation such that $e \bowtie f$.
To decide whether $M,x \mapsto e,y \mapsto f \models \formula$, we rewrite
$\formula$ into a propositional formula
$\rewr{\formula}{(p,\sigma)}{(p',\sigma')}{\bowtie}$ that can be evaluated to
$\true$ or $\false$: Replace the formulas $p(x)$, $a(x)$, $p'(y)$, $a'(y)$, $x \in X_i$ with $b_i = 1$, and $y \in X_i$ with $b_i' = 1$ by $\true$.  All other unary
predicates become $\false$ (we consider $z \in X_i$ to be unary).  Formulas $z
\sim z'$ with $z,z' \in \{x,y\}$ and ${\sim} \in \{=,\prel,\mrel,\le\}$ can be
evaluated to $\true$ or $\false$ based on the assumption that $x \bowtie y$.
By an easy induction, we obtain: 

\begin{lemma}\label{lem:eval}
For all $\formula \in \{\varphi,\varphi_1,\ldots,\varphi_\nform\}$, $M=(E,\prel,\mrel,\lambda) \in \MSCs{\Procs}{\extSigma}$, and $e \in E$:\vspace{0.5ex} 
\begin{itemize}\itemsep=0.7ex
  \item $M, x \mapsto e \models \exists y.\formula$ iff 
  $\rewr{\formula}{\lambda(e)}{(p',\sigma')}{\bowtie}$ is true
  for some ${\bowtie} \in \Rel$ and $(p',\sigma') \in \type_M^{\bowtie}(e)$.

  \item $M, x \mapsto e \models \forall y.\formula$ iff 
  $\rewr{\formula}{\lambda(e)}{(p',\sigma')}{\bowtie}$ is true
  for all ${\bowtie} \in \Rel$ and $(p',\sigma') \in \type_M^{\bowtie}(e)$.
\end{itemize}
\end{lemma}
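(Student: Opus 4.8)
The plan is to first establish a pointwise version of the claim and then sum over the partition $E=\biguplus_{\bowtie\in\Rel}\apprelp{e}{\bowtie}$. Concretely, I would first show that for all $e,f\in E$, if $\bowtie\in\Rel$ is the unique relation symbol with $e\bowtie f$ (unique precisely because the sets $\apprelp{e}{\bowtie}$ partition $E$), then
\[
  M, x\mapsto e, y\mapsto f \models \formula
  \quad\Longleftrightarrow\quad
  \rewr{\formula}{\lambda(e)}{\lambda(f)}{\bowtie}=\true .
\]
This goes by structural induction on the quantifier-free formula $\formula$. As $\formula$ has no quantifiers, the only inductive cases are $\lor$ and $\lnot$, and there both the rewriting $\llbracket\cdot\rrbracket$ and the satisfaction relation commute with the connective, so the induction hypothesis suffices. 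For the unary atoms $p(x)$, $a(x)$, $x\in X_i$ and their $y$-counterparts, the truth value under the given interpretation depends only on $\lambda(e)$ (resp.\ $\lambda(f)$) --- e.g.\ $M,x\mapsto e\models x\in X_i$ iff the $i$-th bit of the $\{0,1\}^{\nvar}$-component of $\lambda(e)$ equals $1$ --- which is exactly the value the rewriting assigns. For the binary atoms $z\sim z'$ with $z,z'\in\{x,y\}$ and ${\sim}\in\{=,\prel,\mrel,\le\}$, I would check that $\bowtie$ determines each of them: $e\prel f$ iff $\bowtie$ is $\prel$; $e\mrel f$ iff $\bowtie$ is $\mrel$; $e=f$ iff $\bowtie$ is $=$; and $e\le f$ iff $\bowtie\in\{=,\prel,\mrel,\strictless\}$ (using ${\le}=({\prel}\cup{\mrel})^*$ and ${\strictless}={<}\setminus({\prel}\cup{\mrel})$); the cases with $x$ and $y$ swapped are symmetric via the inverse symbols, and $x\sim x$, $y\sim y$ are immediate.

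Given the pointwise claim, the two equivalences follow by grouping the quantified $f$ according to its position relative to $e$. Fixing $e$ and $\formula$, since $E=\biguplus_{\bowtie\in\Rel}\apprelp{e}{\bowtie}$, we get that $M,x\mapsto e\models\exists y.\formula$ holds iff there are $\bowtie\in\Rel$ and $f\in\apprelp{e}{\bowtie}$ with $M,x\mapsto e,y\mapsto f\models\formula$, which by the pointwise claim is equivalent to the existence of such $\bowtie$ and $f$ with $\rewr{\formula}{\lambda(e)}{\lambda(f)}{\bowtie}=\true$. Since $\rewr{\formula}{\lambda(e)}{(p',\sigma')}{\bowtie}$ depends on $f$ only through $\lambda(f)$, and $\type_M^{\bowtie}(e)=\{\lambda(f)\mid f\in\apprelp{e}{\bowtie}\}$ by definition, this is in turn equivalent to the existence of $\bowtie\in\Rel$ and $(p',\sigma')\in\type_M^{\bowtie}(e)$ with $\rewr{\formula}{\lambda(e)}{(p',\sigma')}{\bowtie}$ true, i.e.\ the first item. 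The second item follows by the same argument with ``for all'' in place of ``there is'', or alternatively by applying the first item to $\lnot\formula$ and using that the rewriting commutes with negation.

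There is essentially no obstacle here --- the lemma is genuinely an ``easy induction''. The only step that deserves a line of care is the binary-atom base case: one must observe that the single symbol $\bowtie$ with $e\bowtie f$ pins down the truth of every atomic $z\sim z'$, in particular of $x\le y$, which amounts to $\bowtie$ lying in the four-element subset $\{=,\prel,\mrel,\strictless\}$ of $\Rel$. This is exactly the point where the decomposition of $<$ into $\prel$, $\mrel$, and $\strictless$, and the fact that the eight sets $\apprelp{e}{\bowtie}$ genuinely partition $E$, are needed.
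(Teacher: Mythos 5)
Your proof is correct and is exactly the ``easy induction'' the paper alludes to: a pointwise equivalence $M,x\mapsto e,y\mapsto f\models\formula \Leftrightarrow \rewr{\formula}{\lambda(e)}{\lambda(f)}{\bowtie}=\true$ by structural induction on the quantifier-free $\formula$, followed by regrouping the witnesses $f$ along the partition $E=\biguplus_{\bowtie\in\Rel}\apprelp{e}{\bowtie}$ and observing that the rewriting depends on $f$ only through $\lambda(f)$. The paper omits these details entirely, and your treatment of the only delicate base case (that $x\le y$ is decided by whether $\bowtie\in\{=,\prel,\mrel,\strictless\}$) is accurate.
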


Therefore, in order to construct a \CFM for $\psi$, we start by constructing a \CFM $\Atype$ that
``labels'' each event with its type. 

In fact, our translation of a formula into a \CFM relies on several intermediate \CFMs running on \emph{extended} MSCs, whose events have additional labels from a finite alphabet $\Gamma$. It will be convenient to consider an extended MSC from $\MSCs{\Procs}{\extSigma \times \Gamma}$, in the obvious way, as a pair $(M,\addlab)$ where $M=(E,\prel,\mrel,\lambda) \in \MSCs{\Procs}{\extSigma}$ and $\addlab: E \to \Gamma$.

\begin{theorem}
  \label{thm:Atype}
  There is a \CFM $\Atype$ over $\Procs$ and $\extSigma \times \Types$
  with $2^{|\extSigma| \cdot 2^{\mathcal{O}(|P|\log{|P|})}}$ states
  such that $L(\Atype) = \{(M,\type_M) \mid M \in \MSCs{\Procs}{\extSigma}\}$.
\end{theorem}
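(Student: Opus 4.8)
The plan is to build $\Atype$ as a product of several component CFMs, one for each relation symbol ${\bowtie} \in \Rel$, each responsible for certifying the component $\type_M^{\bowtie}(e)$ of the type of every event $e$. Since CFMs are closed under intersection (and relabeling/projection), it suffices to treat the eight relations separately; the final state count will be the product of the individual bounds, which stays within $2^{|\extSigma|\cdot 2^{\mathcal{O}(|P|\log|P|)}}$ because there are only constantly many relations and each factor is of that form. Throughout, a run guesses, for each event $e$, a candidate value $T^{\bowtie}(e)\subseteq\Procs\times\extSigma$ (the label comes from $\extSigma\times\Types$), and the machine must (a) verify soundness --- every pair in $T^{\bowtie}(e)$ is genuinely realised by some $f$ with $e\bowtie f$ --- and (b) verify completeness --- no realised label is missing. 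For soundness one typically lets the ``witness'' event $f$ send a token back toward $e$; for completeness one argues that a fixed, bounded amount of information suffices to reconstruct which labels occur in $\apprelp{e}{\bowtie}$.

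The easy relations are $=,\prel,\prel^{-1}$: these depend only on $\lambda(e)$ and the labels of the immediate process-neighbours, so a process-local automaton of size polynomial in $|\extSigma|$ computes them exactly; $\mrel,\mrel^{-1}$ are equally local since each event carries at most one message edge and the receiving/sending transition already knows $\tmsg$ --- here we additionally let the sender attach $\lambda(e)$ inside the message so the receiver learns the label on the other end, and vice versa. The substantial work is $\strictless,\strictless^{-1}$ and $\conc$. For $\strictless^{-1}(e)=\{f : f<e\}\setminus(\Predp(e)\cup\Predm(e))$, I would maintain, as part of each process's state, the set $\textsf{past}(e)\subseteq\Procs\times\extSigma$ of labels $\lambda(f)$ for all $f<e$; this is updatable locally along $\prel$-edges and, along a message edge $f'\mrel e$, by taking the union with the set $\textsf{past}(f')\cup\{\lambda(f')\}$ piggy-backed on the message (a set of size $\le|\Procs\times\extSigma|$, hence a message alphabet of size $2^{|\extSigma|\cdot 2^{\mathcal{O}(|P|\log|P|)}}$, which is fine). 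From $\textsf{past}(e)$ one recovers $\type_M^{\strictless^{-1}}(e)$ by removing the (at most two) contributions of the direct $\prel$- and $\mrel$-predecessors, which are known locally. The relation $\strictless$ is the mirror image: it asks for labels of events $f$ with $e<f$ that are not direct successors. This cannot be done by a forward pass alone, so I would instead compute, symmetrically, the set $\textsf{future}(e)=\{\lambda(f):e<f\}$ --- but since a CFM cannot run ``backwards'', the standard trick is to guess $\textsf{future}(e)$ nondeterministically and verify it by a separate soundness/completeness argument: soundness is checked by requiring that $\textsf{future}(e)$ is consistent with $\textsf{future}$ of the $\prel$- and $\mrel$-successors (monotone propagation), and completeness is enforced by an acceptance condition ensuring the guessed future sets are the \emph{least} sets closed under this propagation (equivalently, one adds ``demands'' that travel forward and must be discharged). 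This least-fixpoint-by-guessing-plus-global-acceptance is routine for CFMs but is the delicate point to get exactly right.

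The genuinely hard relation is $\conc$, and here I would \emph{not} try to prove Theorem~\ref{thm:Atype} from scratch but instead reduce it to the paper's announced main technical contribution --- a CFM that marks the events parallel to some event of a given type. Concretely, $\type_M^{\conc}(e)=\{(p,\sigma) : \exists f\text{ labelled }(p,\sigma),\ e\conc f\}$, and by the partition identity $E=\biguplus_{\bowtie}\apprelp{e}{\bowtie}$ we have $\type_M^{\conc}(e)=\bigl(\bigcup_{\bowtie}\type_M^{\bowtie}(e)\bigr)\setminus\bigl(\{\le,\ge\text{-reachable labels}\}\bigr)$; more usefully, for a fixed target label $(p,\sigma)$, whether $(p,\sigma)\in\type_M^{\conc}(e)$ is exactly the question ``is $e$ parallel to some $(p,\sigma)$-event'', which is what the Section~\ref{sec:parallel} construction answers. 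Running that construction in parallel for each of the $|\extSigma|\cdot|\Procs|$ possible target labels --- or, better, a single construction parameterised uniformly --- yields, at each event, the full set $\type_M^{\conc}(e)$; the cost is the stated doubly-exponential bound because the parallel-marking CFM is itself of that size and we take a bounded product. I expect the main obstacle to be precisely this $\conc$-component: it is the only one that is not a local or one-directional-fixpoint computation, and it is exactly why the paper isolates the parallel-marking problem as its core lemma. Once $\Atype$ is assembled from these pieces, a final product with a ``consistency checker'' (verifying that the eight guessed components indeed partition correctly and that $\type_M^{\bowtie}$ is computed soundly and completely) gives $L(\Atype)=\{(M,\type_M)\mid M\in\MSCs{\Procs}{\extSigma}\}$, and multiplying the state bounds gives $2^{|\extSigma|\cdot 2^{\mathcal{O}(|P|\log|P|)}}$ as required.
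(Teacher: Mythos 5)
Your overall decomposition is exactly the paper's: $\Atype$ is a product of \CFMs $\A^{\bowtie}$, one per relation in $\Rel$; the cases $=,\prel,\prel^{-1},\mrel,\mrel^{-1}$ are local (with the message carrying the sender's label); and the $\conc$-component is delegated to the parallel-marking construction of Section~\ref{sec:parallel}, run for each target label $(q,a)$. All of that matches.

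However, your treatment of $\strictless^{-1}$ contains a genuine error, and it is precisely the pitfall the paper's proof of Lemma~\ref{lem:cfmstrictless} is designed to avoid. You propose to maintain $\lambda(\rpast{e})=\{\lambda(f)\mid f<e\}$ and then recover $\type_M^{\strictless^{-1}}(e)$ ``by removing the (at most two) contributions of the direct $\prel$- and $\mrel$-predecessors''. This is incorrect: if $e$ has $\mrel$-predecessor $g$ with $\lambda(g)=(q,b)$, there may be \emph{another} $(q,b)$-labeled event $g'$ with $g'<e$ that is not a direct predecessor (see event $e$ in Figure~\ref{fig:partition}), in which case $(q,b)\in\type_M^{\strictless^{-1}}(e)$ but your subtraction deletes it. Knowing only the \emph{set} of past labels, the automaton cannot decide whether a predecessor's label should survive the subtraction. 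The paper's fix is to count occurrences of each label in $\rpast{e}$ up to threshold $2$: a predecessor's label belongs to $\type_M^{\strictless^{-1}}(e)$ iff it has been seen at least twice. The same issue would resurface in your $\strictless$-component when you subtract the direct successors from your guessed future sets. On that component you also take a harder road than necessary: rather than a guess-and-verify least-fixpoint propagation of future sets (whose soundness across message edges is exactly the delicate point you flag), the paper simply observes that \CFMs are closed under mirroring of the process and message relations and obtains $\A^{\strictless}$ as the mirror of $\A^{\strictless^{-1}}$. With the counting-up-to-two correction (and the mirror argument, or a fully worked-out version of your forward fixpoint), your plan becomes the paper's proof.
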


According to Lemma~\ref{lem:eval},
the \CFM for $\forall x.\exists y.  \formula$ (respectively, $\forall x.\forall
y.  \formula$) is obtained from $\Atype$ by restricting the transition relation:
We keep a transition of process $p$ with label $(\sigma,(\ttype_{\bowtie})_{\bowtie
\in \Rel}) \in \extSigma \times \Types$ if
$\rewr{\formula}{(p,\sigma)}{(p',\sigma')}{\bowtie}$ is $\true$ for some
(respectively, for all) ${\bowtie} \in \Rel$ and $(p',\sigma') \in \ttype_{\bowtie}$.
Moreover, the new transition label will just be $\sigma$ (the type is projected away).

\medskip

We obtain $\Atype$ as the product of \CFMs $\A^{\bowtie}$ over $\Procs$ and $\extSigma \times 2^{\Procs \times \extSigma}$ such that $L(\A^{\bowtie}) = \{(M,\type_M^{\bowtie}) \mid M \in \MSCs{\Procs}{\extSigma}\}$.
Thus, it only remains to construct $\A^{\bowtie}$, for all ${\bowtie} \in \Rel$.
The cases ${\bowtie} \in \{=,\prel,\mrel,\prel^{-1},\mrel^{-1}\}$ are straightforward and can be found in Appendix~\ref{app:Abowtie}.
Below, we show how to construct $\A^{\strictless^{-1}}$.
We then obtain $\A^{\strictless}$ by symmetry.
The case $\A^{\parallel}$ is more difficult and will be treated in the
next section.

\begin{lemma}\label{lem:cfmstrictless}
  There is a \CFM $\A^{\strictless^{-1}}$ over $\Procs$ and $\extSigma \times 2^{\Procs \times \extSigma}$ with $2^{\mathcal{O}(|P \times \extSigma|)}$ states
  such that $L(\A^{\strictless^{-1}}) = \{(M,\type_M^{\strictless^{-1}}) \mid M \in \MSCs{\Procs}{\extSigma}\}$.
\end{lemma}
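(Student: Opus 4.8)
The plan is to maintain, by message passing, a bounded amount of ``counting'' information about the causal past of each event, from which $\type_M^{\strictless^{-1}}(e)$ can be read off locally. The naive attempt --- let every process carry the set $\{\lambda(f) \mid f \le e\}$ of labels appearing in the causal past of its current event $e$, propagate it along messages, and add the current label at each step --- computes, for each $e$, the set $\{\lambda(f) \mid f < e\}$ of labels strictly below $e$; from it, $\type_M^{\strictless^{-1}}(e)$ is obtained by removing the labels of the (at most two) direct predecessors of $e$, via $\prel$ and via $\mrel$, \emph{but only when} such a label does not reoccur further below $e$. This exception is a multiplicity question rather than a set-membership one, and the main point of the construction is that the required bounded multiplicities are still \CFM-computable.

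Concretely, I would have each process store, after event $e$ of $M = (E,\prel,\mrel,\lambda)$, the pair $(\mu_e, \lambda(e))$, where $\mu_e(\ell) = \min\bigl(2,\ |\{f \in E \mid f \le e \text{ and } \lambda(f) = \ell\}|\bigr)$ for $\ell \in \Procs \times \extSigma$, and have every message sent at $e$ carry the same pair. The key observation is that for any process $r$ and any event $g$, the set $\{f \le g\} \cap E_r$ is downward closed in the chain $E_r$, hence an initial segment of it; so for any two events $g_1, g_2$ these segments are nested, and therefore so are, for each $\ell$, the sets of $\ell$-labelled events below $g_1$ and below $g_2$. Hence merging two views is idempotent: if $e$ is a receive event with $\prel$-predecessor $e_p$ and $\mrel$-predecessor $e_m$, then, writing $\mu_e^{<}$ for the variant of $\mu_e$ using $f < e$, we have $\{f < e\} = \{f \le e_p\} \cup \{f \le e_m\}$ and hence $\mu_e^{<}(\ell) = \max\bigl(\mu_{e_p}(\ell), \mu_{e_m}(\ell)\bigr)$ (componentwise maximum, not sum), and then $\mu_e(\ell) = \min\bigl(2,\ \mu_e^{<}(\ell) + [\ell = \lambda(e)]\bigr)$ because $e$ lies strictly above both predecessors. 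The cases where $e$ has only one direct predecessor, or none (the first, internal event of a process), are the same formulas with the missing predecessor's view read as the all-zero view.

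Next I would derive $\type_M^{\strictless^{-1}}(e)$ from $\mu_e^{<}$ together with the labels $\lambda(e_p), \lambda(e_m)$ of the direct predecessors. The useful point is that whenever $e$ has both a $\prel$- and an $\mrel$-predecessor, they lie on distinct processes, so $\lambda(e_p) \ne \lambda(e_m)$. A short case distinction then yields: $\ell \in \type_M^{\strictless^{-1}}(e)$ iff $\mu_e^{<}(\ell) \ge 2$, or both $\mu_e^{<}(\ell) \ge 1$ and $\ell \notin \{\lambda(e_p), \lambda(e_m)\}$. Indeed, a label carried by neither direct predecessor occurs $\strictless$-below $e$ iff it occurs strictly below $e$ at all; whereas $\lambda(e_p)$ (say) occurs $\strictless$-below $e$ iff some event other than $e_p$ strictly below $e$ carries it, which --- since $\lambda(e_p) \ne \lambda(e_m)$ --- just means $\lambda(e_p)$ occurs at least twice strictly below $e$.

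Finally I would assemble $\A^{\strictless^{-1}}$. Each process gets state set $\bigl(\{0,1,2\}^{\Procs \times \extSigma}\bigr) \times \bigl((\Procs \times \extSigma) \cup \{\bot\}\bigr)$ --- of size $2^{\mathcal{O}(|\Procs \times \extSigma|)}$, as claimed --- with $(\mathbf 0, \bot)$ as initial state, and messages ranging over $\{0,1,2\}^{\Procs \times \extSigma} \times \extSigma$. At an event labelled $(\sigma, T)$ the process computes $\mu_e^{<}$ by merging the view stored in its state with the one in the incoming message (if any), checks that $T$ equals the set prescribed above, and updates its state to $(\mu_e, \lambda(e))$; all global states are accepting. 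Since for every $M$ there is exactly one labelling admitting such a run, namely $\type_M^{\strictless^{-1}}$, this gives $L(\A^{\strictless^{-1}}) = \{(M, \type_M^{\strictless^{-1}}) \mid M \in \MSCs{\Procs}{\extSigma}\}$. I expect the delicate step to be the second one: the entire approach rests on the prefix-comparability of per-process causal pasts, which is exactly what lets bounded multiplicities survive the merge at receive events.
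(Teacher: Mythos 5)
Your proof is correct and follows essentially the same route as the paper's: propagate the labels of the causal past with multiplicities counted up to $2$, merge views at receive events, and then discount the direct $\prel$- and $\mrel$-predecessors using the fact that a predecessor's label lies in $\type_M^{\strictless^{-1}}(e)$ iff it occurs at least twice strictly below $e$. Your explicit justification of the componentwise-max merge via nestedness of per-process prefixes is a point the paper leaves implicit, but the construction and state bound are the same.
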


\begin{proof}
We sketch the idea, a detailed exposition can be found in Appendix~\ref{app:Astrictless}.
Consider Figure~\ref{fig:partition} and suppose $a,b,c \in \extSigma$.  At the
time of reading event $e$, the \CFM $\A^{\strictless^{-1}}$ should deduce
$\type_M^{\strictless^{-1}}(e)=\{(p,a),(p,c),(r,a),(q,a),(q,b)\} =: \ttype$.  To do
so, it collects all labelings from $\Procs \times \extSigma$ that it has seen in
the past (which is $\ttype$ when reading $e$).  Naively, one would then just remove
the labels $(p,a)$ and $(q,b)$ of the predecessors $f$ and $g$ of $e$.  However,
this leads to the wrong result, since both $(p,a)$ and $(q,b)$ are contained in
$\type_M^{\strictless^{-1}}(e)$.  In particular, there is another
$(q,b)$-labeled event $g' \in \apprelp{e}{\strictless^{-1}}$.  The solution is
to count the number of occurrences of each label up to $2$.  When reading $e$,
the \CFM will have seen $(p,a)$ and $(q,b)$ at least twice so that it can safely
conclude that both are contained in $\type_M^{\strictless^{-1}}(e)$.
\end{proof}



\section{Labels of Parallel Events}\label{sec:parallel}

In this section, we construct the \CFM $\A^{\conc}$ such that
$L(\A^{\conc}) = \{(M,\type_M^{\conc}) \mid M \in \MSCs{\Procs}{\extSigma}\}$.
This completes the proof of Theorem~\ref{thm:Atype} and, thus, of
Theorem~\ref{thm:main}. We obtain $\A^{\conc}$ as the product of
several \CFMs $\A_{p,q,a}$:

\begin{lemma}\label{lem:parallel}
  For all $p,q \in P$ with $p\neq q$ and $a \in \extSigma$, there is a \CFM $\A_{p,q,a}$ over $\Procs$ and $\extSigma \times \{0,1\}$ with $2^{2^{\mathcal{O}(|P|\log{|P|})}}$ states such that
\begin{multline*}
    L(\A_{p,q,a}) = \bigl\{ (M = (E,\prel,\mrel,\lambda),\gamma) \in \MSCs{\Procs}{\extSigma \times \{0,1\}} ~\mid \\
    \forall e \in E_p: \bigl(\gamma(e) = 1 \iff (q,a) \in \type_M^{\conc}(e)\bigr)\bigr\}\,.
\end{multline*}
\end{lemma}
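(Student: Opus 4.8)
\emph{Proof plan.}
Fix $p\neq q$ in $\Procs$ and $a\in\extSigma$. The starting point is a purely order-theoretic reformulation. For $e\in E_p$, the events of $E_q$ fall, along the linear order $\prel^*$ of process~$q$, into three consecutive blocks: the events $f$ with $f\le e$ form a prefix of $E_q$ (the set $\{f\in E_q\mid f\le e\}$ is $\prel^*$-downward closed), the events $f$ with $e\le f$ form a suffix (dually), and the remaining events, namely $\apprelp{e}{\conc}\cap E_q$, form the interval in between; these three blocks are disjoint because $p\neq q$. Writing $\ell(e):=\max\{j\mid (q,j)\le e\}$ (with $\ell(e):=0$ if there is no such~$j$) and $r(e):=\min\{j\mid e\le(q,j)\}$ (with $r(e):=|w_q|+1$ otherwise), we get $\ell(e)<r(e)$ and $\apprelp{e}{\conc}\cap E_q=\{(q,j)\mid \ell(e)<j<r(e)\}$, so that $\gamma(e)$ must be $1$ exactly when some $(q,j)$ with $\ell(e)<j<r(e)$ is $a$-labeled. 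Dually, for an $a$-labeled event $f=(q,j)\in E_q$, the events of $E_p$ parallel to $f$ form an interval of process~$p$, lying between the last event of $E_p$ below $f$ and the first event of $E_p$ above $f$; hence $\gamma$ is the indicator function, on $E_p$, of the union of these intervals over all $a$-labeled $f\in E_q$. Note finally that $\ell$ and $r$ are non-decreasing along process~$p$ (if $e\prel^* e'$ then $\rPast{e}\subseteq\rPast{e'}$ and $\rFuture{e'}\subseteq\rFuture{e}$), and symmetrically the endpoints of the dual intervals are non-decreasing along process~$q$; so all the relevant intervals slide monotonically.

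Based on this I would build $\A_{p,q,a}$ from three ingredients. A CFM $\Aleft$ deals with the left boundary $e\mapsto\ell(e)$, which depends only on $\rPast{e}$: at a join $e$ with $\prel$-predecessor $e_1$ and possible $\mrel$-predecessor $e_2$, the $q$-prefixes $\rPast{e_1}\cap E_q$ and $\rPast{e_2}\cap E_q$ are \emph{nested}, so a quantity such as $\min(2,\,|\{a\text{-labeled }f\in E_q\mid f\le e\}|)$ is computed deterministically by a capped maximum, and two nested $q$-prefixes with the same $a$-count below the cap have the same last $a$-event; using this, $\Aleft$ maintains (partly by guessing) enough of a pointer to $\ell(e)$ relative to the $a$-events of~$q$. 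A CFM $\Aright$ deals with the right boundary $e\mapsto r(e)$, which depends on $\rFuture{e}$; reversing an MSC (swap $\prel$ and $\mrel$, exchange senders with receivers) is again an MSC and CFM-recognizable languages are closed under this reversal, so $\Aright$ is obtained from a ``$\Aleft$-style'' machine, the data about $(q,r(e))$ being guessed and then confirmed by marks travelling along messages. Finally a CFM $\Aint$ combines the boundary data produced on process~$q$ — using that both boundaries move monotonically — to perform the finite-state test ``some $a$-labeled event lies strictly inside the current window'' and compares the result against the input bit~$\gamma$; the two implications of the claimed equivalence are separated into a soundness part ($\Lzeros$) and a completeness part ($\Lones$), handled symmetrically. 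Taking the product of $\Aleft$, $\Aright$, $\Aint$ and projecting away the auxiliary marks yields $\A_{p,q,a}$; a powerset-style bookkeeping of the propagated data, essentially sets of functions on $\Procs$ (of which there are $2^{\mathcal{O}(|P|\log|P|)}$), accounts for the stated $2^{2^{\mathcal{O}(|P|\log|P|)}}$ states, and explains the independence of $|\extSigma|$.

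The hard part — and the reason Section~\ref{sec:parallel} is needed in full — is the correctness of the guessed boundaries, which amounts to certifying \emph{negative} order facts: pinning $\ell(e)$ down precisely requires certifying $(q,\ell(e)+1)\not\le e$, and pinning $r(e)$ down requires $(q,r(e)-1)\not\ge e$, i.e.\ certifying that two specific events of the two processes are \emph{incomparable}. A single such incomparability can be certified by guessing a $\le$-downward-closed set of events containing one of the two events but not the other, whose downward closure a CFM checks locally and along each message; but one such certificate is needed for \emph{every} event of~$p$, i.e.\ unboundedly many, whereas a CFM attaches only a bounded amount of guessed data to each event. The way out is not to treat these facts one at a time but to exploit that $\ell$, $r$ and the dual interval endpoints slide monotonically: one encodes the boundary functions through their increments along $p$ and~$q$ and maintains a single bounded invariant on all processes, repaired along every message by using the FIFO condition, that simultaneously witnesses all the required incomparabilities. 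Designing this invariant so that it is finite-state yet exactly strong enough is the technical heart of the construction; once it is in place, assembling $\Aleft$, $\Aright$, $\Aint$ into $\A_{p,q,a}$ and checking the size estimate are routine.
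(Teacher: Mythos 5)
Your reformulation of $\type_M^{\conc}(e)\cap E_q$ as an interval $\{(q,j)\mid \ell(e)<j<r(e)\}$ with monotone boundaries is correct, and you have correctly located the crux: a \CFM must certify unboundedly many incomparability facts (one or two per event of $E_p$) while carrying only bounded data per event. But at exactly that point the proposal stops being a proof. The sentence ``one encodes the boundary functions through their increments \ldots and maintains a single bounded invariant \ldots that simultaneously witnesses all the required incomparabilities; designing this invariant so that it is finite-state yet exactly strong enough is the technical heart of the construction'' names the missing object without constructing it. That invariant \emph{is} the content of the lemma; everything you call ``routine'' afterwards depends on it. So there is a genuine gap, not a difference of route.

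For comparison, the paper's proof supplies two quite different mechanisms for the two directions, and your remark that $\Lzeros$ and $\Lones$ are ``handled symmetrically'' is not right. For $\Lzeros$ (no $0$-event of $E_p$ parallel to an $a$-event of $E_q$) only \emph{positive} order facts are needed, and all of them at once are certified by guessing a single path through $({\prel}\cup{\mrel})$ that visits every $0$-event of $p$ and every $(q,a)$-event: any two events on a common path are comparable, and conversely the relevant events are shown to be totally ordered, so a constant-size token-passing \CFM suffices. For $\Lones$ the paper does not track $\ell(e),r(e)$ per event at all; it dualizes, covering the $1$-events of $E_p$ by the intervals $\Parallelp{p}{f}$ for marked $f\in E_q$, first reducing to two families of pairwise disjoint, non-adjacent intervals ($F_1,F_2$). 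Each family is checked as $\Lint\cap\Lleft\cap\Lright$: $\Aint$ is another token-passing automaton enforcing $\Parallelp{p}{f_i}\subseteq[e_i,e'_i]$, and the reverse inclusion is reduced to finitely many conditions ``$\last{\p}{f_i}\notin[e_i,e'_i]$'' indexed by the at most $|P|!$ process sequences $\p$. The finite-state realization of these negative facts rests on two concrete devices you do not have: a deterministic \CFM that propagates, for every $\p\in\Paths$, the label of $\last{\p}{e}$ (a partial function from $\Paths$ to a finite set, forwarded along process and message edges), and a coloring of the intervals with $|\Paths_{p,q}|+1$ colors such that ``$\last{\p}{f_i}$ lies in an interval of the same color as $I_i$'' never happens for a correct input; this converts a positional non-membership test into a finite color-mismatch test. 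Without these (or an equivalent substitute for your undesigned ``invariant''), the proposal does not establish the lemma, and the size bound $2^{2^{\mathcal{O}(|P|\log|P|)}}$ remains unjustified.
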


The rest of this section is devoted to the proof of Lemma~\ref{lem:parallel}.

\medskip

Fix $p,q \in P$ ($p \neq q$) and $a \in \extSigma$. We construct $\A_{p,q,a}$ as the product (intersection) of two \CFMs
$\Azeros$ and $\Aones$ over $\Procs$ and $\extSigma \times \{0,1\}$, recognizing respectively the languages
\[\begin{array}{l}
\Lzeros = \bigl\{ (M = (E,\prel,\mrel,\lambda),\gamma) ~\mid~ \forall e \in E_p: \bigl(\gamma(e) = 0 \implies (q,a) \not\in \type_M^{\conc}(e)\bigr)\bigr\} \text{~~and}\\[1.5ex]
\Lones = \bigl\{ (M = (E,\prel,\mrel,\lambda),\gamma) ~\mid~ \forall e \in E_p: \bigl(\gamma(e) = 1 \implies (q,a) \in \type_M^{\conc}(e)\bigr)\bigr\}\,.\end{array}\]

\subsection{Construction of $\boldsymbol{\Azeros}$}

We first turn to the easier case of building $\Azeros$. Essentially, $\Azeros$ has to guess a path in an MSC that covers all $0$-events on $p$ as well as all $(q,a)$-events on $q$.

\begin{lemma}
  \label{lem:zeros}
  Let $(M,\gamma) \in \MSCs{\Procs}{\extSigma \times \{0,1\}}$ be an MSC with  $M = (E,\prel,\mrel,\lambda) \in \MSCs{\Procs}{\extSigma}$ and $\gamma : E \to \{0,1\}$.
  The following are equivalent:
  \begin{enumerate}
  \item $(M,\gamma) \in \Lzeros$.
  \item There is a path $\chemin$ in $M$
    (i.e., a path in the directed graph $(E, {\prel} \cup {\mrel})$)
    such that
    all events $e$ on process $p$ with $\gamma(e) = 0$
    and all events $f$ such that $\labloc(f) = (q,a)$ are on~$\chemin$.
  \end{enumerate}
\end{lemma}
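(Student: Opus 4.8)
The plan is to reduce the claim to a simple order-theoretic observation: a finite set $S \subseteq E$ is contained in (the vertex set of) some path of the directed graph $(E, {\prel} \cup {\mrel})$ if and only if $S$ is a chain for the partial order $\le$. One direction is immediate, since ${\prel} \cup {\mrel} \subseteq {\le}$ and $\le$ is transitive, so any two vertices lying on a common path are $\le$-comparable. For the converse, enumerate the chain as $g_1 < g_2 < \dots < g_k$; because ${<} = ({\prel}\cup{\mrel})^{+}$, each consecutive pair $g_i, g_{i+1}$ is joined by a path in $(E, {\prel}\cup{\mrel})$, and concatenating these $k-1$ paths yields a single path visiting all of $g_1,\dots,g_k$ (for $k \le 1$ any one-vertex path, which exists since $E \neq \emptyset$, does the job).

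Granting this, I would apply it to $S = \{e \in E_p \mid \gamma(e)=0\} \cup \{f \in E \mid \labloc(f) = (q,a)\}$. The first set lies entirely on process $p$ and the second entirely on process $q$; since the events of a single process are linearly ordered by $\prel^{*}$, each of the two sets is already a $\le$-chain. Hence $S$ is a chain exactly when every $e \in E_p$ with $\gamma(e)=0$ is $\le$-comparable with every $f$ satisfying $\labloc(f)=(q,a)$, i.e. when $\neg(e \conc f)$ for all such $e,f$. Unwinding the definitions of $\conc$ and of $\type_M^{\conc}$, this last condition is precisely the statement that $(q,a) \notin \type_M^{\conc}(e)$ for every $e \in E_p$ with $\gamma(e) = 0$, that is, $(M,\gamma) \in \Lzeros$. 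Combining with the observation above, $(M,\gamma) \in \Lzeros$ iff $S$ is a chain iff $S$ is covered by a path, which is statement~(2) (if $S = \emptyset$ the covering requirement is vacuous).

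I do not anticipate a real obstacle here: the content is the chain/path correspondence, and the rest is bookkeeping. The only points needing a line of care are the degenerate cases ($E_p$ has no $0$-labeled event, or there is no $(q,a)$-event), handled by allowing trivial single-vertex paths, and the use of acyclicity of ${\prel}\cup{\mrel}$ — which guarantees that a path is simple and its vertices are even pairwise $<$-comparable, although mere $\le$-comparability already suffices for both implications.
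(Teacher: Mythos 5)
Your proof is correct and follows essentially the same route as the paper's: both directions reduce to the observation that the relevant set of events is a union of two per-process chains, which is covered by a path exactly when its elements are pairwise $\le$-comparable, i.e., when no $0$-labelled $p$-event is parallel to a $(q,a)$-event. The paper phrases the forward direction with a slightly larger set (all $q$-events not parallel to any $0$-labelled $p$-event), but the argument is the same.
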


\begin{proof}
  We first show 1.\ $\implies$ 2.
  Let $E'_p = \{ e \in E_p \mid \gamma(e) = 0\}$
  and $E'_q = \{ f \in E_q \mid \lnot (\exists e \in E'_p: e \conc f) \}$.
  By assumption, $E'_q$ contains all events 
  $f$ such that $\labloc(f) = (q,a)$.  Let $E' = E'_p \cup E'_q$.  For all
  events $e, f \in E'$, either $e$ and $f$ are on the same process, or one is in
  $E'_p$ and the other in $E'_q$; in both cases, we have either $e \le f$ or $f
  \le e$.  So events in $E'$ are totally ordered wrt.\
  ${\leq}=({\prel}\cup{\mrel})^*$.  Hence there exists a path in $M$ connecting
  all events of~$E'$.
  
  Now assume Condition 2.\ is satisfied. Let $e$ be some event
  on process $p$ such that $\gamma(e) = 0$. Let $f$ be any event such that
  $\labloc(f) = (q,a)$.  By definition,
  both $e$ and $f$ are on path $\chemin$, so either $e \le f$ or $f \le e$.
  Thus, $e$ is not parallel to $f$. We deduce $(q,a) \not\in \type_M^{\conc}(e)$ and, therefore, $(M,\gamma) \in \Lzeros$.
\end{proof}

\begin{lemma}
  \label{lem:Azeros}
  There is a \CFM $\Azeros$ with a constant number of states such that $L(\Azeros) = \Lzeros$.
\end{lemma}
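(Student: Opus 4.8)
The plan is to realize the characterization of Lemma~\ref{lem:zeros} operationally: $\Azeros$ nondeterministically guesses a path $\chemin$ in the directed graph $(E,{\prel}\cup{\mrel})$ and checks that it passes through every $0$-event on $p$ and every $(q,a)$-event on $q$. First I would have each process run a Boolean flag $\mathit{onpath}\in\{0,1\}$ indicating whether the current event is claimed to lie on $\chemin$. The guessed path is a sequence of events linked by ${\prel}$ and ${\mrel}$ steps; since ${\prel}$-edges stay within a process and ${\mrel}$-edges cross between processes, a guessed path naturally decomposes, on each process, into one contiguous block of events marked $\mathit{onpath}=1$ (possibly empty), entered either as the very first event of that process or via a message reception, and left either at the last event of that process or via a message emission. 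So on process $r$, the automaton's state records a small piece of information: whether it is currently ``on the path'' and whether it has already been on the path and left it (to forbid re-entering). Three phases suffice: before-path, on-path, after-path. When a message is sent, the sender records in the message whether that send event is the exit point of the path segment; the receiver must then treat its reception as the entry point of its own path segment, and vice versa, so the path is threaded consistently across the single message edge it uses to change process. A single designated process (or, using the freedom of several global initial states mentioned in the definition of a \CFM, a guessed ``origin'' process) starts the path at its first event.

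Next I would encode the coverage obligations locally. On process $p$: whenever an event has label $(\sigma,0)\in\extSigma\times\{0,1\}$ with $\gamma$-component $0$, the automaton must be in the on-path phase; otherwise it rejects. On process $q$: whenever an event has $\Sigma$-component $a$ (i.e.\ $\labloc$ is $(q,a)$ in the terminology of the lemma), the automaton must likewise be in the on-path phase. All other events impose no constraint. The acceptance condition in $\Acc$ checks only that the phases are globally consistent — in particular that exactly one process started the path and that every ``exit via send'' was matched by an ``entry via receive''; this matching is already enforced transition-by-transition through the message contents, so $\Acc$ really only needs to verify that the path was started. The number of states per process is a small constant: roughly ``phase $\in\{$before, on, after$\}$'' times one bit to remember whether the incoming message opened the path segment, which is independent of $|\Procs|$, $|\Sigma|$, and the guessed alphabets.

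Correctness is then immediate from Lemma~\ref{lem:zeros}: an accepting run of $\Azeros$ on $(M,\gamma)$ exhibits exactly a path $\chemin$ through all required events, so $L(\Azeros)\subseteq\Lzeros$; conversely, given $(M,\gamma)\in\Lzeros$, Lemma~\ref{lem:zeros} yields such a path $\chemin$, along which one reads off the $\mathit{onpath}$ flags and phase information to build an accepting run, so $\Lzeros\subseteq L(\Azeros)$. I do not expect a genuine obstacle here; the only point requiring a little care is the bookkeeping that makes ``$\chemin$ is a single path'' — rather than an arbitrary union of path fragments — provably enforced, namely that each process contributes one contiguous on-path block, that a process is entered at most once along $\chemin$, and that the send/receive markers glue consecutive fragments along genuine ${\mrel}$-edges. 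This is a routine but slightly fiddly finite-state invariant, and it is the reason the formal construction is relegated to the appendix while the main text gives only Lemma~\ref{lem:zeros}.
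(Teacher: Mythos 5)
There is a genuine gap in your construction: you assert that a path in $(E,{\prel}\cup{\mrel})$ ``naturally decomposes, on each process, into one contiguous block'' of on-path events, and you hard-wire this into the automaton via the three phases before/on/after and the prohibition on re-entering. This is false, and enforcing it destroys the inclusion $\Lzeros\subseteq L(\Azeros)$. A path may leave a process and return to it later (skipping some of its events in between), and for $\Lzeros$ this is unavoidable: the $0$-events on $p$ and the $(q,a)$-events on $q$ that the path must cover are totally ordered, but they may alternate between the two processes. Concretely, take $E_p=\{e_1,e'_1,e_2\}$ with $e_1\prel e'_1\prel e_2$ and $E_q=\{f_1,f'_1,f_2\}$ with $f_1\prel f'_1\prel f_2$, message edges $e_1\mrel f_1$, $f'_1\mrel e'_1$, $e_2\mrel f_2$ (FIFO is respected), and labels $\gamma(e_1)=\gamma(e_2)=0$, $\gamma(e'_1)=1$, $\labloc(f_1)=\labloc(f_2)=(q,a)$, $\labloc(f'_1)\neq(q,a)$. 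Then $e_1<f_1<e_2<f_2$, so no $0$-event of $p$ is parallel to an $a$-event of $q$ and the MSC is in $\Lzeros$; but any path covering $e_1,e_2,f_1,f_2$ must visit them in the order $e_1,f_1,e_2,f_2$, hence must enter each of $p$ and $q$ twice, in two non-adjacent blocks. Your automaton rejects this MSC.

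The fix is to drop the phase bookkeeping entirely, which is what the paper does: each process keeps a single bit (``I currently hold the token''), the token may be handed to the $\prel$-successor or shipped inside a message, and a process may receive the token again after having given it away. Since exactly one token exists at any time (it is created only at the unique initial holder and is conserved by every transition), the set of events holding the token is automatically a single path --- no ``entered at most once per process'' invariant is needed, and none should be imposed. With that change, the rest of your argument (the local obligations on $0$-events of $p$ and $a$-events of $q$, and the appeal to Lemma~\ref{lem:zeros} in both directions) goes through, with a constant number of states per process.
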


\begin{proof}
  The \CFM $\Azeros$ will try to guess a path $\chemin$ as in
  Lemma~\ref{lem:zeros}.
  This path is represented by a token moved along the MSC.
  Initially, 
  exactly one process has the token. At each event, the automaton
  may chose to pass along the token to the next event of the current process,
  or (if the event is a write) to send the token to another process.
  Formally, (non)-possession of the token is represented by two states,
  $s_{\mathsf{token}}$ and $s_{\overline {\mathsf{token}}}$, and movements of the
  token from one process to another by messages. 
  All global states are accepting.

  Process~$p$ may read an event labeled $0$ only if it has the token,
  and process~$q$ may read $a$'s only if it has the token,
  so that the path along which the token is moved contains all events $e$
  on process $p$ such that $\gamma(e) = 0$, and all events $f$ such that
  $\labloc(f) = (q,a)$.
  
  Clearly, $\Azeros$ has an accepting run on $M$ iff there exists a
  path in $M$ as described in Lemma~\ref{lem:zeros}.
\end{proof}

\subsection{Construction of $\boldsymbol{\Aones}$}

Let $M = (E,\prel,\mrel,\lambda)$ be an MSC. For $e \in E$ and $F \subseteq E$,
let $\Parallelp p e = \{f \in E_p \mid f \conc e\}$ and $\Parallelp p F = \{e \in E_p \mid e \conc f$ for some $f \in F\}$.
Moreover, given $e \in E$, define
$\rPastp p e  = \{f \in E_p \mid f < e\}$ and
$\rFuturep p e  = \{f \in E_p \mid e < f\}$.
An \emph{interval} in $M$ is
a (possibly empty) finite set of events $\{e_1, \ldots, e_k\}$ such that
$e_1 \prel \cdots \prel e_k$.
For all $e,f \in E_p$, we denote by $[e,f]$ the interval
$\{g \in E_p \mid e \le g \le f\}$.

\begin{remark}
  For all $p \in \Procs$ and $e \in E$, the sets $\rPastp p e$, $\Parallelp p e$, and
  $\rFuturep p e$ are intervals (possibly empty) of events on process $p$,
  such that $E_p = \rPastp p e \uplus \Parallelp p e \uplus \rFuturep p e$.
\end{remark}

The idea is that $\Aones$ will guess a set of intervals covering all
$1$-labeled events on process~$p$, and check that, for each interval $I$,
there exists an event $f$ 
such that $\labloc(f) = (q,a)$
and $I = \Parallelp p f$.

We first show that it will be sufficient for $\Aones$ to guess \emph{disjoint}
intervals (or more precisely, two sequences of disjoint intervals):
\begin{lemma}
  Let $M = (E,\prel,\mrel,\labloc) \in \MSCs{\Procs}{\extSigma}$
  and $F = \{f \in E \mid \labloc(f) = (q,a)\}$.
  There exist subsets $F_1,F_2 \subseteq F$
  such that the following hold:
  \begin{itemize}
  \item
    $\Parallelp p {F_1} \cup \Parallelp p {F_2} = \Parallelp p {F}$.
  \item For $i \in \{1,2\}$, the intervals in $\Parallelp p {F_i}$ are
    pairwise disjoint, and not adjacent: if $f,f' \in F_i$ and $f \neq f'$,
    then $\Parallelp p f \cup \Parallelp p {f'}$ is not an interval.
  \end{itemize}
\end{lemma}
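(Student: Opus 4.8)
The statement is purely combinatorial: given the set $F$ of all $(q,a)$-events on process $q$, we want to split it into two parts $F_1, F_2$ so that each part yields a collection of pairwise-disjoint, non-adjacent $p$-intervals $\Parallelp p f$, while together still covering $\Parallelp p F$. The key structural fact to exploit is the Remark: for each $f \in F$, the set $\Parallelp p f$ is an \emph{interval} of $E_p$, and moreover $E_p = \rPastp p f \uplus \Parallelp p f \uplus \rFuturep p f$, so each $\Parallelp p f$ is determined by its two endpoints (equivalently, by a pair of ``cut positions'' on the line $E_p$). The plan is to order the events of $F$ and then do a greedy two-coloring.

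\textbf{Key steps.} First I would observe a monotonicity property among the intervals $\{\Parallelp p f \mid f \in F\}$. Since $F \subseteq E_q$, any two events $f, f' \in F$ are $\prel$-comparable, say $f \prel^* f'$. I would show this implies $\rPastp p f \subseteq \rPastp p {f'}$ and $\rFuturep p {f'} \subseteq \rFuturep p f$ — i.e., as $f$ moves forward on $q$, its past-on-$p$ grows and its future-on-$p$ shrinks (both because $\le$ is transitive: $g < f \prel^* f'$ gives $g < f'$, and symmetrically). Consequently the left endpoint and the right endpoint of the interval $\Parallelp p f = E_p \setminus (\rPastp p f \cup \rFuturep p f)$ are both nondecreasing along the $\prel$-order of $F$. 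So, enumerating $F = \{f_1, \prel^* f_2 \prel^* \cdots \prel^* f_n\}$, the intervals $I_j := \Parallelp p {f_j}$ have nondecreasing left endpoints and nondecreasing right endpoints (``staircase'' family of intervals on a line).

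\textbf{The two-coloring.} For such a staircase family, two colors suffice to make each color class pairwise disjoint and non-adjacent: I would process $f_1, \dots, f_n$ in order and assign $f_j$ to $F_1$ or $F_2$ so that $I_j$ does not touch (overlap with or be adjacent to) the most recently chosen interval of the same color. Because left endpoints are monotone, once $I_j$ is disjoint-and-not-adjacent from the last interval of its color, all \emph{earlier} intervals of that color are also disjoint-and-not-adjacent from $I_j$ (they lie even further left), so the color class stays good. The greedy invariant to maintain is: the last interval placed in $F_1$ and the last interval placed in $F_2$ are never simultaneously ``touching'' $I_j$ — and this holds because if both the last $F_1$-interval and the last $F_2$-interval touched $I_j$, then (using monotonicity and the fact that consecutive-in-$F$ intervals $I_{j-1}, I_j$ differ) one of those two must actually be $I_{j-1}$ itself or overlap it in a way forcing $I_{j-1}$ and $I_j$ to coincide, a case we can handle by putting equal intervals in the same class. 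I would need to be slightly careful to formalize ``touching'' (sharing an event, or $\Parallelp p f \cup \Parallelp p {f'}$ being an interval means their union has no gap) and to handle empty intervals $\Parallelp p f = \emptyset$ (which can be dumped into either class freely, since the empty set is disjoint from everything and its union with anything is still an interval iff the other is — so I would simply exclude the $f$ with $\Parallelp p f = \emptyset$ from both $F_1$ and $F_2$, which is fine for the covering condition since they contribute nothing). Finally, $\Parallelp p {F_1} \cup \Parallelp p {F_2} = \Parallelp p F$ holds because every $f \in F$ with $\Parallelp p f \neq \emptyset$ gets a color.

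\textbf{Main obstacle.} The real work is the monotonicity lemma and then verifying that the staircase structure genuinely forces the two-color greedy to succeed — in particular pinning down the ``adjacency'' condition $\Parallelp p f \cup \Parallelp p{f'}$ not being an interval and checking it is equivalent to saying there is at least one event of $E_p$ strictly between the two intervals. I expect no serious difficulty beyond bookkeeping, since the one-dimensional staircase picture makes the combinatorics transparent; the subtle points are (i) getting the endpoint-monotonicity right from transitivity of $\le$, and (ii) the degenerate cases where several $I_j$ coincide or are empty.
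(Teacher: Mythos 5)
Your monotonicity observation is correct and is the right structural fact: enumerating $F$ along $\prel$ on process $q$, the intervals $I_j = \Parallelp{p}{f_j}$ form a staircase with nondecreasing left and right endpoints. However, the greedy two-colouring of the \emph{whole} (nonempty-interval) family fails, and the invariant you state to rescue it is false. Consider three consecutive events of $F$ whose intervals are $[1,5]$, $[3,7]$, $[5,9]$ (realizable: take $\rPastp{p}{f_1}=\emptyset$, $\rPastp{p}{f_2}=[1,2]$, $\rPastp{p}{f_3}=[1,4]$, arranged via a few messages respecting FIFO). These three intervals pairwise overlap, so \emph{no} assignment of all three to two classes makes each class pairwise disjoint; at the third step both ``last intervals'' of the two colours touch $I_j$, and neither coincides with $I_{j-1}$, contradicting your claimed invariant. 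Your fallback of ``putting equal intervals in the same class'' would also violate the lemma as stated, since for $f \neq f'$ with $\Parallelp{p}{f} = \Parallelp{p}{f'} \neq \emptyset$ the union is an interval.

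The missing idea is a pruning step before colouring: since the first bullet only requires covering $\Parallelp{p}{F}$, not using every element of $F$, first shrink $F$ to a subset $F'$ with $\Parallelp{p}{F'} = \Parallelp{p}{F}$ that is minimal, i.e., no $f \in F'$ satisfies $\Parallelp{p}{f} \subseteq \Parallelp{p}{F' \setminus \{f\}}$ (in the example, drop the middle event because $[3,7] \subseteq [1,5] \cup [5,9]$). Minimality combined with your staircase monotonicity is exactly what makes two colours suffice: if $f < f' < f''$ in $F'$ and $\Parallelp{p}{f} \cup \Parallelp{p}{f''}$ were an interval, then $\Parallelp{p}{f'}$ would be contained in that union, contradicting minimality; hence each element of $F'$ ``conflicts'' (union forming an interval) with at most one later and at most one earlier element, the conflict graph is a disjoint union of paths, and paths are $2$-colourable. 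This is precisely the route the paper takes; without the pruning step your construction cannot go through.
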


\begin{proof}
  We first construct a set $F'\subseteq F$ by iteratively removing events from
  $F$, until there remains no event $f$ such that
  $\Parallelp p {f} \subseteq \Parallelp p {F' \setminus \{f\}}$.
  This ensures that, for each event $f\in F'$,
  there is at most one event $f'\in F'$ such that $f<f'$ and 
  $\Parallelp{p}{f}\cup\Parallelp{p}{f'}$ is an interval.
  Indeed, consider three events $f,f',f''\in E_q$ such that $f<f'<f''$ and 
  $\Parallelp{p}{f}\cup\Parallelp{p}{f''}$ is an interval. Then, 
  $\Parallelp{p}{f'}\subseteq\Parallelp{p}{f}\cup\Parallelp{p}{f''}$ and these 
  three events cannot all be in $F'$.
  
  Since, for each event $f\in F'$, there is at most one event $f'\in F'$ such
  that $f<f'$ and $\Parallelp{p}{f}\cup\Parallelp{p}{f'}$ is an interval, the
  set $F'$ can be divided into two sets $F_1$ and $F_2$ satisfying the
  requirements of the lemma.
\end{proof}

So, $\Aones$ will proceed as follows.
It will guess the sets $F_1$, $F_2$, $\Parallelp p {F_1}$ and
$\Parallelp p {F_2}$,
that is, label some events on process $q$ with ``$F_1$'' or ``$F_2$'',
and some events on process $p$ with ``$F_1$'' and/or ``$F_2$''.
This labeling must be such that on process $q$, only events initially labeled
$a$ may be labeled ``$F_1$'' or ``$F_2$'' (the sets guessed for $F_1$ and $F_2$
contain only events labeled~$a$), and that on process $p$,
all events initially labeled $1$ must be labeled either ``$F_1$'', ``$F_2$'',
or both (the sets guessed for $\Parallelp p {F_1}$ and $\Parallelp p {F_2}$ cover
all events labeled $1$ on process~$p$).
Then, $\Aones$ will check in parallel that both sets of marked events
(that is, either with ``$F_1$'', or with ``$F_2$'') satisfy the following
property: for every non-empty maximal interval $I$ of marked events on process $p$,
there exists a marked event $f$ on process $q$ such that $I = \Parallelp p f$.
Clearly, if $\Aones$ has an accepting run on $M$, then $M \in \Lones$.
Conversely, if $M \in \Lones$, then if $\Aones$ guesses correctly the sets
$F_1$, $F_2$, $\Parallelp p {F_1}$ and $\Parallelp p {F_2}$, it accepts.

\paragraph{The MSC language $\boldsymbol{\Lo}$.}

The different labelings $F_1$ and $F_2$ can be dealt with by two separate \CFMs so that we can restrict to a single labeling. More precisely, we will henceforth consider MSCs $(M,\gamma)$ with $\gamma: E \to \{0,1\}$ where the $1$-labeled events form a collection of maximal intervals on process $p$ and a set of events on process $q$. Now, the construction of $\Aones$ boils down to the construction of an automaton $\Ao$ recognizing the language $\Lo$: Let $\Lo$ be the set of MSCs $(M,\gamma)$ with $\gamma : E \to \{0,1\}$ such that
\begin{itemize}
\item for each non-empty maximal interval $I$ of $1$-labeled events on process $p$, there exists
a $1$-labeled event $f$ on process $q$ such that $\Parallelp p f = I$, and
\item conversely, for all $1$-labeled events $f$ on process $q$,
there exists a non-empty maximal interval $I$ of $1$-labeled events on process $p$
such that $\Parallelp p f = I$.
\end{itemize}
Note that we include the second condition only for technical reasons.

\medskip

We can decompose this problem one last time.
Let $\Paths$ (respectively, $\Paths_{p,q}$) be the set of process sequences
$\p = p_1 \ldots p_n$ (respectively, with $p_1 = p$ and $p_n = q$) such that
$n \ge 1$ and $p_i \neq p_j$ for $i \neq j$.
For all $\p=p_1 \ldots p_n \in \Paths$, we write $e \le_\p f$ if there exist events
$e = e_1, f_1, e_2, f_2, \ldots, e_n, f_n = f$ such that, for all $i$,
we have $e_i,f_i \in E_{p_i}$, $e_i \prel^* f_i$, and $f_i \mrel e_{i+1}$.
For all events $e \in E$ such that $\{ f \in E \mid f \le_\p e \}$
(respectively, $\{ f \in E \mid e \le_\p f \}$) is non-empty, we let
\[
  \last \p e  = \max \{ f \in E \mid f \le_\p e \} \quad\text{~and~}\quad
  \first \p e  = \min \{ f \in E \mid e \le_\p f \} \, .
\]
This is well-defined since all events in $\{ f \in E \mid f \le_\p e \}$
(respectively, $\{ f \in E \mid e \le_\p f \}$) are on the same process, hence are 
ordered.
Note that, if $\p = p$ consists of a single process, then, for all $e \in E_p$,
we have $\last \p e = e = \first \p e$.
Moreover, notice that ${\le} = \bigcup_{\p \in \Paths} \le_\p$.

\medskip

Let $\Lint$ be the set of MSCs $(M,\gamma)$
    where the mapping $\gamma : E \to \{0,1\}$ defines (non-empty maximal) intervals
    ${[e_1,e_1']}, \ldots, {[e_k,e_k']}$ of $1$-labeled events on process $p$ and
    a sequence of $1$-labeled events $f_1<\cdots<f_k$ on process $q$, such that,
    for all $1 \le i \le k$, we have $\rPastp p {e_i} \subseteq \rPastp p {f_i}$
    and $\rFuturep p {e'_i} \subseteq \rFuturep p {f_i}$. This is illustrated in Figure~\ref{fig:Aint}. Note that $\Lo \subseteq \Lint$. The converse inclusion does not hold in general, since the intervals in MSCs from $\Lint$ may be too large. However, we obtain $\Lo$ when we restrict $\Lint$ further to the intersection of the following two languages:
\begin{itemize}

\item $\Lleft$ is the set of all MSCs in $\Lint$ such that, for all
    $1 \le i \le k$ and $\p \in \Paths_{p,q}$, if $\last \p {f_i}$
    is defined, then $\last \p {f_i} \notin {[e_i,e'_i]}$.
    
\item $\Lright$ is the set of all MSCs in $\Lint$ such that, for all
    $1 \le i \le k$ and $\p \in \Paths_{q,p}$, if $\first \p {f_i}$
    is defined, then $\first \p {f_i} \notin {[e_i,e'_i]}$.
\end{itemize}

\begin{lemma}
  We have $\Lo = \Lleft \cap \Lright$.
\end{lemma}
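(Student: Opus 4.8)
The plan is to prove both inclusions by working entirely inside $\Lint$, which is legitimate since $\Lleft,\Lright\subseteq\Lint$ by definition and $\Lo\subseteq\Lint$ was already noted. So fix $(M,\gamma)\in\Lint$, with maximal intervals $[e_1,e_1'],\ldots,[e_k,e_k']$ of $1$-labeled events on $p$ and $1$-labeled events $f_1<\cdots<f_k$ on $q$, matched as in the definition of $\Lint$. The first, easy, observation is that the two inclusion conditions defining $\Lint$ already force $\Parallelp p {f_i}\subseteq[e_i,e_i']$ for every $i$: using the partition $E_p=\rPastp p {f_i}\uplus\Parallelp p {f_i}\uplus\rFuturep p {f_i}$, an event $g\in E_p$ strictly before $e_i$ lies in $\rPastp p {e_i}\subseteq\rPastp p {f_i}$, so $g<f_i$; an event strictly after $e_i'$ lies in $\rFuturep p {e_i'}\subseteq\rFuturep p {f_i}$, so $f_i<g$; in both cases $g\notin\Parallelp p {f_i}$.

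The heart of the argument is to characterise the $\Lleft$ and $\Lright$ conditions in terms of these intervals. I claim that, for $(M,\gamma)\in\Lint$: (a) $(M,\gamma)\in\Lleft$ iff for every $i$ no event $g\in[e_i,e_i']$ satisfies $g\le f_i$; and (b) $(M,\gamma)\in\Lright$ iff for every $i$ no event $g\in[e_i,e_i']$ satisfies $f_i\le g$. Both use the identity ${\le}=\bigcup_{\p\in\Paths}{\le_\p}$. For (a): if some $g\in[e_i,e_i']$ has $g\le f_i$ then, since $g\in E_p$, $f_i\in E_q$ and $p\ne q$, there is $\p\in\Paths_{p,q}$ with $g\le_\p f_i$; then $\last \p {f_i}$ is defined and $e_i\le g\le\last \p {f_i}$, while $\last \p {f_i}<f_i$ (different processes), and if $e_i'<\last \p {f_i}$ then $\last \p {f_i}\in\rFuturep p {e_i'}\subseteq\rFuturep p {f_i}$, i.e.\ $f_i<\last \p {f_i}$, a contradiction, so $\last \p {f_i}\in[e_i,e_i']$, violating the $\Lleft$ condition. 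Conversely, if the $\Lleft$ condition fails for some $i$ and $\p\in\Paths_{p,q}$, then $\last \p {f_i}\in[e_i,e_i']$ and $\last \p {f_i}\le f_i$ provides the required $g$. The proof of (b) is symmetric: it uses $\Paths_{q,p}$, $\first \p {f_i}$, the strict inequality $f_i<\first \p {f_i}$, and the inclusion $\rPastp p {e_i}\subseteq\rPastp p {f_i}$ to pin $\first \p {f_i}$ inside $[e_i,e_i']$.

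Putting (a) and (b) together with the first observation, $(M,\gamma)\in\Lleft\cap\Lright$ iff for every $i$ every event of $[e_i,e_i']$ is parallel to $f_i$, i.e.\ $[e_i,e_i']\subseteq\Parallelp p {f_i}$, i.e.\ (since $\Parallelp p {f_i}\subseteq[e_i,e_i']$ always) $\Parallelp p {f_i}=[e_i,e_i']$. It remains to see that, within $\Lint$, the property ``$\Parallelp p {f_i}=[e_i,e_i']$ for all $i$'' is exactly membership in $\Lo$. If it holds, then each maximal interval $[e_i,e_i']$ has the witnessing event $f_i$ and each $1$-labeled $q$-event $f_i$ has the witnessing interval $[e_i,e_i']$, so $(M,\gamma)\in\Lo$. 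Conversely, if $(M,\gamma)\in\Lo$ (hence in $\Lint$), then by the second clause of $\Lo$ each $\Parallelp p {f_i}$ equals some maximal interval $[e_j,e_j']$; since also $\Parallelp p {f_i}\subseteq[e_i,e_i']$, the two maximal intervals $[e_j,e_j']$ and $[e_i,e_i']$ are nested and therefore equal, giving $\Parallelp p {f_i}=[e_i,e_i']$. This establishes $\Lo=\Lleft\cap\Lright$.

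The step that needs the most care is the characterisation (a)/(b): a priori $\last \p {f_i}$ could lie outside $[e_i,e_i']$, and confining it to the interval genuinely relies both on the strictness $\last \p {f_i}<f_i$ (which holds because $\p$ goes between two distinct processes) and on the relevant one of the two inclusions in the definition of $\Lint$. The remaining steps are routine manipulations of the partition of $E_p$ determined by $f_i$ and of the decomposition ${\le}=\bigcup_{\p}{\le_\p}$.
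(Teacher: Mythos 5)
Your proof is correct and follows essentially the same route as the paper's: both directions come down to the decomposition ${\le}=\bigcup_{\p}{\le_\p}$, the strict inequalities $\last \p {f_i}<f_i$ (resp.\ $f_i<\first \p {f_i}$), and the two inclusions defining $\Lint$ to pin $\last \p {f_i}$ (resp.\ $\first \p {f_i}$) inside $[e_i,e_i']$. You are merely more explicit on two points the paper leaves implicit, namely that membership in $\Lint$ already forces $\Parallelp p {f_i}\subseteq[e_i,e_i']$, and that for $(M,\gamma)\in\Lo$ the witnesses must be the order-matched ones, so that $\Parallelp p {f_i}=[e_i,e_i']$.
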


\begin{proof}
  Let $(M,\gamma) \in \Lo$ and $i \in \{1,\ldots,k\}$. 
  By definition, $[e_i,e'_i] = \Parallelp p {f_i}$.
  Since $[e_i,e_i']$ is non-empty, we have $\rPastp p {e_i} = \rPastp p {f_i}$
  and $\rFuturep p {e_i'} = \rFuturep p {f_i}$.
  Hence, $(M,\gamma) \in \Lleft \cap \Lright$.

  Now, let $(M,\gamma) \in \Lleft \cap \Lright$.
  Since $(M,\gamma) \in \Lint$, we have $\Parallelp p {f_i} \subseteq [e_i,e_i']$ for all $i$.
  Assume that there is $e \in [e_i,e'_i]$ such that
  $e \notin \Parallelp p {f_i}$, for instance $e \le f_i$.
  Then, there exists $\p\in\Paths_{p,q}$ such that $e \le_\p f_i$, hence $e \le \last \p {f_i}$.
  As $(M,\gamma) \in \Lleft$, we get $e'_i < \last \p {f_i}$.
  And since $\rFuturep p {e_i'} \subseteq \rFuturep p {f_i}$, we have
  $f_i < \last \p {f_i}$, a contradiction.
\end{proof}

 \begin{figure}
 \centering
    \begin{tikzpicture}
      \node[blue,dot,label=above:{\textcolor{blue}{$e_1$}},
      label=below:{\textcolor{blue}{1}}] (e1) at (0,2) {};
      \node[blue,dot,label=above:{\textcolor{blue}{$e'_1$}},
      label=below:{\textcolor{blue}{1}}] (ep1) at (2,2) {};
      \node[dot,label=below:{0}] (g1) at (2.5,2) {};
      \node[dot,label=below:{0}] (h1) at (4.5,2) {};
      \node[red,dot,label=above:{\textcolor{red}{$e_2$}},
      label=below:{\textcolor{red}{1}}] (e2) at (5,2) {};
      \node[red,dot,label=above:{\textcolor{red}{$e'_2$}},
      label=below:{\textcolor{red}{1}}] (ep2) at (7,2) {};
      \node[dot,label=below:{0}] (g2) at (7.5,2) {};
      \node[dot,label=below:{0}] (h2) at (8.5,2) {};
      \node[mygreen,dot,label=above:{\textcolor{mygreen}{$e_3$}},
      label=below:{\textcolor{mygreen}{1}}] (e3) at (9,2) {};
      \node[mygreen,dot,label=above:{\textcolor{mygreen}{$e'_3$}},
      label=below:{\textcolor{mygreen}{1}}] (ep3) at (11,2) {};
      \node[dot,label=below:{0}] (g3) at (11.5,2) {};
      \coordinate (i1) at (0,0);
      \node[dot,label=below:{$f_1$}] (f1) at (1,0) {};
      \coordinate (j1) at (1.5,0);
      \coordinate (i2) at (5.5,0);
      \node[dot,label=below:{$f_2$}] (f2) at (6,0) {};
      \coordinate (j2) at (6.5,0);
      \coordinate (i3) at (9.5,0);
      \node[dot,label=below:{$f_3$}] (f3) at (10,0) {};
      \coordinate (j3) at (10.5,0);

      \path[very thick]
      (i1) edge[blue] (j1)
      (j1) edge[blue,decorate,decoration=snake,->] (g1)
      (g1) edge[blue] (h1)
      (h1) edge[red,decorate,decoration=snake,->] (i2)
      (i2) edge[red] (j2)
      (j2) edge[red,decorate,decoration=snake,->] (g2)
      (g2) edge[red] (h2)
      (h2) edge[mygreen,decorate,decoration=snake,->] (i3)
      (i3) edge[mygreen] (j3)
      (j3) edge[mygreen,decorate,decoration=snake,->] (g3)
      (g3) edge[mygreen] (12,2);

      \draw (j1) -- (i2) (j2) -- (i3) (j3) -- (12,0);
      \draw (e1) -- (g1) (h1) -- (g2) (h2) -- (g3);

      \node at (-0.5,2) {$p$};
      \node at (-0.5,0) {$q$};

      \node at (1,1.7) {\textcolor{blue}{$\cdots$}} ;
      \node at (3.5,1.7) {$\cdots$} ;
      \node at (6,1.7) {\textcolor{red}{$\cdots$}} ;
      \node at (8,1.7) {$\cdots$} ;
      \node at (10,1.7) {\textcolor{mygreen}{$\cdots$}} ;
    \end{tikzpicture}
    \caption{Constructions of $\Aint$ and $\Ao$.\label{fig:Aint}}
  \end{figure}

\paragraph{A \CFM for $\boldsymbol{\Lo}$.}

The last piece of the puzzle is a \CFM $\Ao$ such that $L(\Ao) = \Lo$.  It is
built as the product (intersection) of \CFMs $\Aint$, $\Aleft$, and $\Aright$.

\begin{lemma}
  \label{lem:Lint}
  There is a \CFM $\Aint$ with a constant number of states such that we have
  $L(\Aint) = \Lint$.
\end{lemma}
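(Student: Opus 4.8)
We must build a CFM $\Aint$ over $\Procs$ and $\extSigma \times \{0,1\}$ with a constant number of states, recognizing $\Lint$: the input MSC $(M,\gamma)$ is accepted iff the $1$-labeled events on $p$ form non-empty maximal intervals $[e_1,e_1'],\dots,[e_k,e_k']$, the $1$-labeled events on $q$ form a sequence $f_1 < \cdots < f_k$ of the same cardinality $k$, and, matching them up in order, $\rPastp p {e_i} \subseteq \rPastp p {f_i}$ and $\rFuturep p {e_i'} \subseteq \rFuturep p {f_i}$ for all $i$.

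\smallskip

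\textbf{Plan.} The idea is to reformulate the two inclusion conditions as statements about a single event on process $p$, so that a token moving along the MSC can witness them. First observe that $\rPastp p {e_i} \subseteq \rPastp p {f_i}$ is equivalent to: the $\prel$-predecessor $g_i$ of $e_i$ on $p$ (if it exists) satisfies $g_i < f_i$, i.e.\ $g_i \in \apprelp{f_i}{\strictless^{-1}} \cup \apprelp{f_i}{\prel^{-1}} \cup \apprelp{f_i}{\mrel^{-1}}$; if $e_i$ is the first event of $p$, the inclusion is automatic. Symmetrically, $\rFuturep p {e_i'} \subseteq \rFuturep p {f_i}$ is equivalent to: the $\prel$-successor $h_i$ of $e_i'$ on $p$ (if it exists) satisfies $f_i < h_i$. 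So $\Aint$ only needs to check, for each $i$, that $g_i$ precedes $f_i$ and $f_i$ precedes $h_i$ in the MSC order, where $g_i$ is the last $0$-event on $p$ before the $i$-th block (or the start), and $h_i$ is the first $0$-event on $p$ after the $i$-th block (or the end). Concretely, this is exactly the picture of Figure~\ref{fig:Aint}: the bold path threads $\cdots \to g_{i} \to \cdots$ on $p$, then down via a $\le$-path to just before $f_i$ on $q$, then from just after $f_i$ down to $h_i$ on $p$, and so on.

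\smallskip

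\textbf{The automaton.} $\Aint$ guesses and verifies such a path using a single token, represented (as in the proof of Lemma~\ref{lem:Azeros}) by the two states $s_{\mathsf{token}}, s_{\overline{\mathsf{token}}}$ on each process, with the token's moves between processes realized by designated messages. The token starts on process $p$. Walking forward on $p$, the automaton alternates between two modes: in the ``$p$-before-block'' mode it reads $0$-events freely and at some point commits the current event as $g_i$, hands the token off (possibly via a chain of processes) so that it arrives on $q$ at an event that the automaton will declare to be $f_i$; it then immediately sends the token back (again possibly via a chain) to land on $p$ at an event it commits as $h_i$, from which it resumes ``$p$-before-block'' mode for round $i+1$. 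The $1$-events of $p$ in between $g_i$ and $h_i$ are precisely the $i$-th block, so the automaton enforces: between $g_i$ and $h_i$ on $p$ every event is $1$-labeled (and there is at least one), $g_i$ and $h_i$ themselves are $0$-labeled (or the endpoints of $p$), and outside these stretches $p$ reads only $0$. On process $q$ it enforces that the events it visits with the token-return protocol are $1$-labeled and $a$-labeled, occur in increasing order, and are exactly the $1$-labeled events of $q$ — the last point is guaranteed by forbidding $q$ to read a $1$-event unless it currently holds the token in the appropriate "just-became $f_i$" phase. That the inclusions hold follows from the token's path: $g_i \le_\pi (\text{event just before } f_i) \prel f_i$ gives $g_i < f_i$, and $f_i \prel (\text{event just after } f_i) \le_{\pi'} h_i$ gives $f_i < h_i$; using $\le = \bigcup_\pi \le_\pi$, these are also necessary, so a correct guess always leads to an accepting run. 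Since the control is a fixed finite alternation of modes together with token/no-token bits and a bounded-size message alphabet (process identifiers plus a couple of flags), the number of states is constant — independent of $|\Procs|$ and $|\extSigma|$, which enter only through the input-letter alphabet, not the state space.

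\smallskip

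\textbf{Main obstacle.} The delicate point is \emph{synchronizing the block boundaries on $p$ with the matched events on $q$ using only one token and constant memory}: the automaton must be sure that when it "returns" the token to $p$ it lands exactly on $h_i$ — the first $0$-event after the block — rather than somewhere inside the next block or after a later block, and likewise that it left from exactly $g_i$. This is handled by making the token-handoff bracket each block tightly: the token must leave $p$ from a $0$-event (or the first event) and re-enter $p$ at a $0$-event (or via the last event), and $p$ is \emph{only} permitted to read a maximal run of $1$'s while it does \emph{not} hold the token, with the run's endpoints forced to coincide with the departure/arrival events. One also has to check the boundary cases — $e_i$ is the first event of $p$, $e_i'$ the last, $k=0$, or $q$ has no $1$-events — which all degenerate gracefully (an empty or trivial token path). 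The correctness argument is then a routine induction on $i$ matching the $i$-th token excursion to the $i$-th block; I expect no real difficulty beyond bookkeeping once the protocol above is fixed.
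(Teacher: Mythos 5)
Your construction is the same token-passing protocol as the paper's (the paper's own proof of this lemma is exactly your picture: the token covers the $0$-events of $p$, makes one excursion per $1$-block, and each maximal token-interval on $q$ contains exactly one $1$-event), and your explicit reduction of the two inclusions to ``$g_i < f_i$ and $f_i < h_i$'' is correct and clarifying. The soundness direction is fine. The genuine gap is the sentence ``using ${\le} = \bigcup_\p \le_\p$, these are also necessary, so a correct guess always leads to an accepting run.'' Knowing $g_i < f_i$ gives you \emph{some} path from $g_i$ to $f_i$, but not one the token can follow: your protocol (like the paper's) forces the token to leave $p$ \emph{at} $g_i$ via a message edge and to stay off $p$ until $h_i$, whereas every path witnessing $g_i < f_i$ may be forced to exit $p$ from \emph{inside} the block $[e_i,e_i']$. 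Concretely, take $\Procs \supseteq \{p,q,r\}$ where $p$ executes $g \prel e \prel h$ with $g$ internal, $\gamma(g)=0$, $e$ a send to $r$ with $\gamma(e)=1$, $h$ a receive from $q$ with $\gamma(h)=0$; process $r$ receives from $p$ and then sends to $q$; process $q$ executes $f \prel f'$ with $f$ the receive from $r$, $\gamma(f)=1$, and $f'$ a send to $p$ received at $h$, $\gamma(f')=0$. Then $g < f < h$, so this MSC is in $\Lint$, yet $g$ is internal and cannot emit the token, so your automaton rejects. Hence your $\Aint$ recognizes a \emph{proper} subset of $\Lint$; the equality claimed by the lemma is not established (and the paper's ``it is easy to check'' glosses over exactly the same point).

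This does not break the overall development, but it means you should either weaken the statement or say precisely what you prove. What the token construction actually gives is $\Lo \subseteq L(\Aint) \subseteq \Lint$, and that is all that is used later: since $L(\Aleft) \cap \Lint = \Lleft$ and $L(\Aright) \cap \Lint = \Lright$, these two inclusions already yield $L(\Aint) \cap L(\Aleft) \cap L(\Aright) = \Lo$. The lower bound $\Lo \subseteq L(\Aint)$ is where parallelism saves you: for $(M,\gamma) \in \Lo$ one has $[e_i,e_i'] = \Parallelp p {f_i}$, so $\rPastp p {f_i} = \rPastp p {e_i}$ and $\rFuturep p {f_i} = \rFuturep p {e_i'}$, and therefore every path from $g_i$ to $f_i$ meets $E_p$ only in $\{g_i\}$ (in particular $g_i$ \emph{is} a send event) and every path from $f_i$ to $h_i$ meets $E_p$ only in $\{h_i\}$ --- exactly the excursion the token needs. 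You should add this argument, as it is the actual content of the completeness direction. Two smaller points: the token must be allowed to start on $q$ when the first $p$-event is $1$-labeled (your write-up starts it on $p$ unconditionally), and the requirement that the marked $q$-events be $a$-labeled is not part of $\Lint$; that filtering was already performed when guessing $F_1$ and $F_2$.
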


\begin{proof}
  Again, we implement a sort of token passing, which is illustrated in
  Figure~\ref{fig:Aint}.  The token starts on process $p$ iff the first
  $p$-event is labeled $0$; otherwise, it must start on $q$.  Similarly, the
  token ends on process $p$ iff the last $p$-event is labeled $0$; otherwise, it
  must end on $q$. 
  Process~$p$ reads $0$'s when it holds the token, and $1$'s when it does not.
  Moreover, after sending the token, process $p$ must read some 1-labeled
  events.
  When sent by $p$ (respectively $q$), the token must reach $q$ (respectively $p$) before 
  returning to $p$ (respectively $q$).
  Finally, process $q$ reads only $0$-labeled events when it does not hold the
  token.  Moreover, process $q$ checks that, within every maximal interval where
  it holds the token, there is exactly one $1$-labeled event.
  
  It is easy to check that $(M,\gamma) \in \Lint$ iff there exists a path along 
  which the token is passed and satisfying the above conditions.
\end{proof}

We now show that there exists a \CFM $\Aleft$ that accepts an MSC
$(M,\gamma) \in \Lint$ iff $(M,\gamma) \in \Lleft$.
The idea is that  $\Aleft$ guesses a coloring of the intervals
of marked events
such that checking $\last \p {f_i} \notin [e_i,e_i']$
can be replaced with checking that $\last \p {f_i}$ is not in an interval
with the same color as $[e_i,e_i']$.
We need to prove that such a coloring exists, 
and that the colors associated
with the $\last \p {f_i}$ can be computed by the \CFM.

\begin{lemma}
  \label{lem:coloring}
  Let $(M,\gamma) \in \Lleft$, let $I_1, \ldots, I_k$ be the sequence of maximal
  intervals of events labeled $1$ on process $p$, and $f_1<\cdots<f_k$
  the corresponding events labeled $1$ on process $q$.
  There exists a coloring
  $\chi : \{1,\ldots,k\} \to \{1, \ldots, |\Paths_{p,q}| + 1 \}$
  such that, for all $i,j \in \{1,\ldots,k\}$ and $\p \in \Paths_{p,q}$,
  $\chi(i) = \chi(j)$ implies $\last \p {f_j} \notin I_i$.
\end{lemma}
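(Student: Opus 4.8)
The plan is to reduce the lemma to a greedy, left-to-right colouring of the index set $\{1,\dots,k\}$. The reason $|\Paths_{p,q}|+1$ colours suffice is that, for a fixed $\p\in\Paths_{p,q}$, the event $\last \p {f_j}$ (when defined) lies in \emph{at most one} of the pairwise disjoint intervals $I_i$, and moreover such an $i$ is necessarily $<j$; hence every conflict that $\chi$ must resolve points from a larger index to a smaller one, and can be handled when we assign colours in increasing order of the index.

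First I would fix the structural data. Since $(M,\gamma)\in\Lleft\subseteq\Lint$, we may list the maximal intervals of $1$-labelled events on $p$ as $I_1=[e_1,e'_1],\dots,I_k=[e_k,e'_k]$ from left to right (so $e_i\prel^* e'_i$ and $e'_i\prel^+ e_{i+1}$ for $1\le i<k$; in particular the $I_i$ are pairwise disjoint), the $1$-labelled events on $q$ as $f_1\prel^+\dots\prel^+ f_k$, and these satisfy $\rPastp p {e_i}\subseteq\rPastp p {f_i}$ and $\rFuturep p {e'_i}\subseteq\rFuturep p {f_i}$ for every $i$. I would also note that, whenever $\last \p {f_j}$ is defined, it is an event of $E_p$ and satisfies $\last \p {f_j}\le f_j$ (because $\last \p {f_j}\le_\p f_j$ and $\le_\p$ refines $\le$).

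The key step is the local claim: for every $\p\in\Paths_{p,q}$ and every $j$ for which $\last \p {f_j}$ is defined, one has $\last \p {f_j}\prel^+ e_j$, i.e.\ $\last \p {f_j}$ lies strictly to the left of $I_j$ on process $p$. Indeed, $\last \p {f_j}$ and $e_j$ both belong to $E_p$; if $\last \p {f_j}$ were not strictly before $e_j$, then, since $\last \p {f_j}\notin I_j$ by the defining condition of $\Lleft$, we would get $e'_j\prel^+\last \p {f_j}$, whence $\last \p {f_j}\in\rFuturep p {e'_j}\subseteq\rFuturep p {f_j}$, that is $f_j<\last \p {f_j}$, contradicting $\last \p {f_j}\le f_j$. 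Combining the claim with the ordering of the intervals then yields the needed monotonicity of conflicts: if $\last \p {f_j}\in I_i$ then $i<j$; for if $i\ge j$ we have $e_j\prel^* e_i$, hence $\last \p {f_j}\prel^+ e_j\prel^* g$ for every $g\in I_i$, so $\last \p {f_j}\notin I_i$.

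It then remains to colour. For each $j\in\{1,\dots,k\}$ let $B_j=\{\,i<j\mid \last \p {f_j}\in I_i\text{ for some }\p\in\Paths_{p,q}\,\}$; since the $I_i$ are pairwise disjoint, each $\p$ contributes at most one index to $B_j$, so $|B_j|\le|\Paths_{p,q}|$. Processing $j=1,2,\dots,k$ in order, I would pick $\chi(j)$ in $\{1,\dots,|\Paths_{p,q}|+1\}\setminus\{\chi(i)\mid i\in B_j\}$, which is possible since $|B_j|<|\Paths_{p,q}|+1$ and all $\chi(i)$ with $i\in B_j$ have already been assigned. Then, if $\chi(i)=\chi(j)$ and $\last \p {f_j}\in I_i$ for some $\p$, the local claim forces $i<j$, so $i\in B_j$ and $\chi(i)\ne\chi(j)$, a contradiction; hence $\chi$ has the required property. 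The only real obstacle is the local claim — in particular, recognising that the $\Lleft$ and $\Lint$ constraints together force $\last \p {f_j}$ \emph{strictly to the left} of $I_j$, not merely outside it; this is exactly what replaces an undirected conflict graph (which would cost roughly $2|\Paths_{p,q}|+1$ colours) by a backward-pointing one that the greedy pass handles with $|\Paths_{p,q}|+1$ colours. The rest is routine bookkeeping.
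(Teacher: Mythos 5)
Your proof is correct and follows essentially the same strategy as the paper: observe that the conflict relation ``$\last \p {f_j} \in I_i$'' can only hold for $i<j$, and then colour greedily from left to right, avoiding at each step the at most $|\Paths_{p,q}|$ colours of earlier conflicting indices. The only (immaterial) difference is how the backward orientation of conflicts is justified --- you show via $\Lleft$ and the $\rFuturep{p}{e'_j}\subseteq\rFuturep{p}{f_j}$ condition that $\last \p {f_j}$ lies strictly to the left of $I_j$, whereas the paper argues directly that $i\rightsquigarrow j$ implies $f_i<f_j$; both rest on the same facts.
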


\begin{proof}
  We write $i \rightsquigarrow j$ when there exists $\p \in \Paths_{p,q}$
  such that $\last \p {f_j} \in I_i$.
  Notice that if $i \rightsquigarrow j$, then $f_i<f_j$ 
  (otherwise, we would have $\last\p{f_j}<f_j<f_i$, but $\last\p{f_j} \in I_i$).
  So we can define $\chi$ by successively choosing colors for
  $1, \ldots, k$: For all $j$, it suffices to choose a color
  $\chi(j)\in\{1,\ldots,|\Paths_{p,q}| + 1 \}$ distinct from the at most
  $|\Paths_{p,q}|$ colors of indices
  $i < j$ such that $i \rightsquigarrow j$.
\end{proof}

\begin{lemma}
  \label{lem:last-label}
  Let $\Theta$ be a finite set.
  There exists a (deterministic) \CFM with $|\Theta|^{\mathcal{O}(|P|!)}$ states recognizing the set of doubly extended
  MSCs $(M,\theta,\xi)$ such that, for all events $e$, $\xi(e)$ is the partial
  function from $\Paths$ to $\Theta$ such that $\xi(e)(\p) = \theta(\last \p
  e)$.
\end{lemma}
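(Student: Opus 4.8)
## Proof Plan for Lemma~\ref{lem:last-label}

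The plan is to build the desired \CFM as a product of \CFMs $\A_{\p}$, one for each $\p \in \Paths$, where $\A_{\p}$ is responsible for computing, at every event $e$, the value $\theta(\last{\p}{e})$ (or recording that $\last{\p}{e}$ is undefined). Since $|\Paths|$ is bounded by a function of $|P|$ (indeed $|\Paths| \le |P|!$ times a small factor, as process sequences have pairwise distinct processes), taking this product only multiplies the number of states, keeping it within $|\Theta|^{\mathcal{O}(|P|!)}$. The final machine then reads at each event $e$ the tuple of values produced by the $\A_{\p}$'s and checks it against $\xi(e)$, keeping the run alive iff they agree; $\xi$ is then projected onto the output label. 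So it remains to construct each $\A_{\p}$.

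For a fixed $\p = p_1 \cdots p_n$, recall that $e \le_{\p} f$ holds iff there is a chain $e = e_1 \mrel^{\le \prel} \cdots$ more precisely events $e = e_1, f_1, e_2, f_2, \ldots, e_n, f_n = f$ with $e_i, f_i \in E_{p_i}$, $e_i \prel^* f_i$, and $f_i \mrel e_{i+1}$; and $\last{\p}{e} = \max\{f \mid f \le_{\p} e\}$, which lives on process $p_n$. The key observation is that $\last{\p}{e}$ can be computed \emph{incrementally along $\p$}. Define $\last{p_1 \cdots p_j}{e}$ for $e \in E_{p_j}$ as $\max\{f \in E_{p_j} \mid f \le_{p_1 \cdots p_j} e\}$. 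Then: on process $p_1$, we have $\last{p_1}{e} = e$ for every $e \in E_{p_1}$; and for $j \ge 2$, if $e \in E_{p_j}$ is a receive event with $\mrel$-predecessor $g \in E_{p_{j-1}}$, then $\last{p_1\cdots p_j}{e} = \last{p_1 \cdots p_{j-1}}{g}$ if the latter is defined, and otherwise it is undefined; and if $e$ is not a receive from $p_{j-1}$, or the matching prefix is undefined, then $\last{p_1\cdots p_j}{e'}$ propagates along $\prel$: $\last{p_1\cdots p_j}{e} = \last{p_1\cdots p_j}{\Predp(e)}$ when that exists (taking the larger of the inherited value and any value just obtained from a message), and is undefined if no predecessor on $p_j$ carries a defined value and $e$ itself received nothing along $\p$. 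Since $\max$ on a process is just ``the most recent one seen,'' each process $p_j$ only needs to remember, in its local state, the current value $\theta(\last{p_1\cdots p_j}{e}) \in \Theta \cup \{\bot\}$; when $p_j$ sends a message that will be received by $p_{j+1}$ (as the $j$-th link of $\p$), it attaches this value to the message. Thus $\A_{\p}$ has $O(|\Theta|)$ states per process plus the message alphabet $\Theta \cup \{\bot\}$, and computes precisely $\theta(\last{\p}{e})$ at every $e \in E_{p_n}$ (and correctly reports $\bot$, i.e.\ undefined, elsewhere and when no $\p$-chain reaches $e$). The machine is deterministic: every transition is forced by the current local state, the label $\theta$ read, and the message received (if any).

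The product machine then simply needs, at each event $e$, the full partial function $\xi(e) : \Paths \rightharpoonup \Theta$, which is recovered as the tuple $(\theta(\last{\p}{e}))_{\p \in \Paths}$ assembled from the $|\Paths|$ components; the transition relation is restricted to keep only those transitions where the guessed/read $\xi(e)$ matches this tuple, and $\xi$ is merged into the output alphabet component. All global states are accepting. Correctness is immediate from the incremental characterization of $\last{\p}{\cdot}$ above, proved by a straightforward induction on the length of the prefix $p_1 \cdots p_j$ and, within a fixed process, on the $\prel$-order.

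I expect the main obstacle — though it is more bookkeeping than genuine difficulty — to be the careful treatment of \emph{undefinedness} and of the $\max$ operation: one must check that ``remember the last value seen, overwriting on each new message arrival along $\p$, propagating along $\prel$ otherwise'' genuinely computes $\max\{f \mid f \le_{\p} e\}$ and not merely \emph{some} $f$ with $f \le_{\p} e$. This hinges on the FIFO property and on the fact that, for fixed $\p$, the set $\{f \mid f \le_{\p} e\}$ is totally ordered (all its elements lie on $p_n$), so that along process $p_n$ the relevant value is monotone and the ``most recent update wins'' rule is sound. A second minor point is bounding $|\Paths|$: since a process sequence in $\Paths$ is an injective word over $P$, $|\Paths| \le \sum_{k=1}^{|P|} |P|!/(|P|-k)! \le |P| \cdot |P|!$, so the product has $(|\Theta|+1)^{\mathcal{O}(|P| \cdot |P|!)} = |\Theta|^{\mathcal{O}(|P|!)}$ states per process, as claimed.
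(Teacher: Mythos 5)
Your proposal is correct and takes essentially the same approach as the paper: decomposing into one machine per path $\p$ and taking the product is just an unfolding of the paper's single \CFM, which stores the whole partial function $\xi(e)$ in its state and applies exactly your update rules (overwrite from the message for paths ending with the hop just received, propagate along $\prel$ otherwise), the soundness of overwriting resting on FIFO as you note. (One cosmetic slip: the elements of $\{f \mid f \le_\p e\}$ lie on $p_1$, not $p_n$.)
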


\begin{proof}
  The \CFM stores the label $\xi(e)$ of an event $e$ in its state, and includes
  it in the message if $e$ is a send event.  At an event $e$ on process $\nproc$, the
  CFM checks that $\xi(e)(\nproc) = \theta(e)$.  Moreover, the CFM checks that:
  \begin{itemize}
  \item If $e$ has no predecessor, then $\xi(e)(\p)$ is undefined for all $\p\neq \nproc$.

  \item If $e$ has one $\prel$-predecessor $f$ but no $\mrel$-predecessor,
    then $\xi(e)(\p) = \xi(f)(\p)$ for $\p \neq \nproc$.
  \item If $e$ has one $\mrel$-predecessor $g$ on process $r$, but no $\prel$-predecessor, then
    $\xi(e)(\p r \nproc) = \xi(g)(\p r)$,
    and $\xi(e)(\p)$ is undefined if $\p \neq \nproc$ and $\p$ does not end with
    $r \nproc$.
  \item If $e$ has one $\prel$-predecessor $f$ and one $\mrel$-predecessor $g$ on process $r$,
    then $\xi(e)(\p r \nproc) = \xi(g)(\p r)$, and $\xi(e)(\p) = \xi(f)(\p)$
    if $\p \neq \nproc$ and $\p$ does not end with $r \nproc$.
    \qedhere
  \end{itemize}
\end{proof}

\begin{lemma}
  \label{lem:Aleft}
  There is a \CFM $\Aleft$ with $2^{2^{\mathcal{O}(|P| \log {|P|})}}$ states
  such that we have $L(\Aleft) \cap \Lint = \Lleft$.
\end{lemma}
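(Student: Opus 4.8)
The plan is to use the colouring provided by Lemma~\ref{lem:coloring} to reduce the ``global'' condition defining $\Lleft$ to one that a \CFM can verify event by event with the machinery of Lemma~\ref{lem:last-label}. Here is the key equivalence. Fix $(M,\gamma)\in\Lint$ with maximal $1$-intervals $I_1,\ldots,I_k$ on process $p$ and corresponding $1$-labeled events $f_1<\cdots<f_k$ on process $q$, and let $C=\{1,\ldots,|\Paths_{p,q}|+1\}$. I claim that $(M,\gamma)\in\Lleft$ if and only if there is a map $\chi\colon\{1,\ldots,k\}\to C$ such that, for all $i\le k$ and all $\p\in\Paths_{p,q}$, $\last \p {f_i}$ (when defined) does not lie in an interval $I_j$ with $\chi(j)=\chi(i)$. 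For the ``only if'' direction, Lemma~\ref{lem:coloring} produces exactly such a $\chi$ (apply it with the roles of $i$ and $j$ exchanged). For the ``if'' direction, taking $j=i$ shows $\last \p {f_i}\notin I_i$ for all $i,\p$, which is precisely the definition of $\Lleft$.

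Given this, I would build $\Aleft$ as a product of two components. The first component carries out a token-passing protocol in the style of $\Aint$ (Lemma~\ref{lem:Lint}), with the token additionally holding a colour from $C$: whenever process $p$ releases the token at the start of a maximal interval $I_i$ of $1$-labeled events, it nondeterministically picks a colour $c_i\in C$, keeps $c_i$ in its state throughout $I_i$, and attaches $c_i$ to the token; when the token next reaches the unique $1$-labeled $q$-event it passes through, process $q$ reads off $c_i$. On inputs in $\Lint$, the token path necessarily pairs the $i$-th interval $I_i$ with the $i$-th event $f_i$, so this fixes a colouring $\chi$ with value $c_i$ assigned coherently to $I_i$ and to $f_i$. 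The second component runs the deterministic \CFM of Lemma~\ref{lem:last-label} with $\Theta=C\uplus\{\bot\}$ on the labelling $\theta$ where $\theta(e)=c_i$ if $e$ is a $1$-labeled event of $p$ inside $I_i$, and $\theta(e)=\bot$ otherwise; process $p$ can emit this labelling on the fly since it holds $c_i$ in its state during $I_i$. This yields at every event $e$ the partial map $\xi(e)\colon\p\mapsto\theta(\last \p e)$. Finally, at each $1$-labeled event $f_i$ on $q$, with colour $c_i$ obtained from the token, $\Aleft$ checks that $\xi(f_i)(\p)\neq c_i$ for every $\p\in\Paths_{p,q}$ (treating an undefined value as ``$\neq$''), and it accepts iff all these checks succeed; all global states are accepting. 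Since $\p$ starts with $p$, the event $\last \p {f_i}$ lies on process $p$, so $\xi(f_i)(\p)=c_i$ holds exactly when $\last \p {f_i}$ lies in an interval coloured $c_i$; hence, by the equivalence above, $L(\Aleft)\cap\Lint=\Lleft$. For the size, the token component uses $\mathcal{O}(|C|)$ states while the component of Lemma~\ref{lem:last-label} uses $|\Theta|^{\mathcal{O}(|P|!)}$ states; since $|\Theta|=|\Paths_{p,q}|+2=2^{\mathcal{O}(|P|\log|P|)}$, the product has $2^{2^{\mathcal{O}(|P|\log|P|)}}$ states, as claimed.

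The step I expect to be the most delicate is the claim that, on an input in $\Lint$, every token path admitted by the first component pairs $I_i$ with $f_i$ for each $i$, so that the colour $p$ guesses for $I_i$ is the colour $q$ reads for $f_i$; making this rigorous amounts to revisiting the structure of the token path described in the proof of Lemma~\ref{lem:Lint}. Everything else is a routine product construction together with the bookkeeping of the two lemmas already established.
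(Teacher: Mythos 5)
Your proof is correct and follows essentially the same route as the paper: modify the token-passing automaton $\Aint$ to guess a colouring of the intervals with colours from $\{1,\ldots,|\Paths_{p,q}|+1\}$, transmit the colour of $I_i$ to $f_i$ via the token, compose with the \CFM of Lemma~\ref{lem:last-label} to recover the colour of $\last{\p}{f_i}$ for each $\p\in\Paths_{p,q}$, and check inequality of colours, with soundness from the trivial case $j=i$ and completeness from Lemma~\ref{lem:coloring}. Your explicit statement of the ``iff'' reformulation of $\Lleft$ in terms of the existence of a colouring, and your handling of events outside the intervals via the extra colour $\bot$ (replacing the paper's explicit $\gamma(\last{\p}{f_i})=0$ disjunct), are just slightly more detailed renderings of the same argument.
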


\begin{proof}
  Let $(M,\gamma) \in \Lint$ with $I_1,\ldots,I_k$ the non-empty maximal intervals of
  $1$-labeled events on process~$p$, and $f_1<\cdots<f_k$ the corresponding
  $1$-labeled events on process~$q$.

  We can slightly modify $\Aint$ so that on input $(M,\gamma)$, it guesses a
  coloring $\chi : \{1,\ldots,k\} \to \{1, \ldots, |\Paths_{p,q}| + 1 \}$,
  and labels each event in $I_i$ with $\chi(i)$.
  The color of the upcoming interval $I_i$
  is passed along with the token, so that at each $f_i$,
  the \CFM has access to the color $\chi(i)$ (see Figure~\ref{fig:Aint}).

  We can then compose that automaton with the \CFM from
  Lemma~\ref{lem:last-label}, to compute, at each $f_i$ and for all
  $\p \in \Paths_{p,q}$, the color associated with $\last \p {f_i}$.
  The \CFM $\Aleft$ then checks that for all $i$ and $\p$, either $\last \p {f_i}$
  is undefined, or $\gamma(\last \p {f_i}) = 0$,
  or the color associated with $\last \p {f_i}$ is different from $\chi(i)$.
  
  Suppose $(M,\gamma) \in L(\Aleft) \cap \Lint$. Then, for all $i$ and $\p$,
  $\last \p {f_i}$ cannot be in an interval colored $\chi(i)$.
  In particular, this implies $\last \p {f_i} \notin {I_i}$.
  Conversely, suppose $(M,\gamma) \in \Lleft$. Then, by Lemma~\ref{lem:coloring}, there exists
  a run in which the coloring guessed along the token passing is such that $\Aleft$ accepts.
\end{proof}

Finally, we obtain $\Ao$ as the product (intersection) of $\Aint$, $\Aleft$, and the mirror
$\Aright$ of $\Aleft$, which recognizes $\Lright$.  In fact, it is easy to see
that \CFMs are closed under mirror languages, in which both the process and the
edge relations are inverted.

\begin{lemma}
  There is a \CFM $\Ao$ with $2^{2^{\mathcal{O}(|P| \log{|P|})}}$ states such that $L(\Ao) = \Lo$.
\end{lemma}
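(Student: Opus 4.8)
The plan is to realize $\Ao$ as the product (intersection) of the three \CFMs $\Aint$, $\Aleft$, and $\Aright$, where $\Aright$ is the mirror of $\Aleft$, and then to read off both the state bound and the language equality from the lemmas already established. \CFMs are effectively closed under intersection: given two \CFMs over the same $\Procs$ and label alphabet, one takes the componentwise product of the process transition systems, the product of the message sets, and the product of the global accepting-state sets. Applying this twice to $\Aint$, $\Aleft$, and $\Aright$ yields a \CFM whose number of states per process is the product of the three component sizes. By Lemma~\ref{lem:Lint}, $\Aint$ has constantly many states, and by Lemma~\ref{lem:Aleft} (together with its mirror, discussed below) $\Aleft$ and $\Aright$ each have $2^{2^{\mathcal{O}(|P|\log|P|)}}$ states. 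Since $2^{2^{\mathcal{O}(|P|\log|P|)}} \cdot 2^{2^{\mathcal{O}(|P|\log|P|)}} = 2^{2^{\mathcal{O}(|P|\log|P|)}}$ and multiplication by a constant does not change this, the product is of the announced size.

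For the mirror component, I would first note that \CFMs are closed under the mirror operation that reverses both $\prel$ and $\mrel$ (hence swaps send and receive transitions), reverses every transition of every process, and exchanges the roles of initial and accepting configurations; this is immediate from the definition of a run and does not affect the number of states. Moreover, $\Lint$ is invariant under this mirror operation (reversing past and future and exchanging $e_i$ with $e_i'$ and the order of the $f_i$), while $\Lleft$ is turned into $\Lright$ (with $\Paths_{p,q}$ becoming $\Paths_{q,p}$ and $\last{\p}{f_i}$ becoming the corresponding $\mathsf{succ}$). Hence applying the mirror construction to $\Aleft$ produces a \CFM $\Aright$ with $L(\Aright) \cap \Lint = \Lright$.

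It then remains to verify the language equality. By construction, $L(\Ao) = L(\Aint) \cap L(\Aleft) \cap L(\Aright) = \Lint \cap L(\Aleft) \cap L(\Aright)$. Since $\Lleft \subseteq \Lint$ by definition, Lemma~\ref{lem:Aleft} gives $\Lint \cap L(\Aleft) = \Lleft$; likewise $\Lint \cap L(\Aright) = \Lright$, and since $\Lleft \subseteq \Lint$ as well, we obtain $\Lint \cap L(\Aleft) \cap L(\Aright) = \Lleft \cap \bigl(\Lint \cap L(\Aright)\bigr) = \Lleft \cap \Lright$, which equals $\Lo$ by the lemma stating $\Lo = \Lleft \cap \Lright$. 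I do not expect any real obstacle at this stage: all the substantial work — the token-passing construction for $\Lint$, the $(|\Paths_{p,q}|+1)$-bounded coloring of the intervals, the computation of the labels $\theta(\last{\p}{e})$, and the correctness of $\Aleft$ — has already been carried out in Lemmas~\ref{lem:Lint}, \ref{lem:coloring}, \ref{lem:last-label}, and \ref{lem:Aleft}. The only points needing care here are the (routine) closure of \CFMs under mirror and the bookkeeping ensuring that the three intersections interact correctly with $\Lint$, for which one only uses the inclusions $\Lleft, \Lright \subseteq \Lint$ that hold by definition.
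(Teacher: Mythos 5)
Your proposal is correct and follows exactly the paper's (very terse) argument: $\Ao$ is the product of $\Aint$, $\Aleft$, and the mirror $\Aright$ of $\Aleft$, the sizes multiply to $2^{2^{\mathcal{O}(|P|\log|P|)}}$, and $L(\Ao)=\Lint\cap L(\Aleft)\cap L(\Aright)=\Lleft\cap\Lright=\Lo$. Your additional care in checking that $\Lint$ is mirror-invariant, that $\Lleft$ mirrors to $\Lright$, and that the inclusions $\Lleft,\Lright\subseteq\Lint$ make the three-way intersection come out right is exactly the bookkeeping the paper leaves implicit.
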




\section{Conclusion}\label{sec:conclusion}

We showed that every $\EMSOt$ formula over MSCs can be effectively translated into an equivalent \CFM of doubly exponential size, which is optimal. At the heart of our construction is a \CFM $\Atype$ of own interest, which ``outputs'' the type of each event of an MSC. In particular, $\Atype$ can be applied to other logics such as propositional dynamic logic (PDL), which combines modal operators and regular expressions \cite{FisL79}. It has been shown in \cite{BKM-lmcs10} that every PDL formula can be translated into an equivalent \CFM. We can extend this result by adding a modality $\langle\conc\rangle$ to PDL, which ``jumps'' to some parallel event. For example, the formula $\neg\mathsf{E}(\mathit{CS} \wedge \langle\conc\rangle\mathit{CS})$ says that no two parallel events access a critical section. Note that \cite{BKM-lmcs10} considers infinite MSCs. However, it is easy to see that all our constructions can be extended to infinite MSCs.

A major open problem is whether every sentence from $\FO[\prel,\mrel,\le]$, with arbitrarily many variables, is
equivalent to some \CFM. To the best of our knowledge, the question is even open
for the logic $\FO[\le]$.  Generally, it would be worthwhile to identify large
classes of acyclic graphs of bounded degree such that all $\FO$- or
$\FOt$-definable languages (including the transitive closure of the edge
relation) are ``recognizable'' (e.g., by a graph acceptor \cite{ThoPOMIV96}).



\bibliography{lit}


\appendix

\clearpage

\section{\CFMs $\A^{\bowtie}$ for ${\bowtie} \in \{=,\prel,\mrel,\prel^{-1},\mrel^{-1}\}$}\label{app:Abowtie}

For all ${\bowtie} \in \{=,\prel,\mrel,\prel^{-1},\mrel^{-1}\}$, we give a \CFM $\A^{\bowtie}$ over $\Procs$ and $\extSigma \times 2^{\Procs \times \extSigma}$ such that $L(\A^{\bowtie}) = \{(M,\type_M^{\bowtie}) \mid M \in \MSCs{\Procs}{\extSigma}\}$.

\begin{description}\itemsep=2ex
\item[Case $\A^=$:] At every event $e$, $\A^=$ will simply ``output'' the singleton set $\{\lambda(e)\}$. Formally,
$\A^= = ((\A_p)_{p \in \Procs},\Msg,\Acc)$, with $\A_p = (S_p,\init_p,\Delta_p)$, where $S_p = \{s_p\}$ (i.e., $\init_p = s_p$ is also the local initial state of $p$), $\Acc = \{(s_p)_{p \in \Procs}\}$, and $\Msg = \{\msg\}$. Finally, for all $a \in \extSigma$ and $q \in \Procs \setminus \{p\}$, $\Delta_p$ contains the following transitions:\\[-0.5ex]
\begin{itemize}\itemsep=1ex
\item $s_p \xrightarrow{a,\{(p,a)\}} s_p$

\item $s_p \xrightarrow{a,\{(p,a)\},!,\msg,q} s_p$

\item $s_p \xrightarrow{a,\{(p,a)\},?,\msg,q} s_p$
\end{itemize}

\item[Case $\A^\prel$:] At every event $e$, $\A^\prel$ will guess the label $(p,b)$ of its process-successor (if it exists).
It will then output $\{(p,b)\}$ and go into state $b$ so that, at event $f$ with $e \prel f$, it has to read a $b$.
If the guess is that there is no process-successor, the automaton will enter $\bot$. For this construction, it is convenient to assume a \emph{set} of local initial states $I_p$ for every process, i.e., $\A_p = (S_p,I_p,\Delta_p)$. We let $S_p = I_p = \{\bot\} \cup \extSigma$, $\Acc = \{(\bot)_{p \in \Procs}\}$, and $\Msg = \{\msg\}$. Finally, for all $a,b \in \extSigma$ and $q \in \Procs \setminus \{p\}$, $\Delta_p$ contains:\\[-0.5ex]
\begin{itemize}\itemsep=1ex
\item $a \xrightarrow{a,\{(p,b)\}} b$ ~and~ $a \xrightarrow{a,\emptyset} \bot$

\item $a \xrightarrow{a,\{(p,b)\},!,\msg,q} b$ ~and~ $a \xrightarrow{a,\emptyset,!,\msg,q} \bot$

\item $a \xrightarrow{a,\{(p,b)\},?,\msg,q} b$ ~and~ $a \xrightarrow{a,\emptyset,?,\msg,q} \bot$
\end{itemize}

\item[Case $\A^\mrel$:] This is even slightly simpler than the previous case. At a \emph{send} event $e$, $\A^\mrel$ will guess the label $(q,b)$ of its message-successor. It outputs $\{(q,b)\}$ and sends $b$ to the receiving process $q$. On the other hand, the type associated with an internal or receive event is $\emptyset$. Formally, we let $S_p = \{s_p\}$, $\init_p = s_p$, $\Acc = \{(s_p)_{p \in \Procs}\}$, and $\Msg = \extSigma$. Finally, for all $a,b \in \extSigma$ and $q \in \Procs \setminus \{p\}$, $\Delta_p$ contains:\\[-0.5ex]
\begin{itemize}\itemsep=1ex
\item $s_p \xrightarrow{a,\emptyset} s_p$

\item $s_p \xrightarrow{a,\{(q,b)\},!,b,q} s_p$

\item $s_p \xrightarrow{a,\emptyset,?,a,q} s_p$
\end{itemize}

\item[Case $\A^{\prel^{-1}}$:] When reading an event $e$ with label $b$, the \CFM will enter state $b$ so that, at the process-successor event, it can access the label of $e$. Moreover, there is a distinguished initial state $\bot$. Thus, $S_p = \{\bot\} \cup \extSigma$, $\init_p = \bot$, $\Acc = \prod_{p \in \Procs} S_p$, and $\Msg = \{\msg\}$. For all $a,b \in \extSigma$ and $q \in \Procs \setminus \{p\}$, $\Delta_p$ contains:\\[-0.5ex]
\begin{itemize}\itemsep=1ex
\item $\bot \xrightarrow{a,\emptyset} a$ ~and~ $b \xrightarrow{a,\{(p,b)\}} a$

\item $\bot \xrightarrow{a,\emptyset,!,\msg,q} a$ ~and~ $b \xrightarrow{a,\{(p,b)\},!,\msg,q} a$

\item $\bot \xrightarrow{a,\emptyset,?,\msg,q} a$ ~and~ $b \xrightarrow{a,\{(p,b)\},?,\msg,q} a$
\end{itemize}

\item[Case $\A^{\mrel^{-1}}$:] Similarly, at a receive event, the receiver can access the label of the corresponding send event through the message that was sent. We let $S_p = \{s_p\}$, $\init_p = s_p$, $\Acc = \{(s_p)_{p \in \Procs}\}$, and $\Msg = \extSigma$. For all $a,b \in \extSigma$ and $q \in \Procs \setminus \{p\}$, $\Delta_p$ contains:\\[-0.5ex]
\begin{itemize}\itemsep=1ex
\item $s_p \xrightarrow{a,\emptyset} s_p$

\item $s_p \xrightarrow{a,\emptyset,!,a,q} s_p$

\item $s_p \xrightarrow{b,\{(q,a)\},?,a,q} s_p$
\end{itemize}

\end{description}

\section{Proof of Lemma~\ref{lem:cfmstrictless} (Construction of $\A^{\strictless^{-1}}$)}\label{app:Astrictless}

  Let $M=(E,\prel,\mrel,\lambda)$ be an MSC.
  To simplify notation slightly, we let, for an event $e \in E$, 
  $\rpast{e} = \{f \mid f < e \}$ and $\past{e} = \{f \mid f \strictless e \}$ (which equals $\apprelp{e}{\strictless^{-1}}$).
  Moreover, for a set $E' \subseteq E$, let $\lambda(E') = \{\lambda(e) \mid e \in E'\}$.
  Note that $\pasttype{e}=\type_M^{\strictless^{-1}}(e)$.
  
  We aim at a \CFM that ``labels'' each event $e$ of an MSC over $\Procs$ and $\extSigma$ with $\pasttype{e}$.
  We first observe that it is easy to construct a \CFM
  ``computing'' the sets $\rpasttype{e}$.
  Suppose an event $e$ on process $p$ with predecessors $f \prel e$ and
  $g \mrel e$ with $g$ on process $q$ (a situation like in Figure~\ref{fig:partition}). Then, we have
  $\rpast{e} = \rpast{f} \cup \rpast{g} \cup \{f,g\}$.
  So, to compute $\rpast{e}$, process $p$ remembers
  $\rpasttype{f}$ as well as $\lambda(f)$.
  At event $g$, process $q$ sends the message
  $\lambda(\rpast{g} \cup \{g\})$.
  So process $p$ can take the union of the set stored locally and the set sent
  by process $q$. The cases where $e$ has only one or no predecessor are similar.

  To compute $\lambda(\past{e})$, it is of course not enough to simply take $\lambda(\rpast{e})$ and remove the labels of the predecessors of $e$. This is illustrated in Figure~\ref{fig:partition}: The label $(q,b)$ of event $g$ is contained in $\lambda(\past{e}) = \lambda(\apprelp{e}{\strictless^{-1}})$ (cf.\ the blue area). However, we can modify the \CFM such that, in addition to remembering $\lambda(\rpast{e})$, it counts the number of occurrences of each
  label on each process up to 2.
  That is, instead of $\lambda(\rpast{e})$,
  it will remember a set $\mu(\rpast{e})$, where
  $\mu(e) = (\lambda(e),1)$ if $e$ is the first occurrence of
  $\lambda(e)$, and $\mu(e) = (\lambda(e),2)$ otherwise.
  For all predecessors $f \prel e$ or $g \mrel e$ of $e$, we then have:
    \begin{align*}
    \lambda(f) \in \lambda(\past{e})
    ~~\text{iff}~~ (\lambda(f),2) \in \mu(\rpast{e})
    \quad\quad\text{and}\quad\quad
    \lambda(g) \in \lambda(\past{e})
    ~~\text{iff}~~ (\lambda(g),2) \in \mu(\rpast{e})\, .  
  \end{align*}
  Indeed, if $\lambda(f) \in \lambda(\past{e})$, then $f$ is the $n$-th occurrence of $\lambda(f)$ for some $n \ge 2$, so that $(\lambda(f),2) \in \mu(\rpast{e})$.
  Conversely, if $(\lambda(f),2) \in \mu(\rpast{e})$, then there is some $f' \in \rpast{e} \setminus \{f\}$ such that $\mu(f') = (\lambda(f),1)$. Therefore, $\lambda(f) \in \lambda(\past{e})$.
  
  If $\lambda(g) \in \lambda(\past{e})$, then we can find some $g' \in \past{e}$ such that $\lambda(g') = \lambda(g)$.
  If $g' < g$, then $g$ is the $n$-th occurrence of $\lambda(g)$ for some $n \ge 2$.
  If $g < g'$, then $g'$ is the $n$-th occurrence of $\lambda(g)$ for some $n \ge 2$. In both cases, we deduce $(\lambda(g),2) \in \mu(\rpast{e})$.
  Conversely, if $(\lambda(g),2) \in \mu(\rpast{e})$, then there is some $g' \in \rpast{e} \setminus \{g\}$ such that $\lambda(g') = \lambda(g)$.
  Thus, $\lambda(g) \in \lambda(\past{e})$.

\section{Lower bounds}
\label{sec:lb}

\begin{lemma}
  Assume $|P| = 1$ and $|\Sigma| = 2$. For all $n \in \mathbb{N}$,
  there exists a sentence $\varphi \in \FOt[\prel]$ of size
  $\mathcal{O}(n^2)$ such that, for all \CFMs (i.e., finite automata)
  $\A$ with $L(\A) = L(\varphi)$, $\A$ has at least $2^{2^{n}}$ states.
\end{lemma}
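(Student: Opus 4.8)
The plan is to produce a family $(L_n)_{n\ge 1}$ of languages over a two-letter alphabet, each defined by an $\FOt[\prel]$ sentence $\varphi_n$ of size $\mathcal{O}(n^2)$, such that $L_n$ admits a fooling set of size at least $2^{2^{n}}$; since a \CFM with $|P|=1$ is just a nondeterministic finite automaton, the standard fooling-set argument then yields the claimed state lower bound. Identify $\Sigma=\{a,b\}$ with $\{0,1\}$. Following the ``addressing'' technique used in the two-variable succinctness lower bounds of \cite{GroheS03,Weis2011}, fix an injective encoding, over $\{0,1\}$, of values $v\in\{0,1\}^n$ as \emph{blocks}, in which each of the $n$ bits of $v$ is stored together with its address (addresses being themselves encoded in binary and, where necessary, re-addressed recursively), so that the encoding has length polynomial in $n$ and two properties hold: well-formedness of a block, and equality of the decoded values of two blocks, can each be expressed using only two first-order variables and $\prel$. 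Let $L_n$ be the set of words that are concatenations of well-formed blocks whose decoded values are pairwise distinct.

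For the lower bound, note first that $L_n$ is finite (hence regular), since a word in $L_n$ contains at most $2^n$ blocks. For every $T\subseteq\{0,1\}^n$ with $|T|=2^{n-1}$ let $w_T$ be the concatenation, in increasing order of decoded value, of the block encodings of the elements of $T$, and consider the pairs $(w_T,w_{\overline T})$. Then $w_T w_{\overline T}$ lists all $2^n$ values exactly once, so it lies in $L_n$; and if $T\neq T'$ are of the same size, there is $t\in T\setminus T'$, whose encoding occurs both in $w_T$ and in $w_{\overline{T'}}$, so $w_T w_{\overline{T'}}\notin L_n$. Hence the $(w_T,w_{\overline T})$ form a fooling set of size $\binom{2^n}{2^{n-1}}\ge 2^{2^n-n-1}$, so every \CFM recognising $L_n$ has at least $2^{2^n-n-1}$ states. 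Running the whole construction with parameter $n+2$ in place of $n$ yields a lower bound of at least $2^{2^{n}}$ and a formula of size $\mathcal{O}((n+2)^2)=\mathcal{O}(n^2)$.

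The sentence $\varphi_n$ is the conjunction of (i) local constraints ensuring that the input decomposes into a sequence of well-formed blocks --- addresses increase by one from one cell to the next, a block starts at an address-$0$ cell and ends at a maximal-address cell, and so on --- each of which concerns only bounded-radius neighbourhoods and is a $\forall$-guarded $\prel$-chain, hence $\FOt[\prel]$-definable, with $\mathcal{O}(n)$ such constraints each of size $\mathcal{O}(n)$; and (ii) the negation of ``there exist two distinct block-start positions whose blocks have the same decoded value''. Part (ii) is where I expect the main obstacle to lie, and it is exactly where the cited succinctness technique is needed: a direct bit-by-bit comparison of two arbitrarily far apart $n$-bit blocks is impossible with two variables and only the successor relation, so instead one quantifies a pair of cells --- one in each block --- and verifies that they carry the same address by a recursive, binary-search-style comparison of the (shorter) address fields, which never keeps more than two positions alive simultaneously; ``equal decoded value'' thereby becomes a bounded-alternation $\FOt[\prel]$ formula and its existential closure over block starts is part (ii). Checking that $L(\varphi_n)=L_n$, that $|\varphi_n|=\mathcal{O}(n^2)$, and combining these with the fooling-set bound is then routine.
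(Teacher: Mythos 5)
Your overall architecture matches the paper's: encode doubly-exponentially many objects as concatenations of $\mathcal{O}(n)$-size blocks over a two-letter alphabet, define the language by a block-matching condition expressible in $\FOt[\prel]$, and conclude with a fooling-set/pigeonhole argument on the state reached between the two halves. The counting at the end (passing from $\binom{2^n}{2^{n-1}}$ to $2^{2^n}$ by re-parametrising) is fine, and your language (pairwise-distinct blocks) would serve just as well as the paper's (every first-half block has a matching second-half block and vice versa).

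The gap is that the one genuinely non-routine ingredient --- the $\FOt[\prel]$ formula of size $\mathcal{O}(n^2)$ comparing the decoded values of two far-apart blocks --- is never constructed: you defer it to an unspecified ``recursive, binary-search-style'' addressing scheme, and you motivate that detour with a claim that is false, namely that a direct bit-by-bit comparison of two arbitrarily far apart $n$-bit blocks is impossible with two variables and successor. What is impossible with two variables is expressing the \emph{binary} relation ``$y$ is at offset $i$ from $x$'' (walking the chain would need a third variable while keeping both endpoints free). But bit-by-bit comparison does not need that relation: it suffices to have, for each $i \le n$, a \emph{unary} formula $\alpha_i(x)$ saying ``the cell at offset $i$ from $x$ carries $a_1$,'' written as a nested chain $\exists y.(x \prel y \land \exists x.(y \prel x \land \cdots))$ of length $\mathcal{O}(i)$ that freely recycles the second variable, and then to assert $\bigwedge_{i}\bigl(\alpha_i(x) \iff \alpha_i(y)\bigr)$ between the two block starts. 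This is exactly what the paper does; it yields block equality in total size $\sum_i \mathcal{O}(i) = \mathcal{O}(n^2)$ with no addressing machinery, and the same chains handle well-formedness. As written, your proof omits the construction the lemma is really about, and the replacement you gesture at (addresses stored inside blocks, compared recursively) would have to be carried out in full --- encoding, well-formedness, address comparison, and the $\mathcal{O}(n^2)$ size bound --- before the argument is complete.
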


\begin{proof}
  We first define an encoding of subsets of $2^{\{1,\ldots,n\}}$ as binary words.
  Let $\Sigma = \{a_0,a_1\}$.
  For all $I \subseteq \{1,\ldots,n\}$, we define words
  $w_I = a_1 a_1 a_0 a_1 a_0 x_1 a_0 x_2 \ldots a_0 x_n$
  and $\overline w_I = a_1 a_1 a_0 a_0 a_0 x_1 a_0 x_2 \ldots a_0 x_n$,
  where $x_i = a_1$ if $i \in I$ and $x_i = a_0$ otherwise.
  For all $A \subseteq 2^{\{1,\ldots,n\}}$, we define a word
  $w_A = w_{I_1} \ldots w_{I_k}$ such that $A = \{I_1,\ldots,I_k\}$.
  In addition, we let $\overline w_A = \overline w_{I_1} \ldots \overline w_{I_k}$.
  
  Our sentence $\varphi$ will be such that, for all $A,B \subseteq 2^{\{1,\ldots,n\}}$,
  $w_A \overline w_B \models \varphi$ iff $A = B$.
  We first define a formula $\beta(x)$ such that in a word $w_A$ or $\overline w_A$,
  $\beta(x)$ holds precisely at the initial positions of $w_I$ or $\overline w_I$
  factors:
  \[
    \beta(x) = a_1(x) \land \exists y.\Bigl[ x \prel y \land a_1(y) \land
    \exists x.\Bigl( y \prel x \land a_0(x)\Bigr)\Bigr]\, .
  \]
  For all $i \in \{0,\ldots,n\}$, we define an $\FO[\prel]$ formula $\alpha_i(x)$
  of size $\mathcal{O}(n)$ which holds exactly at positions $e$ such that there
  exists a path $e \prel e' \prel e_0 \prel e_0' \prel \cdots \prel e_i
  \prel e'_i$ with $\lambda(e'_i) = a_1$:
  \begin{align*}
    \alpha_0(x)
    & = \exists y.\Bigl[ x \prel y \land \exists x.\Bigl( y \prel x
      \land \exists y.\bigl( x \prel y \land a_1(y)\bigr)\Bigr)\Bigr] \\[1ex]
    \alpha_1(x)
    & = \exists y.\Bigl[ x \prel y \land \exists x.\Bigl( y \prel x \land
      \exists y.\Bigl\langle x \prel y \land \exists x.\bigl[ y \prel x \land
      \exists y.\bigl( x \prel y \land a_1(y)\bigr)\bigr]\Bigr\rangle\Bigr)\Bigr]
  \end{align*}
  and similarly for all $i \in \{0,\ldots,n\}$.

  We then let
  \begin{align*}
    \varphi = {} & \forall x. \Bigl[\bigl(\beta(x) \land \alpha_0(x)\bigr)
                   \implies \exists y. \Bigl(\beta(y) \land \lnot \alpha_0(y)
                   \land \bigwedge_{1 \le i \le n}
                   \alpha_i(x) \iff \alpha_i(y)\Bigr)\Bigr] \\
    {} \land {} &  \forall x. \Bigl[\bigl(\beta(x) \land \lnot \alpha_0(x)\bigr)
                   \implies \exists y. \Bigl(\beta(y) \land \alpha_0(y)
                  \land \bigwedge_{1 \le i \le n}
                  \alpha_i(x) \iff \alpha_i(y)\Bigr)\Bigr] \, .
  \end{align*}
  
  Assume that there exists an automaton $\A$ with less than $2^{2^n}$
  states such that $L(\varphi) = L(\A)$.
  For all $A \subseteq 2^{\{1,\ldots,n\}}$, we have $w_A\overline w_A \models \varphi$,
  hence $\A$ has some accepting run $\rho_A$ on $w_A\overline w_A$.
  Then, there exist $A,B \subseteq 2^{\{1,\ldots,n\}}$ such that $A \neq B$
  and the state of $\A$ in $\rho_A$ after reading $w_A$ is the same as the
  state of $\A$ in $\rho_B$ after reading $w_B$.
  So we can construct an accepting run of $\A$ over $w_A \overline w_B$, even
  though $w_A \overline w_B \not \models \varphi$.
\end{proof}

Contrary to the case of words and finite automata, the lower bound also
holds for sentences in $\FO[\le]$ (i.e. without the successor relations),
if $\Sigma$ is not fixed, even for $|P| = 2$.
Moreover, this second lower bound matches more precisely the upper bound.

\begin{lemma}
  \label{lem:lb1}
  Assume $|P| = 2$ and $|\Sigma| = n$ for some $n \ge 2$.  There exists a
  sentence $\varphi \in \FOt[\le]$ of size $\mathcal{O}(n)$ such that if
  $\A=((\A_p)_{p \in \Procs},\Msg,\Acc)$ is a \CFM with $L(\A)=L(\varphi)$ then
  $\A_p$ has at least $2^{2^{n-1}}$ states for some $p\in P$.
\end{lemma}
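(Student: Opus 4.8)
I would argue by a fooling / Myhill--Nerode-style argument. Since $|P|=2$, write $P=\{p,q\}$. The plan is to encode the $2^{2^{n-1}}$ families $A\subseteq 2^{\{1,\dots,n-1\}}$ as MSCs over $P$ and an alphabet of $n$ letters (say $n-1$ ``element'' letters plus one separator): one process writes a family $A$, the other writes a family $B$, and we add a pattern of message edges chosen so that the induced order ${\le}$ is \emph{genuinely partial} in a controlled, grid-like fashion. I would then produce a sentence $\varphi\in\FOt[\le]$ of size $\mathcal{O}(n)$ that holds on such an encoding exactly when $A=B$. Given any \CFM $\A$ with $L(\A)=L(\varphi)$ in which every component $\A_r$ has fewer than $2^{2^{n-1}}$ states, one derives a contradiction by recombining accepting runs.

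\textbf{The encoding and the formula.}
Each atom $I\subseteq\{1,\dots,n-1\}$ is written on a process as a short block (length $\mathcal{O}(n)$) recording, via the element letters, which of $1,\dots,n-1$ belong to $I$; a family is the concatenation of the blocks of its atoms, and a block on $p$ is message-linked to the block on $q$ meant to carry the same content. The key point is that the message edges are set up so that ${\le}$ alone---with \emph{no} successor relation---already lets one ``address'' positions inside a block and jump from a $p$-block to the matching $q$-block; this is exactly where having two processes (hence a true partial order built from messages) takes over the role played by $\prel$ in the single-process lower bound, and it is what keeps $\varphi$ of size $\mathcal{O}(n)$ rather than $\mathcal{O}(n^2)$. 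The sentence then reads: every block on $p$ has a matching block on $q$, and conversely; ``matching'' is a content-equality test assembled from $\mathcal{O}(n)$ quantifier-free comparisons over ${\le}$, following the known \FOt-succinctness constructions over orders (cf.\ \cite{GroheS03,Weis2011}) transported to the MSC order. By construction $M_{A,B}\models\varphi$ iff $A=B$, so $M_{A,A}\in L(\varphi)$ for every $A$ while $M_{A,B}\notin L(\varphi)$ whenever $A\neq B$.

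\textbf{Lower bound.}
For each family $A$ fix an accepting run $\rho_A$ of $\A$ on $M_{A,A}$. The MSC is designed so that the two processes interact only through a bounded ``interface'' (essentially the state in which one process continues past a fixed synchronization point), which ranges over the state set of one component $\A_r$, hence over fewer than $2^{2^{n-1}}$ values. Since there are $2^{2^{n-1}}$ families, two distinct families $A\neq B$ agree on this interface in $\rho_A$ and $\rho_B$. One then stitches the part of $\rho_A$ produced while reading $A$ to the part of $\rho_B$ produced while reading $B$: agreement of the interface makes the result a legal run of $\A$, and it is accepting, so $M_{A,B}\in L(\A)=L(\varphi)$, contradicting $A\neq B$. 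Hence some $\A_r$ has at least $2^{2^{n-1}}$ states.

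\textbf{Main obstacle.}
The crux is the simultaneous design of the encoding, the message pattern, and $\varphi$. On one hand, the block-equality test must be expressible with only two variables and in size $\mathcal{O}(n)$ using just ${\le}$, which forces the message edges to supply exactly the navigational hooks that successor would otherwise provide. On the other hand, the recombination step must mix only a \emph{single} process-state across the two families, so that the pigeonhole count applies to one component and yields the sharp constant $2^{2^{n-1}}$, rather than only to a product $|S_p|\cdot|S_q|$ of two state spaces (which would cost a factor in the exponent); this is what dictates the near-sequential shape of the MSC around the synchronization point.
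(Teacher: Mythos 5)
Your high-level strategy (encode doubly-exponentially many families, then fool a small \CFM by recombining runs) is the right one, but the proposal leaves exactly the load-bearing construction unspecified, and the architecture you commit to would run into trouble. First, your design has an internal tension: you want the two families to sit on the two \emph{processes} and be compared by an $\mathcal{O}(n)$-size $\FOt[\le]$ formula, which forces a dense pattern of message edges between $p$ and $q$ (order information between processes comes only from messages); yet your fooling step needs the two processes to ``interact only through a bounded interface'' so that a single local state can be swapped. You cannot have both. Moreover, recombining runs of a \CFM requires matching a \emph{global} configuration (both local states \emph{and} empty channels at the cut); you cannot in general fix one component's state while freely exchanging the other's, so the claimed single-component pigeonhole over $2^{2^{n-1}}$ families is not justified as stated. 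Second, the actual formula is never constructed: ``a content-equality test assembled from $\mathcal{O}(n)$ quantifier-free comparisons over $\le$'' that ``jumps from a $p$-block to the matching $q$-block'' is precisely the part that needs an idea, since with two variables and only $\le$ you cannot address positions inside a block the way the successor-based word construction does.

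The paper resolves both points differently. The per-event ``value'' is its \emph{parallelism profile}: with $\alpha_i(x)=\exists y.(a_i(y)\wedge x\conc y)$ (where $x\conc y$ is $\FOt[\le]$-definable), an event on $p$ is bracketed by a message from and a message to $q$, so the set of letters parallel to it is an arbitrary subset of $\{1,\dots,n\}$ chosen by the labels of the enclosed $q$-block; each $\alpha_i$ has constant size, so the matching sentence $\bigwedge_i \alpha_i(x)\iff\alpha_i(y)$ costs only $\mathcal{O}(n)$. Crucially, the two families being compared are \emph{not} placed on the two processes but laid out \emph{sequentially in time}, as $M_A$ followed by $\overline{M_B}$, each half spanning both processes and distinguished by the label ($a_1$ vs.\ $\lnot a_1$) of its $p$-events. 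The cut between the halves is then a genuine prefix cut with empty channels, so run recombination is over \emph{global} states; with $2^{2^{n}}$ families $A\subseteq 2^{\{1,\dots,n\}}$ and fewer than $2^{2^{n-1}}\cdot 2^{2^{n-1}}=2^{2^{n}}$ global states, the pigeonhole goes through and yields the per-component bound $2^{2^{n-1}}$ without ever needing a single-state interface. You would need to supply an analogue of these two ideas (profiles via $\conc$, and a temporal rather than spatial split) for your proof to close.
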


\begin{proof}
  Let $a_1, \ldots, a_{n}$ be the elements of $\Sigma$, and $p,q$ the two
  processes.
  For all $1 \le i \le n$, we let
  \[
    \alpha_i(x) = \exists y. \bigl(a_i(y) \land x \conc y\bigr) \, .
  \]
  Define
  \begin{align*}
    \varphi = {} & \forall x. \Bigl[\bigl(a_1(x) \land p(x)\bigr)
                   \implies \exists y. \Bigl(\lnot a_1(y) \land p(y) \land
                   \bigwedge_{1 \le i \le n} \alpha_i(x) \iff \alpha_i(y)\Bigr)\Bigr] \\
    {} \land {} & \forall x. \Bigl[\bigl(\lnot a_1(x) \land p(x)\bigr)
                  \implies \exists y. \Bigl(a_1(y) \land p(y) \land
                  \bigwedge_{1 \le i \le n} \alpha_i(x) \iff \alpha_i(y)\Bigl)\Bigr] \, .
  \end{align*}
  Assume that there exists a \CFM $\A=((\A_p,\A_q),\Msg,\Acc)$ such that
  $L(\A)=L(\varphi)$ and both $\A_p$ and $\A_q$ have less than
  $2^{2^{|\Sigma|-1}}$ states.

  \begin{figure}
  \centering
    \begin{tikzpicture}[xscale=0.6,yscale=1.5]
      \fill[purple!20,rounded corners=0.2cm] (0.7,1.4) rectangle (11.3,-0.4);
      \fill[teal!20,rounded corners=0.2cm] (11.7,1.4) rectangle (22.3,-0.4);
      \node[purple] at (5.5,1.6) {$M$};
      \node[teal] at (16.5,1.6) {$\overline M$};
      \foreach \i in {1,...,6,8,9,10,11} {
        \node[dot] (f\i) at (\i,1) {};
        \node[dot] (bf\i) at (11+\i,1) {}; }
      \foreach \i in {1,5,6,8,9,11} {
        \node[dot,label=below:{$a_1$}] (e\i) at (\i,0) {};
        \node[dot,label=below:{$a_2$}] (be\i) at (11+\i,0) {};
      }
      \draw (e1) -- (be11);
      \draw (f1) -- (bf11);
      \foreach \i in {1,6,9} {
        \draw[->] (f\i) -- (e\i);
        \draw[->] (bf\i) -- (be\i);
      }
      \foreach \i in {5,8,11} {
        \draw[->] (e\i) -- (f\i);
        \draw[->] (be\i) -- (bf\i);
      }
      \node[above=0cm of f2] {$a_1$};
      \node[above=0cm of f3] {$a_5$};
      \node[above=0cm of f4] {$a_2$};
      \node[above=0cm of f10] {$a_3$};
      \node[above=0cm of bf2] {$a_1$};
      \node[above=0cm of bf3] {$a_5$};
      \node[above=0cm of bf4] {$a_2$};
      \node[above=0cm of bf10] {$a_3$};
      
      \node at (0,0) {$p$};
      \node at (0,1) {$q$};
    \end{tikzpicture}
    \caption{An MSC $M \overline M$ with $M \in L$ \label{fig:lb1}}
  \end{figure}
  
  Consider the set $L$ of MSCs $M = (E,\prel,\mrel,\labloc)$ of the form
  described in Figure~\ref{fig:lb1}:
  $E_p = \{e_1,\ldots,e_k\} \cup \{f_1,\ldots,f_k\}$
  with $e_1 \prel f_1 \prel e_2 \prel \cdots \prel f_k$,
  $E_q = \bigcup_{1 = i}^k [e'_i,f'_i]$ with $f'_i \prel e'_{i+1}$,
  ${\mrel} = \{(e'_i,e_i) \mid 1 \le i \le k\} \cup
  \{(f_i,f'_i) \mid 1 \le i \le k\}$,
  and for all $e \in E_p$, $\labloc(e) = (p,a_1)$.
  For all $e \in E_p$, we define $\mu(e) = \{i \mid M,e \models \alpha_i(x)\}$,
  and $\mu(M)$ as $\{\mu(e) \mid e \in E_p\}$.
  Note that, for all $A \subseteq 2^{\{1,\ldots,n\}}$, there exists $M_A \in L$
  such that $\mu(M_A) = A$.
  
  For all $M \in L$, we let $\overline M = (E,\prel,\mrel,\labloc')$ where
  $\labloc'(e) = (p,a_2)$ for $e \in E_p$ and
  $\labloc'(e) = \labloc(e)$ for $e \in E_q$,
  and $\overline L = \{\overline M \mid M \in L\}$.
  For $M, M' \in L \cup \bar L$, we denote by $MM'$ the concatenation
  of $M$ and $M'$, defined as expected.

  For all $M \in L$, we have $M \overline M \in L(\varphi) = L(\A)$.
  In particular, for all $A \subseteq 2^{\{1,\ldots,n\}}$,
  $M_A \overline {M_A} \in L(\A)$.
  Since $\A_p$ and $\A_q$ both have less than $2^{2^{n-1}}$ states, there
  exist $A,B \subseteq 2^{\{1,\ldots,n\}}$ such that $A \neq B$ and
  there exist accepting runs $\rho_A$ and $\rho_B$ of $\A$ over 
  $M_A \overline{M_A}$ and $M_B \overline{M_B}$ such that the global state
  of $\A$ in $\rho_A$ after reading $M_A$ is the same as the global state of
  $\A$ in $\rho_B$ after reading $M_B$.
  We can then construct an accepting run of $\A$ over $M_A \overline{M_B}$,
  but $M_A \overline{M_B} \notin L(\varphi)$.
\end{proof}


\end{document}